\def\eqref#1{equation~\ref{#1}}
\def\1{\bm{1}}
\def\rmH{{\mathbf{H}}}
\def\rmX{{\mathbf{X}}}
\DeclareMathAlphabet{\mathsfit}{\encodingdefault}{\sfdefault}{m}{sl}
\SetMathAlphabet{\mathsfit}{bold}{\encodingdefault}{\sfdefault}{bx}{n}
\newtheorem{lemma}{Lemma}
\newtheorem{proposition}{Proposition}
\newtheorem{theorem}{Theorem}
\newtheorem{assumption}{Assumption}
\newcommand\kurt[1]{\text{Kurt}(#1)}
\newcommand\maxmed[1]{\text{MaxMed}(#1)}
\newcommand{\mTwo}[1]{m_2(#1)}
\newcommand{\SigF}{\Sigma_{\mathrm{F}}}
\newcommand{\sigmaf}{\Sigma_{\text{F}}}
\newcommand{\sigmai}{\Sigma_{\text{I}}}
\begin{document}

\title{Leveraging Second-Order Curvature for Efficient Learned Image Compression: Theory and Empirical Evidence}

\author{Yichi Zhang, Fengqing Zhu%
\thanks{
    Yichi Zhang and Fengqing Zhu are with the Elmore Family School of Electrical and Computer Engineering, Purdue University, West Lafayette, IN 47907 USA. (e-mail: zhan5096@purdue.edu; zhu0@purdue.edu) 
    }
}

\markboth{Journal of \LaTeX\ Class Files,~Vol.~14, No.~8, August~2021}%
{Shell \MakeLowercase{\textit{et al.}}: A Sample Article Using IEEEtran.cls for IEEE Journals}

\IEEEpubid{0000--0000/00\$00.00~\copyright~2021 IEEE}

\maketitle

\begin{abstract}
Training learned image compression (LIC) models entails navigating a challenging optimization landscape defined by the fundamental trade-off between rate and distortion. Standard first-order optimizers, such as SGD and Adam, struggle with \emph{gradient conflicts} arising from competing objectives, leading to slow convergence and suboptimal rate-distortion performance. In this work, we demonstrate that a simple utilization of a second-order quasi-Newton optimizer, \textbf{SOAP}, dramatically improves both training efficiency and final performance across diverse LICs. Our theoretical and empirical analyses reveal that Newton preconditioning inherently resolves the intra-step and inter-step update conflicts intrinsic to the R-D objective, facilitating faster, more stable convergence. Beyond acceleration, we uncover a critical deployability benefit: second-order trained models exhibit significantly fewer activation and latent outliers. This substantially enhances robustness to post-training quantization. Together, these results establish second-order optimization, {achievable as a seamless drop-in replacement of the imported optimizer}, as a powerful, practical tool for advancing the efficiency and real-world readiness of LICs.
\end{abstract}

\begin{IEEEkeywords}
Learned Image Compression, Training Efficiency
\end{IEEEkeywords}

\section{Introduction}
\label{sec:intro}

\IEEEPARstart{L}{earned} image compression (LIC) methods have attracted significant attention due to their impressive performance~\cite{he2022elic,liu2023learned,li2024frequencyaware,feng2025linear,jiang2023mlic,lu2025learned}. Despite substantial advances, the \emph{training dynamics} of LICs remain underexplored. The prevailing practice is straightforward: design a model, then train it with a rate-distortion (R-D) objective
\(
\mathcal{L}_\text{R-D} = \mathbb{E}_{x \sim p_{\mathrm{data}}} \left[ -\log_2 P(\hat{z}) + \lambda\, d(x, \hat{x}) \right],
\)
using a first-order optimizer (typically Adam~\cite{kingma2014adam}). While this approach is generally effective, recent studies indicate that advanced LIC models converge slowly (e.g., LALIC requiring $>$ 1000 H100 GPU hours for a full R-D curve, Fig.~\ref{fig:rd_converge_wall})~\cite{li2025on,zhang2025accelerating}, and that the standard framework fails to address \emph{gradient conflicts} between the rate and distortion terms, leading to suboptimal performance~\cite{zhang2025balanced}.

~\cite{li2025on} attribute the slow convergence to challenges in learning energy compaction, proposing an \emph{auxiliary transform} (AuxT) to facilitate feature decorrelation and energy compression, reducing training time by 47\% without sacrificing performance. However, this approach slightly adds parameters and increases computational cost (GMACs), introducing additional development complexity.  
Concurrently, ~\cite{zhang2025accelerating} explore the low-dimensional hypothesis in LIC (CMD-LIC) by decomposing model parameters based on correlations. They progressively reduce trainable parameters based on stable affine coefficients to accelerate training, yielding a 40\% acceleration. However, this approach requires tuning many hyperparameters, and poor choices can severely degrade performance.  
Additionally, ~\cite{zhang2025balanced} explicitly address rate and distortion gradient conflicts by formulating a saddle-point problem and adaptively reweighting each gradient component (Balanced R-D), achieving a $-2\%$ BD-Rate improvement but incurs a substantial increase in per-step training time and high sensitivity to hyperparameter settings for advanced LIC models.
\IEEEpubidadjcol

In summary, existing training strategies often:
(i) increase model development complexity,  
(ii) rely on fragile hyperparameter tuning, or  
(iii) introduce non-trivial modifications to the training pipeline—limiting their practicality (a drop-in replacement is preferred). This raises a natural question:
\begin{quote}
\textbf{(Q)} Can we accelerate LIC training and mitigate gradient conflicts \emph{without} sophisticated problem reformulation, training pipeline revision, additional architectural changes, or added overhead?
\end{quote}

The answer is \emph{yes}. In this work, we demonstrate that adopting a recent efficient second-order quasi-Newton optimizer, {SOAP}~\cite{vyas2024soap}, addresses both challenges simultaneously {via a seamless drop-in replacement of the imported optimizer in} the standard training pipeline.  
Across four top-performing LIC models—ELIC~\cite{he2022elic}, TCM~\cite{liu2023learned}, LALIC~\cite{feng2025linear}, and DCAE~\cite{lu2025learned}—SOAP delivers an average \textbf{70\% reduction in training steps} and \textbf{57.7\% reduction in wall-clock time} to achieve the same performance as Adam. Furthermore, when trained for an equal number of steps, SOAP-trained models achieve an average \textbf{3\% BD-Rate improvement} over Adam baselines. Fig.~\ref{fig:rd_converge_epochs} and~\ref{fig:rd_converge_wall} illustrate this accelerated and superior convergence. We interpret these improvements from a gradient perspective, demonstrating that Newton preconditioning effectively resolves the inherent update conflicts between rate and distortion objectives.
Moreover, we uncover an additional benefit of second-order optimization (non-diagonal) beyond fast convergence: second-order trained models exhibit \emph{fewer outliers} in activation and latent spaces, making them more amenable to post-training quantization (PTQ) and thus easier to deploy on resource-constrained hardware. We then provide a theoretical explanation based on signal propagation, demonstrating that Newton-style preconditioning redistributes update energy across feature dimensions, thereby preventing the variance concentration that leads to outliers.

Our contributions are summarized as follows:
\begin{itemize}
    \item We empirically demonstrate that the second-order optimizer substantially reduces both training steps and wall-clock time required to achieve the same performance, while simultaneously improving the final rate-distortion performance compared to first-order optimizers when trained for the same number of steps. (Sec.~\ref{sec:empirical})
    \item We provide theoretical and empirical evidence that the second-order optimizer aligns gradients update from competing R-D loss terms and from consecutive update steps, enabling more effective optimization trajectories. (Sec.~\ref{sec:conflict})
    \item We demonstrate that second-order optimization suppresses activation and latent outliers by redistributing update energy via signal propagation theory. This results in more regular feature statistics, which enhances robustness to post-training quantization (PTQ). (Sec.~\ref{sec:soap-outliers})
\end{itemize}

\begin{figure*}[t]
\centering
\newcommand{\mywidth}{0.21}

\subfloat[ELIC]{%
  \includegraphics[width=\mywidth\linewidth]{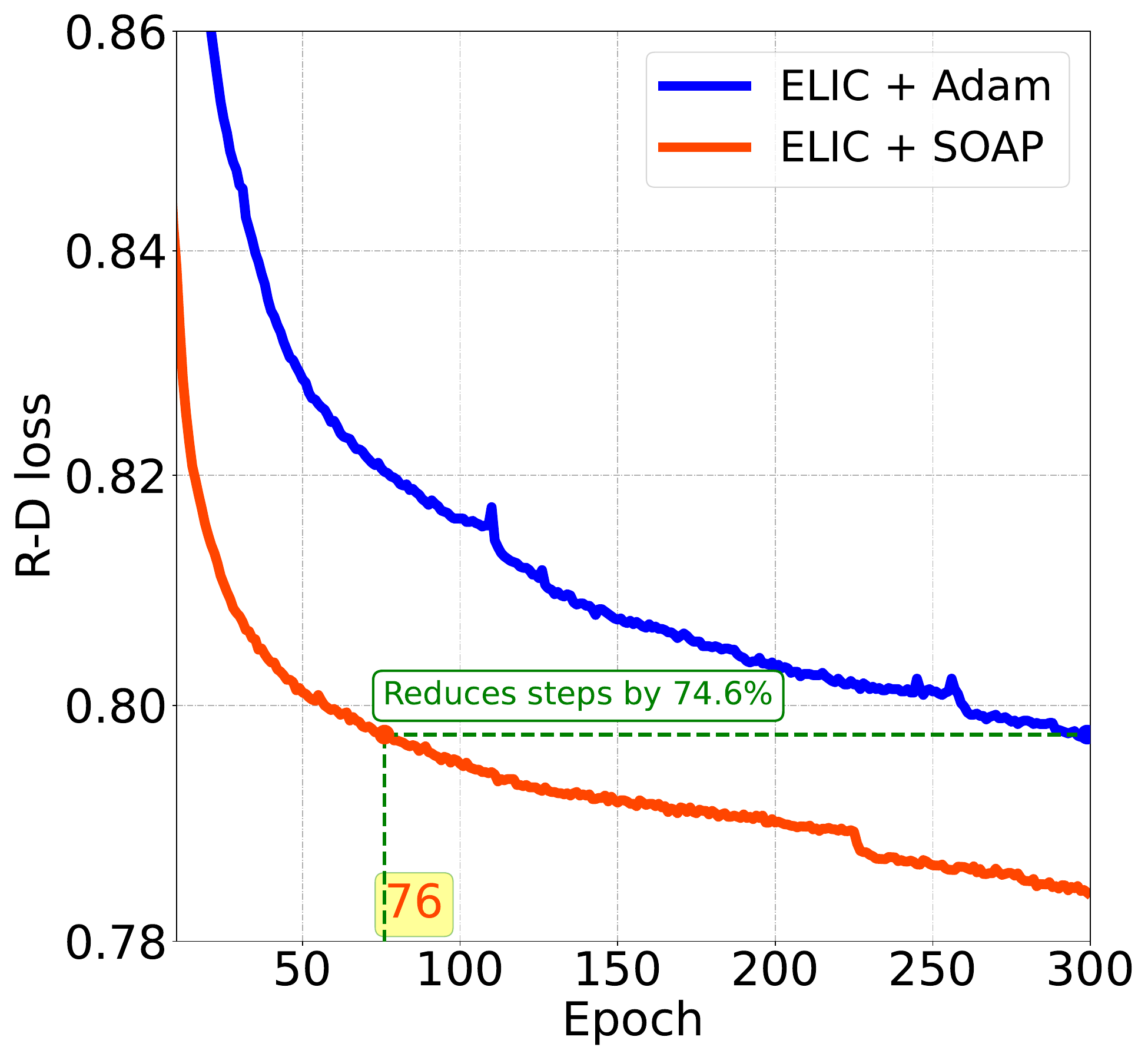}%
  \label{subfig:elicrd}
}\hfill
\subfloat[TCM]{%
  \includegraphics[width=\mywidth\linewidth]{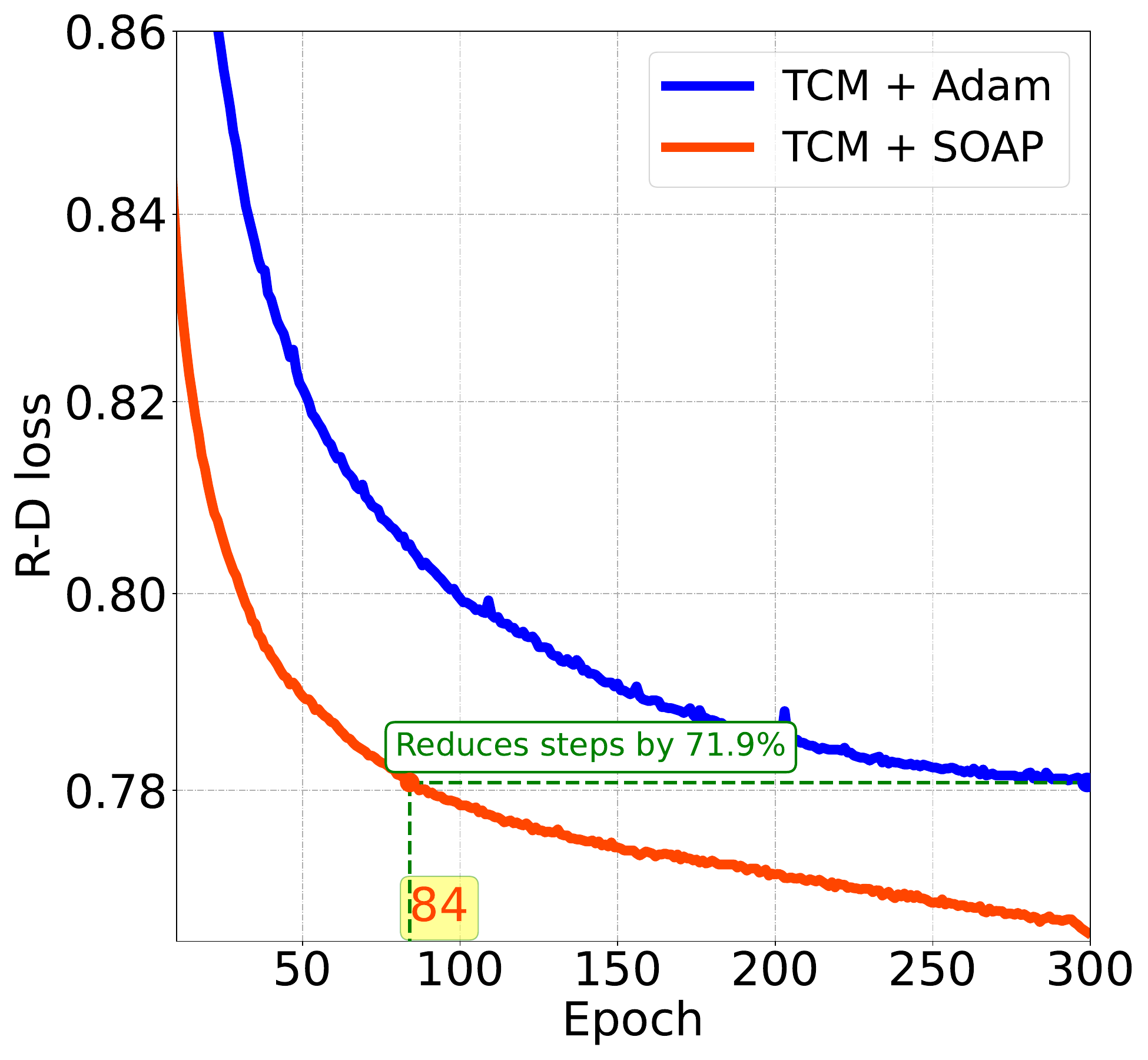}%
  \label{subfig:tcmrd}
}\hfill
\subfloat[LALIC]{%
  \includegraphics[width=\mywidth\linewidth]{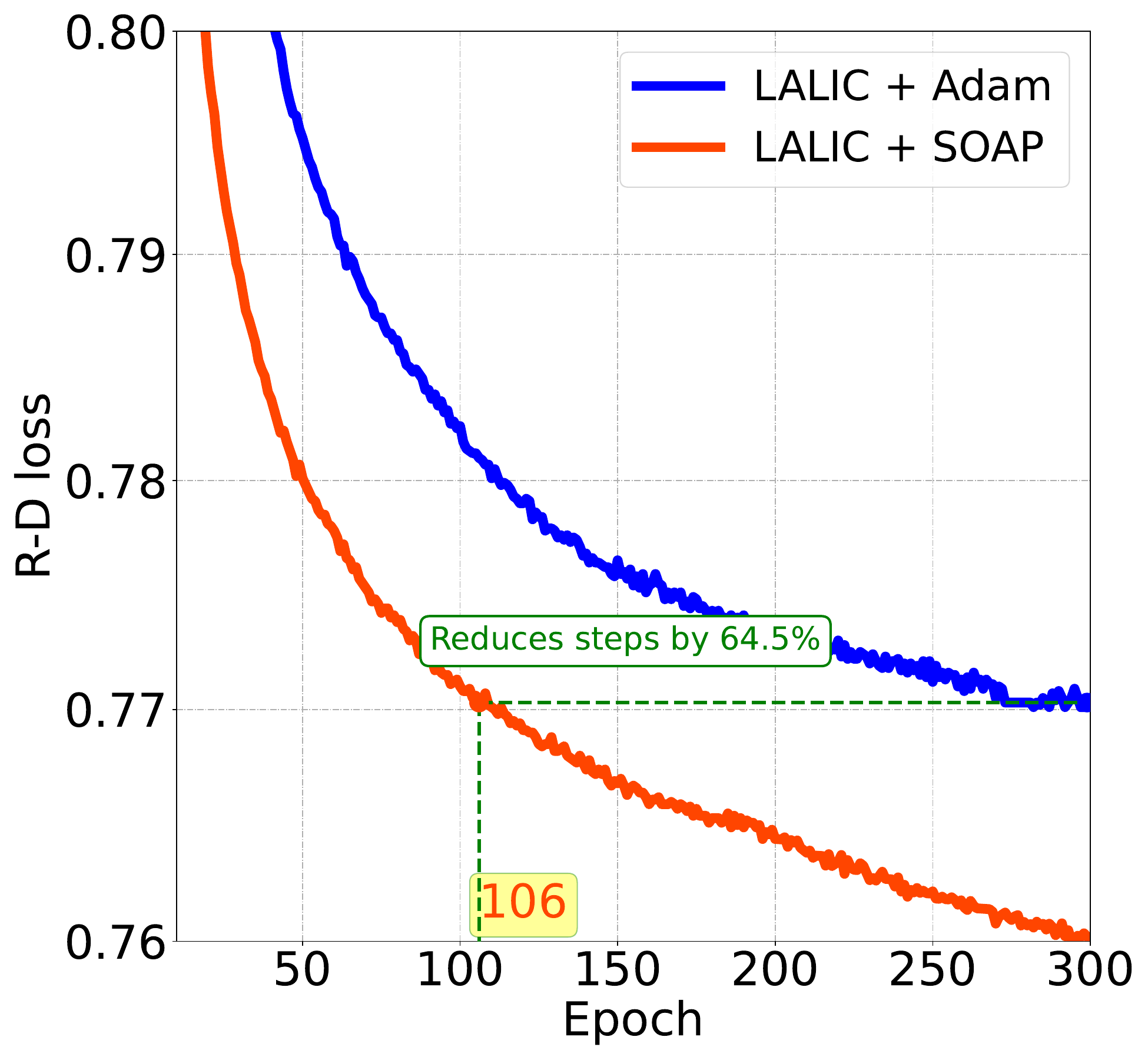}%
  \label{subfig:lalicrd}
}\hfill
\subfloat[DCAE]{%
  \includegraphics[width=\mywidth\linewidth]{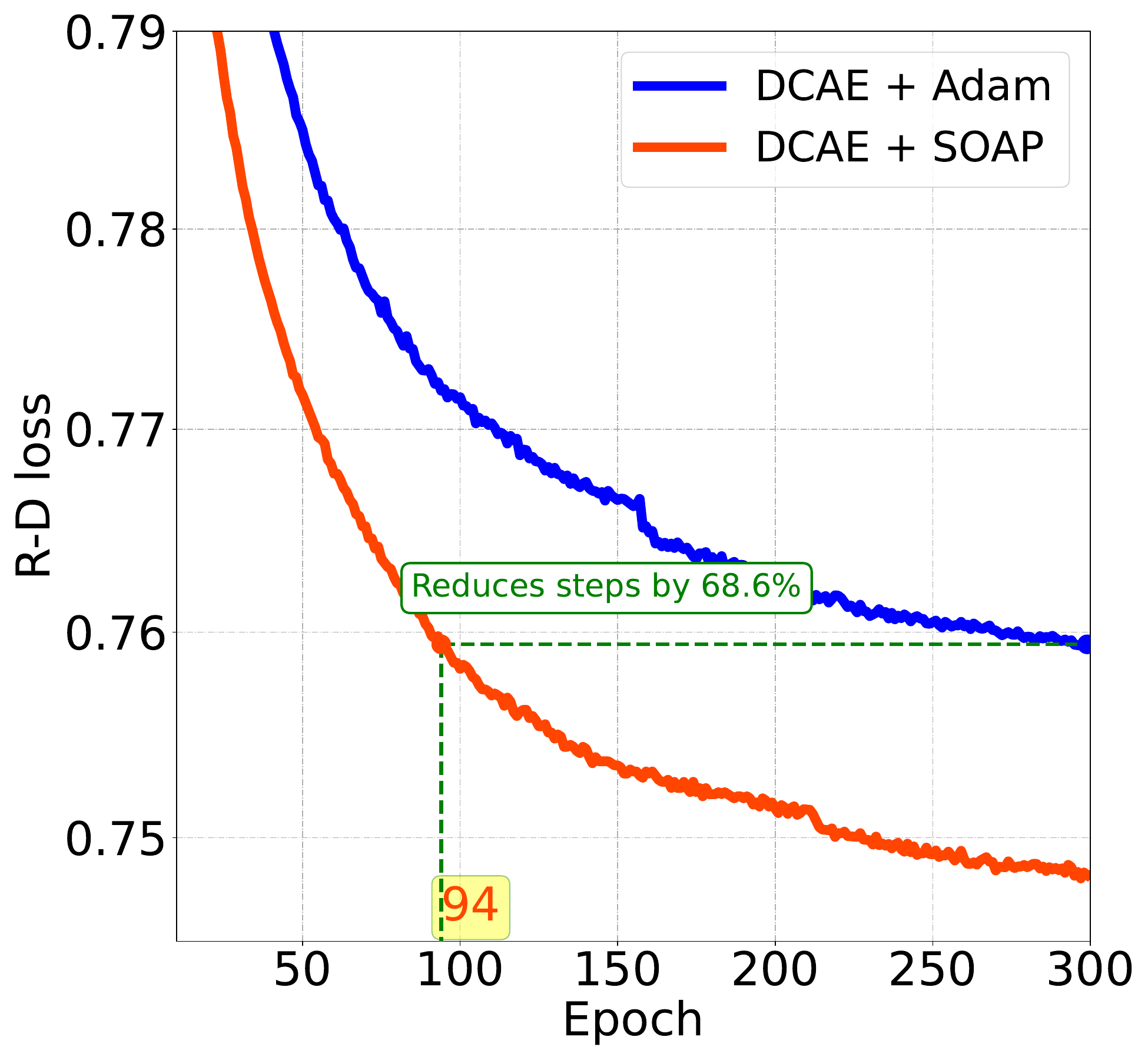}%
  \label{subfig:dcaerd}
}

\caption{\textbf{Comparison of Testing Loss: Epochs vs. R-D Loss for Various LICs.}
First 10 epochs are omitted for better visualization. The SOAP optimizer demonstrates significantly faster convergence compared to Adam across multiple LICs. Evaluation is performed on the Kodak dataset with $\lambda = 0.013$; the R-D loss is computed as $\lambda \cdot 255^2 \cdot \text{MSE} + \text{Bpp}$. Longer training period results are shown in Sec.~\ref{subsec:longtrain}.}
\label{fig:rd_converge_epochs}
\end{figure*}

\begin{figure*}[t]
\centering
\newcommand{\mywidth}{0.21}

\subfloat[ELIC]{%
  \includegraphics[width=\mywidth\linewidth]{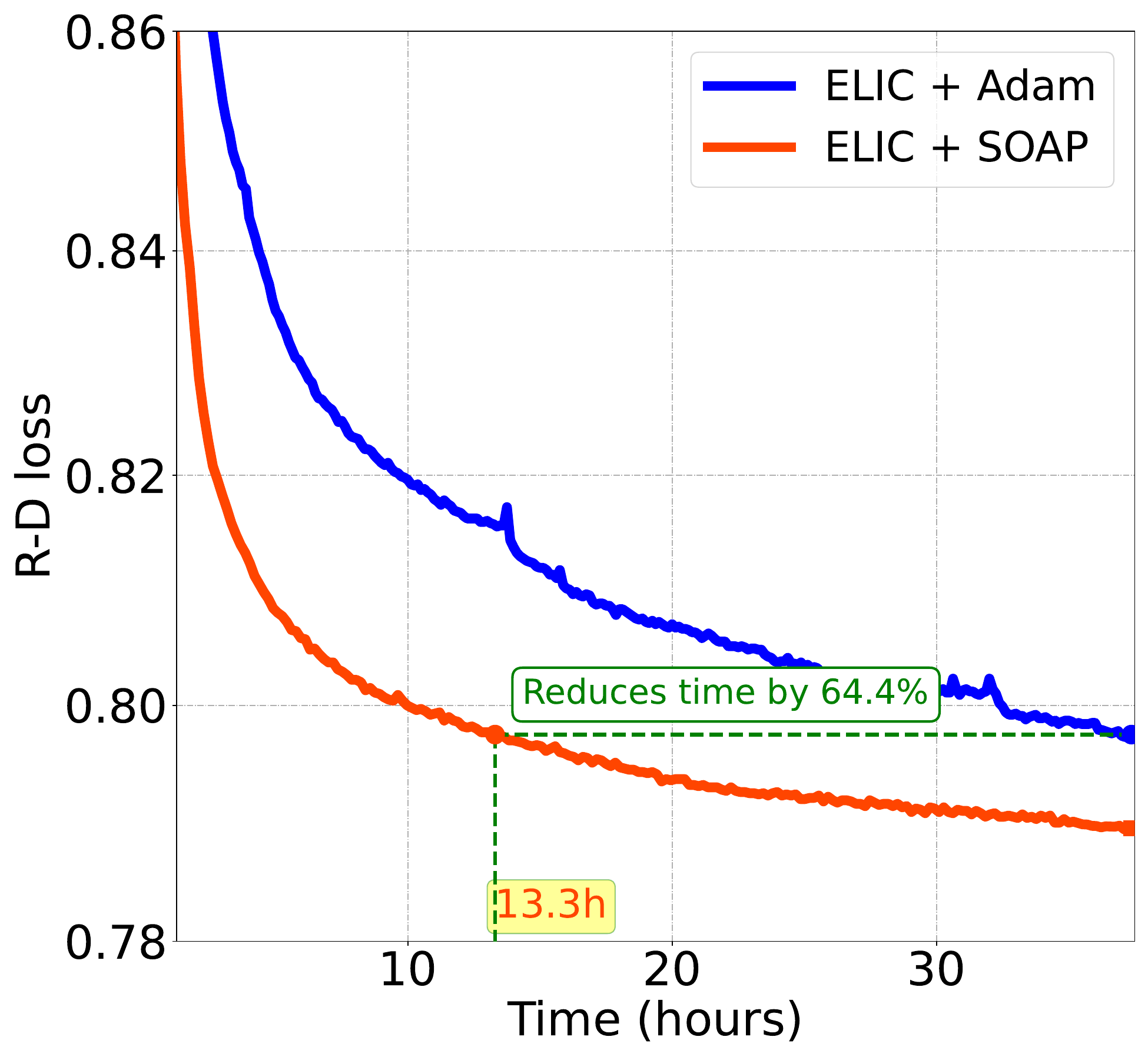}%
  \label{subfig:elicrd}
}\hfill
\subfloat[TCM]{%
  \includegraphics[width=\mywidth\linewidth]{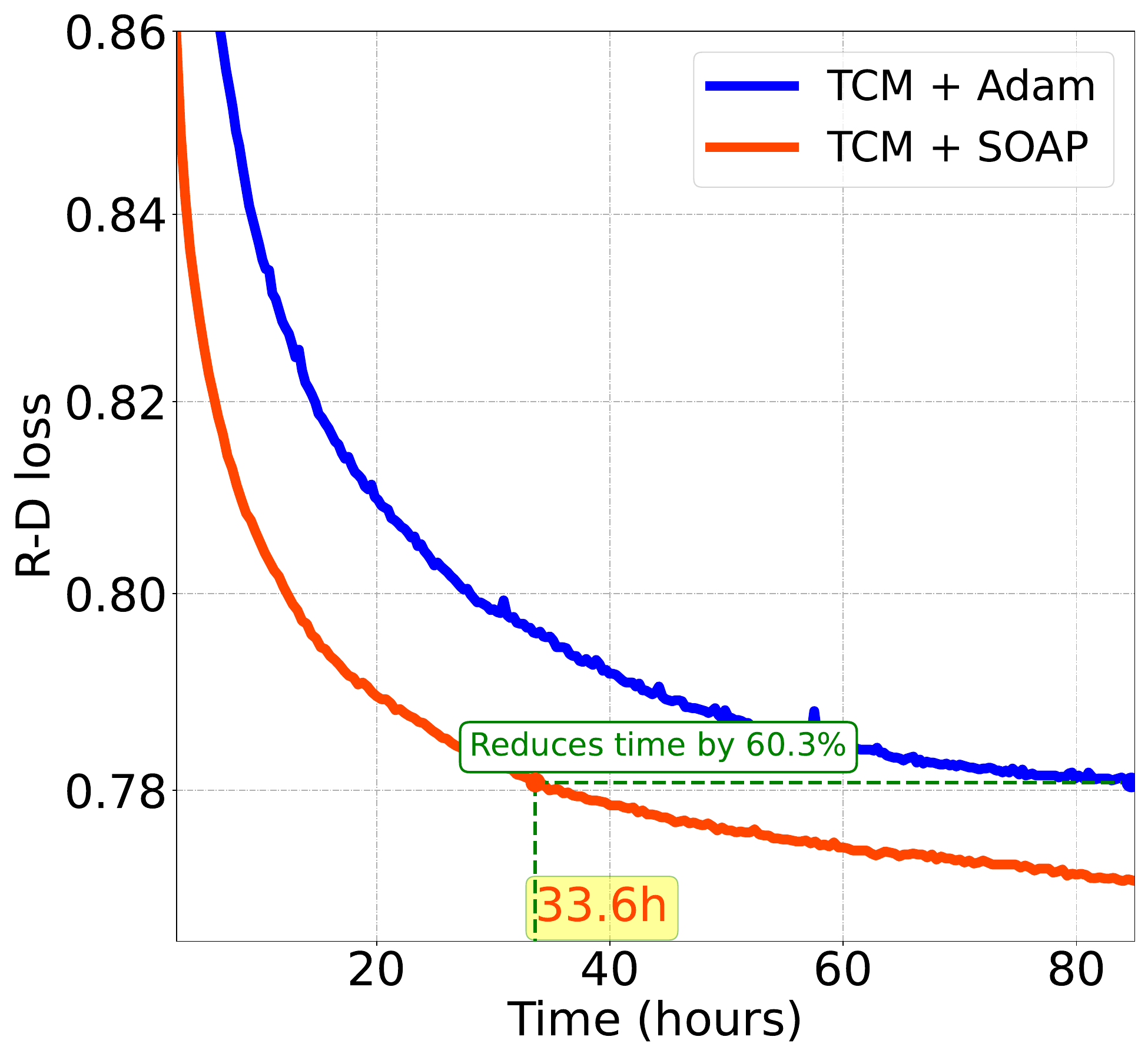}%
  \label{subfig:tcmrd}
}\hfill
\subfloat[LALIC]{%
  \includegraphics[width=\mywidth\linewidth]{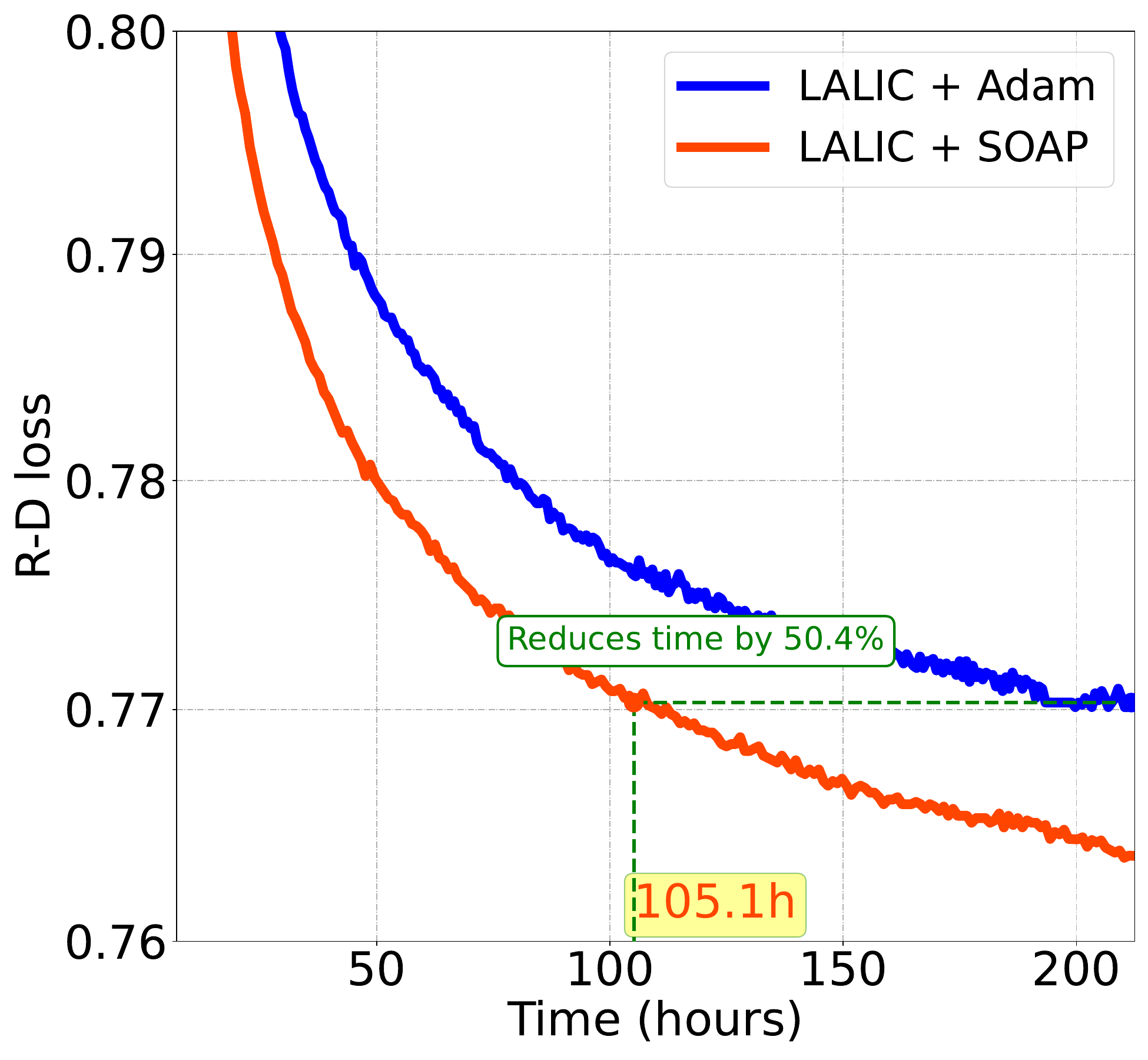}%
  \label{subfig:lalicrd}
}\hfill
\subfloat[DCAE]{%
  \includegraphics[width=\mywidth\linewidth]{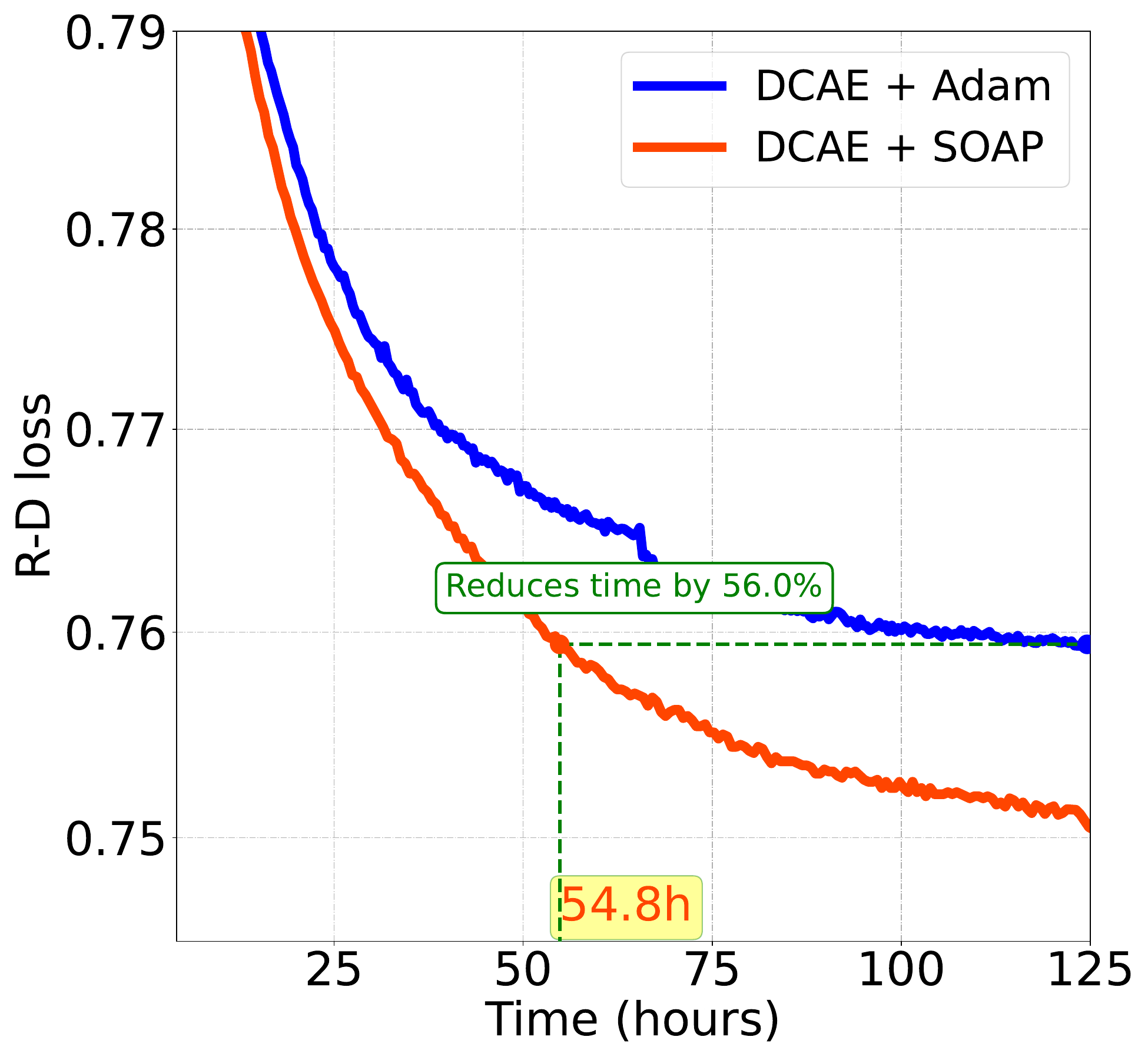}%
  \label{subfig:dcaerd}
}

\caption{\textbf{Comparison of Testing Loss: Wall-Time vs. R-D Loss for Various LICs.}
Training with the SOAP optimizer leads to much faster and more stable convergence than Adam when comparing wall-clock time.
Results are measured on the Kodak dataset with $\lambda = 0.013$. Longer training period results are shown in Sec.~\ref{subsec:longtrain}.}
\label{fig:rd_converge_wall}
\end{figure*}

\section{Related Work}
\label{subsec:related}

\subsection{Advances in Learned Image Compression Architectures}
Learned image compression methods are typically developed within the nonlinear transform coding framework~\cite{balle2020nonlinear,goyal2002theoretical}, aiming to balance the bit rate (R) and the distortion (D).
A wide variety of architectures have been explored to enhance the expressive power of transforms. Examples include residual networks~\cite{he2022elic, cheng2020learned}, deformable convolutions~\cite{fu2024fast}, designs based on frequency decomposition~\cite{fu2024weconvene, ma2020end}, invertible neural networks~\cite{xie2021enhanced, cai2024i2c,gao2025approximately}, and contextual clustering~\cite{qi2024long}. Recently, transformers and Mamba architectures have gained traction, offering strong performance gains~\cite{liu2023learned, zhu2022transformer, zou2022devil, koyuncu2022contextformer, qian2022entroformer, li2024frequencyaware,wu2025cmamba,zeng2025mambaic,feng2025linear,lu2025learned}. On the probabilistic modeling side, research has sought to design more accurate entropy models for the latent space. This includes hierarchical priors~\cite{ballé2018variational, hu2020coarse, duan2023qarv,li2025learned,zhang2026qarv++}, autoregressive models operating over spatial~\cite{minnen2018joint} or channel dimensions~\cite{minnen2020channel}, as well as hybrid models that capture joint spatial-channel dependencies~\cite{jiang2023mlic, ma2021cross}. Further refinements make use of checkerboard-based decoding~\cite{he2021checkerboard,fu2024fast}, codebooks~\cite{zhu2022unified}, distribution~\cite{fu2023learned,zhang2025generalized,zhang2025learning}, and (lattice) vector quantization~\cite{zhang2023lvqac, feng2023nvtc, lei2024approaching,xu2025multirate}.  
 
Another body of work seeks to reduce training and inference cost while maintaining compression quality. Notable examples include slimmable sub-networks~\cite{Tao_2023_ICCV}, variable-bit-rate codecs~\cite{Wang2023evc, kamisli2024dcc_vbrlic,presta2025stanh}, and knowledge distillation strategies~\cite{fu2024fast}. Lightweight decoding has been pursued through shallow or linear decoders~\cite{yang2023computationally}, while new loss formulations such as causal context~\cite{han2024causal} or latent decorrelation penalties~\cite{ALi2023towards} have also been proposed. Recent studies further introduce rate-distortion-complexity analysis~\cite{minnen2023advancing, gao2024exploring,guo2023cbanet,hang2025rate}, explicitly incorporating computational cost into the optimization objective.  

\subsection{Training Dynamics Approaches for LIC}
Although rate-distortion training is often cast as minimizing a scalarized loss $R + \lambda D$, it is fundamentally a multi-objective problem: decreasing rate typically worsens distortion, and vice versa. This observation motivates the adoption of multi-objective optimization (MOO) techniques, which are designed to handle multiple conflicting criteria. One influential MOO approach is the Multiple Gradient Descent Algorithm (MGDA)~\cite{desideri2012multiple,sener2018multi}. MGDA determines an update direction by combining gradients from different objectives with non-negative weights that minimize the squared norm of their sum, subject to a simplex constraint. The resulting direction guarantees improvement for all objectives simultaneously. MOO methods have seen wide adoption in multi-task learning~\cite{liu2024famo,wang2025gradient,hu2024revisiting}, where they are used to balance competing gradients across tasks and mitigate conflicts during training.

Building on these insights, recent studies in LIC have begun to focus on R-D optimization dynamics. ~\cite{zhang2025balanced} introduced the {Balanced-RD} framework, which explicitly regularizes the interaction between rate and distortion gradients. Other approaches, such as CMD-LIC~\cite{zhang2025accelerating} and Auxiliary Transform (AuxT) methods~\cite{li2025on}, also reformulate training to accelerate convergence or stabilize optimization. Together, these works highlight optimization strategy as another pillar of LIC research, alongside architectural and entropy modeling advances.

\textbf{Our Perspective.}  
Our work aligns with this emerging line of training-dynamics-based methods and is closely related to Balanced-RD, CMD-LIC, and AuxT. Balanced-RD explicitly regulates the interaction between rate and distortion gradients to promote stable convergence in the R-D trade-off. CMD-LIC accelerates optimization by reducing training space dimensions. AuxT introduces constraints on energy compaction and feature decorrelation to improve convergence behavior. 

Distinct from these approaches, our study focuses on second-order optimization, which leverages curvature information to jointly accelerate convergence and reduce gradient conflicts. We also want to note that while recent work~\cite{wang2025gradient} explores SOAP for PINNs and measures the gradient alignment scores, the settings, claims, and contributions here are distinct. Unlike their focus on quantifying directional information between PDE residuals to improve PDE solver accuracy, we explicitly model the structural interaction between Rate and Distortion gradients/Hessians to address the optimization landscape of the Rate-Distortion Pareto frontier. Furthermore, we uncover that second-order optimization suppresses activation and latent outliers, and provide a theoretical explanation based on signal propagation. This outlier suppression improves robustness to post-training quantization (PTQ), thereby enhancing the deployability of learned compressors.
\section{Preliminaries}

\subsection{Rate-Distortion in Learned Image Compression}

Learned image compression seeks to efficiently encode an image $x$, sampled from a distribution $p_{\mathrm{data}}(x)$, into a compact bitstream while minimizing the error in the reconstructed image $\hat{x}$. This requires balancing two competing objectives: minimizing the bit rate (R) and minimizing the distortion (D). The standard transform coding architecture for LIC~\cite{goyal2002theoretical,ballé2018variational,balle2020nonlinear} uses an encoder $e(\cdot)$, quantizer $Q(\cdot)$, and decoder $r(\cdot)$, such that $x \rightarrow \hat{z} = Q(e(x)) \rightarrow \hat{x} = r(\hat{z})$. The discrete latent $\hat{z}$ is compressed using entropy coding~\cite{witten1987arithmetic,moffat2019huffman}, with an expected bit cost approximated by $-\log_2 P(\hat{z})$.

Training LICs involves minimizing the rate-distortion (R-D) loss:
\begin{equation}
\label{eq:rd}
    \mathcal{L}_\text{R-D} = \mathbb{E}_{x \sim p_{\mathrm{data}}} \left[ \underbrace{-\log_2 P(\hat{z})}_{\text{Rate}} + \lambda\, \underbrace{d(x, \hat{x})}_{\text{Distortion}} \right],
\end{equation}
where $d(x, \hat{x})$ is a distortion metric (e.g., MSE, SSIM, LPIPS) and $\lambda$ controls the trade-off. Because the rate and distortion objectives often pull the model parameters in different directions, optimizing the R-D loss often leads to challenging gradient interactions and complex optimization dynamics.

\subsection{Optimization Strategies}
Choices of optimizer profoundly impact the efficiency and effectiveness of navigating the complex R-D loss landscape.

\textbf{First-order optimizers}, such as SGD~\cite{robbins1951stochastic} and Adam~\cite{kingma2014adam}, update parameters based on the gradient of the loss:
\begin{equation}
    \theta_t \leftarrow \theta_{t-1} - \eta g_t.
\end{equation}
While computationally efficient, these methods rely on local steepness and ignore the curvature of the loss landscape. Consequently, as we will demonstrate, they often struggle to resolve the competing gradients inherent in the R-D objective, leading to slow or unstable convergence.

In contrast, \textbf{second-order optimizers }incorporate curvature information, aiming to better adapt updates to the geometry of the loss landscape. The Newton update~\cite{boyd2004convex} is given by
\begin{equation}
    \theta_t \leftarrow \theta_{t-1} - \eta H_t^{-1} g_t,
\end{equation}
with $H_t$ denoting the Hessian matrix of second derivatives. In this paper, we demonstrate theoretically and empirically that such updates can help address gradient conflicts between the rate and distortion terms with effective descent directions.

\textbf{Main bottleneck.}
The computational and memory demands of exact Newton steps are prohibitive for large neural networks, as the Hessian is an $n \times n$ matrix requiring $O(n^2)$ storage and at least $O(n^2)$ time to form, with inversion costing $O(n^3)$. To make second-order optimization tractable, practical algorithms such as Shampoo~\cite{gupta2018shampoo,eschenhagen2025purifying, morwani2024new} approximate the Hessian inverse using structured preconditioners. SOAP~\cite{vyas2024soap} extends this framework by introducing adaptive scaling reminiscent of Adam, but in the preconditioned space, resulting in an efficient quasi-Newton method suitable for deep models.

\textbf{Why Optimization Matters in LIC.}
As we will demonstrate, the optimizer’s ability to resolve the intrinsic gradient conflicts of the rate-distortion objective is key to effectively training a compressor. In particular, advanced second-order methods like SOAP can accelerate convergence, yield better rate-distortion results, and produce more stable representations—directly addressing both training efficiency and practical deployment challenges in LICs.
\begin{table*}[h]
    \centering
    \small
    \caption{Computational Complexity and BD-Rate Compared to Adam}
    \resizebox{0.75\linewidth}{!}{
    \begin{tabular}{c|c|c|c|c|c|c|c}
        \hline \hline 
        \multicolumn{2}{c|}{\multirow{2}{*}{Method}} & \multirow{2}{*}{Steps-to-Adam $\downarrow$} & \multirow{2}{*}{Time-to-Adam $\downarrow$} & \multicolumn{4}{c}{BD-Rate (\%) $\downarrow$} \\ \cline{5-8}
        \multicolumn{2}{c|}{} & & & Kodak & Tecnick & CLIC2022 & Avg. \\ 
        \hline
        \multirow{2}{*}{\makecell{ELIC\\~\cite{he2022elic}}} 
        & + Adam & 1 & 1 & 0\% & 0\% & 0\% & 0\% \\ 
        & + SOAP & \textbf{0.25} & \textbf{0.35} & \textbf{-3.49\%} & \textbf{-3.52\%} & \textbf{-4.01\%} & \textbf{-3.67\%} \\ 
        \hline
        \multirow{2}{*}{\makecell{TCM-S\\~\cite{liu2023learned}}} 
        & + Adam & 1 & 1 & 0\% & 0\% & 0\% & 0\% \\ 
        & + SOAP & \textbf{0.28} & \textbf{0.39} & \textbf{-2.86\%} & \textbf{-2.40\%} & \textbf{-3.01\%} & \textbf{-2.76\%} \\ 
        \hline
        \multirow{2}{*}{\makecell{LALIC\\~\cite{feng2025linear}}} 
        & + Adam & 1 & 1 & 0\% & 0\% & 0\% & 0\% \\ 
        & + SOAP & \textbf{0.35} & \textbf{0.49} & \textbf{-2.44\%} & \textbf{-3.31\%} & \textbf{-3.51\%} & \textbf{-3.09\%} \\ 
        \hline
        \multirow{2}{*}{\makecell{DCAE\\~\cite{lu2025learned}}} 
        & + Adam & 1 & 1 & 0\% & 0\% & 0\% & 0\% \\ 
        & + SOAP & \textbf{0.31} & \textbf{0.44} & \textbf{-2.26\%} & \textbf{-2.03\%} & \textbf{-2.06\%} & \textbf{-2.12\%} \\ 
        \hline \hline 
    \end{tabular}
    }
    \begin{tablenotes}
    \item {\bf Training Conditions}: 1 $\times$ NVIDIA H100 GPU, 2 $\times$ Intel Xeon Platinum 8480+ CPU, 1TB RAM. 
    \textbf{Bold} indicates better performance. 
    The ``Avg.'' is the mean BD-Rate across Kodak, Tecnick, and CLIC2022. 
    \end{tablenotes}
    \label{tab:bdrate}
\end{table*}

\section{Empirical Evaluation: Accelerating LIC Training with Second-Order Optimization}
\label{sec:empirical}
To empirically demonstrate the benefits of second-order optimization compared to first-order methods, we train several representative LICs using both Adam~\cite{kingma2014adam} and SOAP~\cite{vyas2024soap}. A comparison with training strategies, AuxT~\cite{li2025on}, CMD-LIC~\cite{zhang2025accelerating}, and Balanced-RD~\cite{zhang2025balanced}, is further provided in Sec.~\ref{subsec:compare}.

\textbf{Evaluated Models.}  
We benchmark on the following LICs:  
{ELIC}~\cite{he2022elic}, which incorporates unevenly grouped space-channel context models and stacked residual blocks;  
{TCM}~\cite{liu2023learned}, which employs Transformer-CNN Mixture blocks to integrate both local and non-local information;  
{LALIC}~\cite{feng2025linear}, which utilizes Bi-RWKV blocks with linear attention; and  
{DCAE}~\cite{lu2025learned}, which adopts a dictionary-based cross-attention entropy model.

\textbf{Training Protocol.}
All models are trained on the COCO 2017 dataset~\cite{lin2014microsoft} using random $256 \times 256$ crops. Following CompressAI~\cite{begaint2020compressai}, we set $\lambda$ to $\{18, 35, 67, 130, 250, 483\} \times 10^{-4}$. EMA~\cite{morales-brotons2024exponential} (decay=0.999) is enabled.

For both ``+ Adam'' and ``+ SOAP'' experiments, we use a batch size of 64 and an initial lr of $2\times10^{-4}$ with a ReduceLROnPlateau scheduler (patience 10, factor 0.5). Weight decay is set to 0, as no noticeable improvement is observed when it is applied. For SOAP, the preconditioner is updated every 10 steps following the default implementation. All models are trained for 300 epochs to ensure full convergence. These choices (lr, scheduler, update frequency, and other hyperparameters) follow standard defaults adopted in prior LIC and optimization studies~\cite{he2022elic,liu2023learned,feng2025linear,lu2025learned,vyas2024soap}.

\textbf{Evaluation Datasets.}
Performance is evaluated on three widely used benchmarks: Kodak\footnote{\url{https://r0k.us/graphics/kodak/}} (768$\times$512), Tecnick\footnote{\url{https://tecnick.com/?aiocp\%20dp=testimages}} (1200$\times$1200), and CLIC 2022\footnote{\url{http://compression.cc/}} (2048$\times$1365).

\textbf{Metrics.}  
We compare Adam and SOAP using the following criteria:
Steps-to-Adam measures \emph{step efficiency}, the ratio of training steps required by an optimizer to reach a target validation loss, normalized by the number required by Adam. Values less than 1.0 indicate superior step efficiency;
Time-to-Adam assesses \emph{wall-clock efficiency}, the ratio of actual training time to reach a target validation loss, relative to Adam. Values less than 1.0 reflect faster training in practice. Note that  Time-to-Adam already accounts for the per-step computational overhead of different updates; BD-Rate after Convergence reports the BD-Rate~\cite{BDrate}, which quantifies average rate savings at matched image quality between models after full convergence, {using the corresponding Adam-trained model as the anchor, a lower BD-Rate indicates better performance.

Please note that the additional VRAM overhead of SOAP relative to Adam is negligible 
(about a 1\% increase in our setting) and is therefore not reported separately.

\textbf{Empirical Results.}
Across all evaluated LIC architectures, as shown in Tab.~\ref{tab:bdrate}, SOAP substantially accelerates convergence compared to Adam, both in terms of \emph{step efficiency} and \emph{wall-clock efficiency}. For instance, ELIC trained with SOAP reaches the target validation loss in only $25\%$ of the steps and $35\%$ of the time required by Adam, while TCM-S exhibits similar gains, requiring $28\%$ of the steps and $39\%$ of the time. These trends hold consistently for more advanced LICs: LALIC and DCAE, where SOAP reduces training time by roughly $51$--$56\%$ relative to Adam. Although each SOAP step incurs a slightly longer time cost, the drastic reduction in the number of steps leads to a net decrease in total training time.

Fig.~\ref{fig:rd_fig} illustrates the R-D curves of all methods. The SOAP-trained models consistently outperform their Adam-trained counterparts, with the performance gap particularly pronounced in the challenging high-bpp regime.

\begin{figure*}[t]
\centering
\newcommand{\mywidth}{0.3}

\subfloat[Kodak]{%
  \includegraphics[width=\mywidth\linewidth]{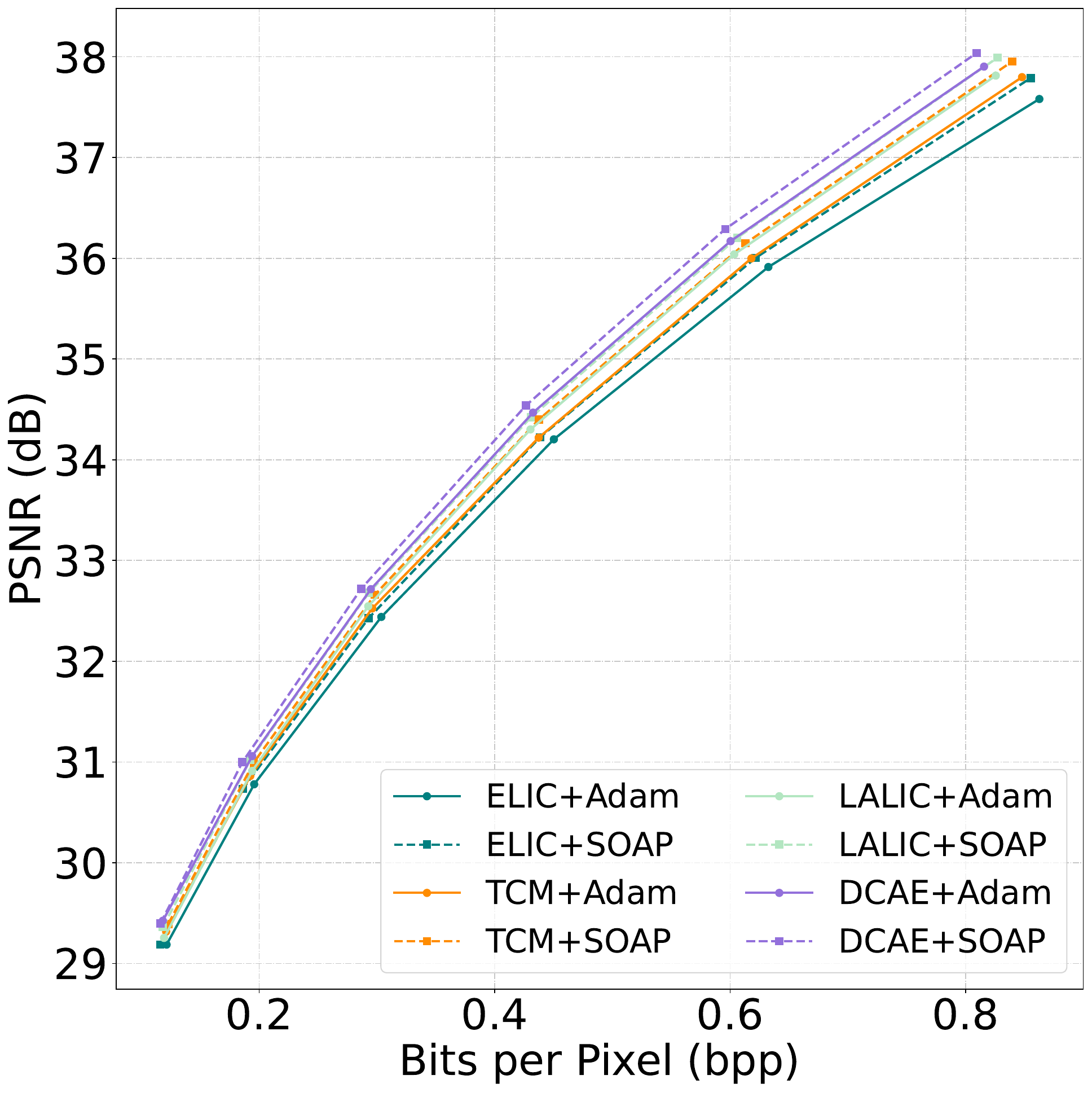}%
  \label{subfig:kodak}
}\hfill
\subfloat[Tecnick 1200x1200]{%
  \includegraphics[width=\mywidth\linewidth]{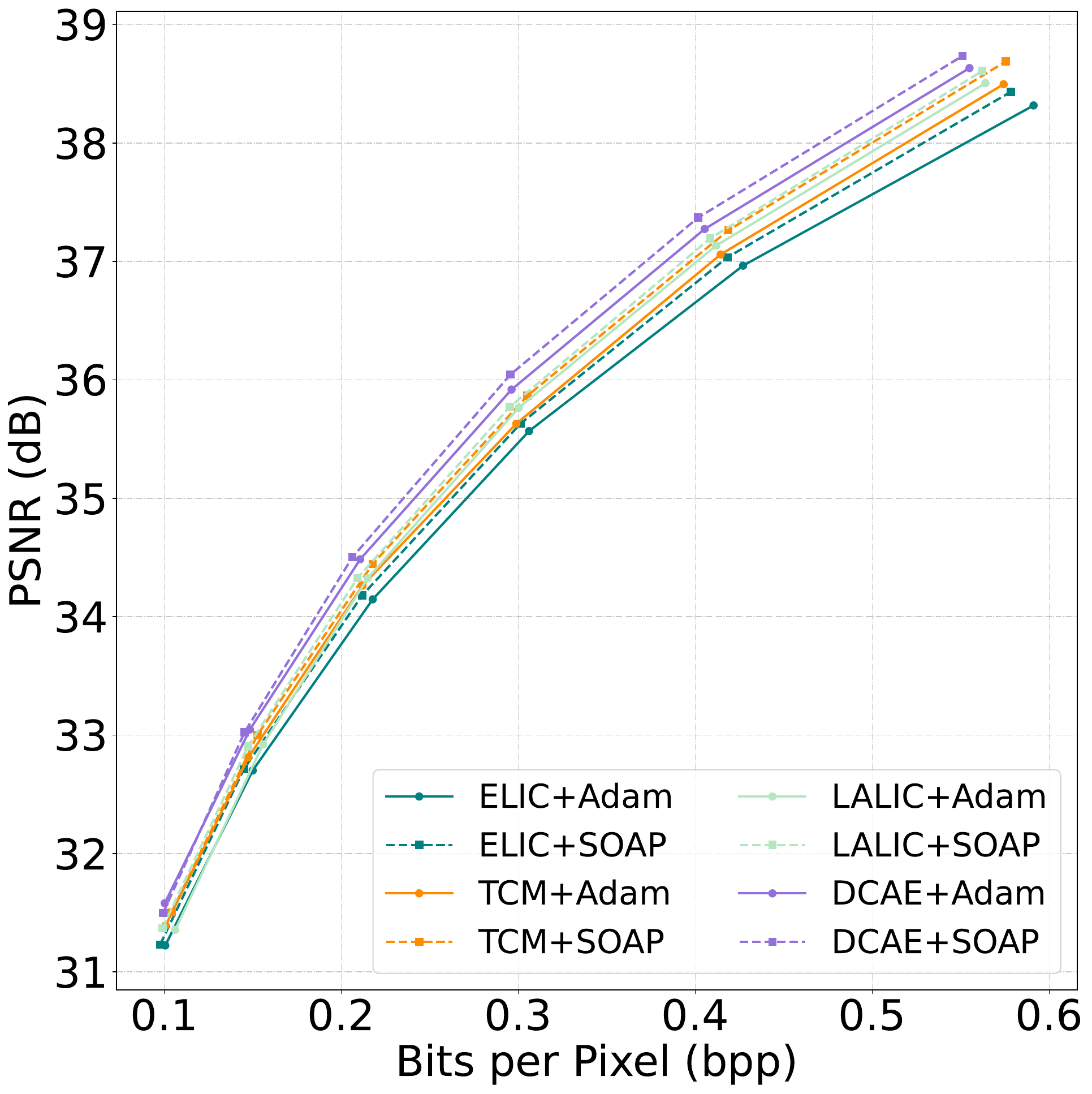}%
  \label{subfig:Tecnick}
}\hfill
\subfloat[CLIC 2022]{%
  \includegraphics[width=\mywidth\linewidth]{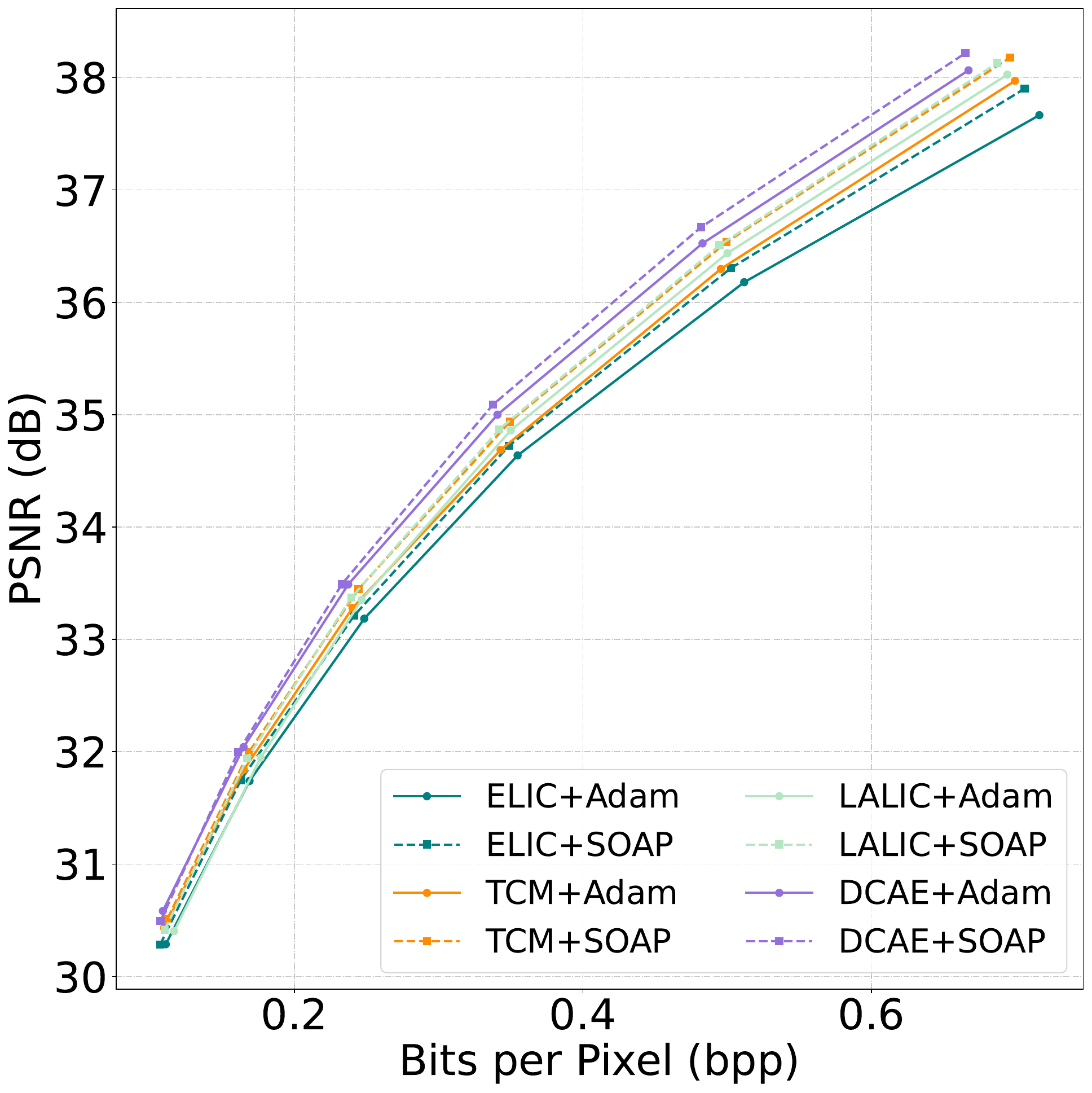}%
  \label{subfig:CLIC}
}

\caption{\textbf{R-D curves of various methods.} \textit{Please zoom in for more details}.}
\label{fig:rd_fig}
\end{figure*}

Crucially, SOAP also achieves \emph{better rate--distortion performance} after full convergence. On average across Kodak, Tecnick, and CLIC2022, SOAP improves BD-Rate by {-3.67\%} for ELIC, {-2.76\%} for TCM-S, {-3.09\%} for LALIC, and {-2.12\%} for DCAE relative to Adam. These improvements are consistent across datasets. Notably, for DCAE, which already achieves around {-18\% BD-Rate compared to VVC-intra}, further improvement is especially meaningful. With SOAP, these gains are obtained without affecting the inference stage. Similarly, for smaller models such as TCM/ELIC, a {3\% BD-Rate reduction} is particularly impactful during development, further amplified by the enhanced robustness of SOAP-trained models to post-training quantization (see Sec.~\ref{sec:soap-outliers}).

These results highlight a major advantage of SOAP: it can match Adam's final quality in less than half the training time across diverse LICs, while also delivering superior final R-D performance at the same steps. The benefits even extend to top-performing models such as DCAE and LALIC---where improving is notably challenging---suggesting that incorporating curvature information is especially valuable for navigating the complex optimization landscapes of LIC models.

\section{Newton Preconditioning Aligns Conflicting Gradients in Rate--Distortion Optimization}
\label{sec:conflict}

We hypothesize that SOAP's empirical success stems from its ability to mitigate the inherent gradient conflicts in R-D optimization. First-order methods apply coordinate-wise rescaling to the gradient, leading to an inefficient compromise between rate and distortion objectives. In contrast, SOAP utilizes a second-order preconditioner that leverages curvature information to \emph{rotate and scale} the gradient, producing an update vector that more effectively navigates the loss landscape.

In this section, we provide a theoretical analysis demonstrating how the Newton preconditioning resolves conflicts in two ways: (i) by aligning the rate and distortion updates within a single step (intra-step alignment), and (ii) by stabilizing the total update vector across consecutive steps (inter-step alignment). We then validate these theoretical insights empirically.

\subsection{Gradient Conflict Measurement}

Optimizing the R-D loss, $\mathcal{L}_{\mathrm{R\text{-}D}} = \mathcal{L}_R + \lambda \mathcal{L}_D$, is fundamentally a multi-objective problem~\cite{zhang2025balanced}. Let $g_R = \nabla \mathcal{L}_R$ and $g_D = \nabla \mathcal{L}_D$ denote the \emph{raw gradients}. Optimizers transform these gradients into \emph{update vectors}; we denote the preconditioned update vectors corresponding to $g_R$ and $g_D$ as $p_R$ and $p_D$, respectively, and the total update vector as $p$.

Following~\cite{yu2020gradient,wang2025gradient}, we quantify conflict via cosine scores:
\begin{equation}
\mathcal{S}(u,v) = \frac{\langle u, v \rangle}{\|u\| \, \|v\|} \in [-1,1],
\end{equation}
for nonzero vectors $u,v$. We focus on two complementary metrics defined on the update vectors:
\begin{enumerate}[label=(\alph*),itemsep=1mm,topsep=1mm,parsep=0mm]
    \item \textbf{Inter-step:}\; $\mathcal{S}_{\mathrm{inter}}^t = \mathcal{S}(p^{t-1}, p^{t})$ measures the consistency of the total update across consecutive steps, reflecting the stability of the optimization trajectory.
    \item \textbf{Intra-step:}\; $\mathcal{S}_{\mathrm{intra}}^t = \mathcal{S}(p_R^t, p_D^t)$ measures the alignment between the rate and distortion update vectors at step $t$.
\end{enumerate}

\subsection{{Geometric Intuition: Why Higher Cosine Accelerates}}
\label{sec:geo_intuition}

{Before detailing the specific mechanism behind SOAP, it is crucial to understand intuitively \emph{why} higher cosine similarity, both within a single update step and across consecutive steps, translates directly to the training acceleration observed. While we provide formal proofs linking cosine alignment to convergence in Appendix~\ref{sec:why_cosine_matters}, here we offer a geometric perspective on how ``destructive interference'' hampers standard optimizers and how ``constructive alignment'' resolves it.}

{\textbf{Intra-step: Resolving the Tug-of-War.} The total parameter update $p_t$ is effectively the vector sum of the preconditioned rate update $p_{R,t}$ and distortion update $p_{D,t}$. In first-order optimization (e.g., Adam), these vectors often point in divergent directions due to the competing nature of the R-D objective, creating a geometric tug-of-war. When $\mathcal{S}_{\mathrm{intra}}^t$ is low or negative, significant portions of the gradient magnitudes are wasted as they cancel each other out; the optimizer burns computational energy pulling parameters in opposing directions while the net movement toward the Pareto frontier remains small. By identifying the curvature and rotating the optimization basis, SOAP aligns these update vectors ($\mathcal{S}_{\mathrm{intra}}^t \approx 1$) so that they point towards a common descent direction. Geometrically, this ensures that the rate and distortion updates \emph{constructively interfere}, effectively summing their magnitudes to take a larger, more efficient step.}

{\textbf{Inter-step: Straightening the Trajectory.} The efficiency is also determined by the path taken through the loss landscape. Complex rate-distortion landscapes are often characterized by narrow, curving valleys (ill-conditioned curvature)~\cite{ma2022beyond}. First-order optimizers, unable to account for parameter correlations, typically oscillate across the walls of these valleys. An unstable inter-step cosine indicates this ``zigzagging'' behavior, where the update at step $t+1$ partially undoes the progress of step $t$. This results in a long, winding path to traverse a short Euclidean distance. In contrast, SOAP's Newton-style preconditioning aims to jump directly to the bottom of the local quadratic approximation. This linearizes the trajectory ($\mathcal{S}_{\mathrm{inter}}^t \approx 1$), allowing the model to traverse the landscape along a smooth path, thereby requiring significantly fewer steps to reach convergence.}

\subsection{How Newton Preconditioner Resolves Gradient Conflicts}

While exact Newton updates are intractable for large models, SOAP efficiently approximates a Newton step~\cite{vyas2024soap,morwani2024new} {by utilizing the Kronecker-factored structure of the Gauss-Newton matrix. As formally derived in Appendix~\ref{sec:soap_newton_approximation}, SOAP is equivalent to performing a local Newton step. This Newton behavior is key to resolving gradient conflicts.} \begin{theorem}[Newton approximation of SOAP]
\label{theorem1soap}
Under standard assumptions, SOAP's update is a local approximation to the Newton update:
\begin{equation}
p\;\approx\; -\,H^{-1}g.
\end{equation}
\end{theorem}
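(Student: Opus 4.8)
The plan is to establish the chain
\[
\text{SOAP update} \;=\; \text{Shampoo-style preconditioned step} \;\approx\; \text{Gauss--Newton step} \;\approx\; -H^{-1}g,
\]
by making precise the ``standard assumptions'' under which each link holds. First I would recall the SOAP update rule: SOAP maintains Kronecker factors $L \approx \mathbb{E}[G G^\top]$ and $R \approx \mathbb{E}[G^\top G]$ of the per-layer gradient matrix $G$, runs Adam in the eigenbasis of $L \otimes R$, and produces an update of the form $P = -\,(L^{1/4})^{-1} \widetilde{G} (R^{1/4})^{-1}$ (up to the adaptive diagonal scaling in the rotated basis, which under a stationarity/whitening assumption acts as the identity up to a scalar). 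Vectorizing, this is $\mathrm{vec}(P) = -\big(R^{1/4}\otimes L^{1/4}\big)^{-1}\mathrm{vec}(\widetilde G)$, so the effective preconditioner is $\widehat H = R^{1/4}\otimes L^{1/4}$.

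Next I would invoke the known identification (cf.\ the Shampoo/Gauss--Newton analyses of~\cite{morwani2024new,eschenhagen2025purifying}) that, for a layer with gradient matrix $G$, the layerwise Gauss--Newton (Fisher) block admits the Kronecker approximation $H \approx \big(\mathbb{E}[G^\top G]\big)^{1/2}\otimes\big(\mathbb{E}[GG^\top]\big)^{1/2} = R^{1/2}\otimes L^{1/2}$ under the assumption that the input-side and output-side statistics factorize (the standard K-FAC-type independence assumption) and that the loss is well-approximated locally by its Gauss--Newton model. Combining this with the previous step, $\widehat H = (R^{1/2}\otimes L^{1/2})^{1/2} = H^{1/2}$ in the Kronecker-factored sense; the fourth-root in SOAP is exactly the square root of the Shampoo preconditioner, and Shampoo's preconditioner is the square root of the Gauss--Newton matrix — the two square roots compose to give a genuine Newton inverse once the adaptive second-moment term in the rotated basis supplies the remaining $H^{-1/2}$ factor. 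Here I would cite the argument in~\cite{vyas2024soap,morwani2024new} that running Adam in the Shampoo eigenbasis effectively contributes the missing square-root of the curvature, so that the composition yields $P \approx -H^{-1}g$ with $g=\mathrm{vec}(G)$.

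The main obstacle, and the place where ``standard assumptions'' must carry real weight, is the accounting of the powers: one must show carefully that Shampoo supplies $H^{-1/2}$, that the Adam-in-eigenbasis step supplies the other $H^{-1/2}$, and that these two factors are simultaneously diagonal in the same (Kronecker eigen-)basis so that they actually multiply to $H^{-1}$ rather than to some non-commuting product. This requires (i) the Kronecker-factorization assumption on the curvature so everything is diagonalized by $U_L\otimes U_R$; (ii) a stationarity/slowly-varying assumption so that the running estimates $L,R$ and the second-moment accumulator are (approximately) the true expectations; and (iii) a local-quadratic assumption so that the Gauss--Newton matrix is a faithful stand-in for $H$. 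I would state these three as the hypotheses of the theorem, then present the power-counting as a short lemma: in the common eigenbasis, the $i$-th update coordinate is $-\,\lambda_i^{-1/4}\cdot \lambda_i^{-1/4}\cdot \lambda_i^{0}\, g_i \cdot(\text{scalar}) = -\lambda_i^{-1/2}\cdot\lambda_i^{-1/2} g_i$ — wait, that double-counts; the correct bookkeeping is $\lambda_i^{-1/2}$ from Shampoo's $\widehat H^{-1}=(R^{1/4}\otimes L^{1/4})^{-1}$ acting as $H^{-1/2}$, times $\lambda_i^{-1/2}$ from the eigenbasis-Adam normalization by the root-second-moment, giving $\lambda_i^{-1}$, i.e.\ $-H^{-1}g$ coordinatewise. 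Getting this exponent ledger exactly right, and being explicit that the error terms are $O(\|\theta_t-\theta^\star\|)$ from the local-quadratic approximation plus $O(\|L-\mathbb{E}GG^\top\|)$-type estimation errors, is the crux; the rest is bookkeeping. Finally I would remark that the approximation is an equality in the idealized quadratic, stationary, Kronecker-exact regime, which is what ``local approximation to the Newton update'' means in the statement.
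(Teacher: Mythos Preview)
Your proposal shares the paper's key ingredients---the Gauss--Newton surrogate, the Kronecker factorization, and running Adam in a rotated basis---but the way you assemble them contains a genuine error in your description of SOAP, and this propagates into the power-counting.

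You describe SOAP as producing $P=-(L^{1/4})^{-1}\tilde G(R^{1/4})^{-1}$ ``up to the adaptive diagonal scaling\ldots which\ldots acts as the identity up to a scalar.'' This is not what SOAP does: SOAP uses only the \emph{eigenvectors} $Q_L,Q_R$ of the Shampoo factors to rotate into a new basis, runs Adam there, and rotates back; it never applies the $L^{-1/4},R^{-1/4}$ scalings. (Indeed, if the Adam scaling were really the identity, rotation--identity--rotate-back would collapse to plain gradient descent, not to Shampoo.) You then contradict yourself by later asking Adam to supply a factor of $H^{-1/2}$. The exponent ledger you struggle with---one $\lambda^{-1/2}$ from Shampoo, one $\lambda^{-1/2}$ from Adam---never arises, because the first factor is fictitious: all of the curvature scaling in SOAP comes from Adam's second-moment normalization in the rotated basis. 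If you keep only that Adam factor and read it literally as $(\text{second moment})^{-1/2}$, you obtain $H^{-1/2}$, not $H^{-1}$, and the claim fails.

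The paper's argument avoids power-counting entirely. It states (A1) $H\approx H_{\mathrm{GN}}$; (A2) $H_{\mathrm{GN}}\approx R_t\otimes L_t$; (A3) the rotation $(Q_L\otimes Q_R)$ diagonalizes $H_{\mathrm{GN}}$; and the crucial (A4): in the rotated basis, near a nondegenerate minimum, Adam acts as $\tilde p\approx -\operatorname{diag}(\tilde H_{\mathrm{GN}})^{-1}\tilde g$ (not the $-1/2$ power). The justification for (A4) is a separate local argument, also in the paper, that under a quadratic model with approximately isotropic parameter perturbations one has $v_t\approx\sigma^2\operatorname{diag}(H)^2$, hence $1/\sqrt{v_t}\propto\operatorname{diag}(H)^{-1}$. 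Once (A4) is granted, the proof is one line: $\tilde H_{\mathrm{GN}}$ is diagonal by (A3), so $\operatorname{diag}(\tilde H_{\mathrm{GN}})^{-1}=\tilde H_{\mathrm{GN}}^{-1}$, and rotating back gives $p\approx -H_{\mathrm{GN}}^{-1}g\approx -H^{-1}g$. The step you are missing is precisely this identification of Adam's local preconditioner with $\operatorname{diag}(H)^{-1}$ rather than with $\operatorname{diag}(\mathbb{E}[g\odot g])^{-1/2}$; without it, no amount of power-bookkeeping closes the gap.
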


\textbf{Inter-step alignment (Stability).}
Newton preconditioning inherently stabilizes the optimization trajectory by adapting to local curvature. 

Consider the Newton-like update $p_t = - H_t^{-1} g_t$, where $H_t = \nabla^2 \mathcal{L}(\theta_t) \succ 0$, and parameters evolve as $\theta_{t+1} = \theta_t + \eta p_t$.

\begin{lemma}[Inter-step alignment for Newton]
\label{lemma:inter_step_gradient_align}
{If the Hessian varies smoothly (Lipschitz continuous), then the Newton direction changes very slowly between steps. Specifically, we show in Appendix~\ref{proof:lemma_inter_step_align} that} for $\eta$ sufficiently small,
\begin{equation}
\bigl| 1 - \mathcal{S}(p_t, p_{t+1}) \bigr| \; \le \; C_1 \,\eta\, \|p_t\| \;+\; C_2 \,\eta^2 \,\|p_t\|^2 ,
\end{equation}
for constants $C_1,C_2$. In particular, $\mathcal{S}(p_t, p_{t+1}) \to 1$ as $\eta \to 0$.
\end{lemma}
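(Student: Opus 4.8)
The plan is to prove Lemma~\ref{lemma:inter_step_gradient_align} by a first-order perturbation analysis of the Newton direction, treating the step $\eta p_t$ as a small displacement and controlling the resulting change in $p_{t+1}$. First I would write $p_{t+1} = -H_{t+1}^{-1} g_{t+1}$ with $H_{t+1} = \nabla^2 \mathcal{L}(\theta_t + \eta p_t)$ and $g_{t+1} = \nabla \mathcal{L}(\theta_t + \eta p_t)$, and expand both factors to first order in $\eta$. For the gradient, Taylor's theorem gives $g_{t+1} = g_t + \eta H_t p_t + O(\eta^2 \|p_t\|^2)$, but since $p_t = -H_t^{-1} g_t$ by definition, the leading term collapses: $g_{t+1} = g_t - \eta g_t + O(\eta^2\|p_t\|^2) = (1-\eta) g_t + O(\eta^2\|p_t\|^2)$. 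For the Hessian, Lipschitz continuity of $\nabla^2 \mathcal{L}$ with constant $L_H$ gives $\|H_{t+1} - H_t\| \le L_H \eta \|p_t\|$, and a standard resolvent identity (using $H_t \succ 0$, so $H_t^{-1}$ is bounded by $1/\mu$ where $\mu$ is the smallest eigenvalue) yields $\|H_{t+1}^{-1} - H_t^{-1}\| \le \|H_t^{-1}\|\,\|H_{t+1}-H_t\|\,\|H_{t+1}^{-1}\| \le C L_H \eta \|p_t\|$ for $\eta$ small enough that $H_{t+1}$ stays positive definite.

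Next I would combine these two expansions to write $p_{t+1} = -(H_t^{-1} + \Delta_H)(1-\eta) g_t + O(\eta^2\|p_t\|^2)$ where $\|\Delta_H\| \le C L_H \eta \|p_t\|$. Distributing, $p_{t+1} = (1-\eta) p_t - (1-\eta)\Delta_H g_t + O(\eta^2\|p_t\|^2) = (1-\eta) p_t + r$, where the remainder $r$ satisfies $\|r\| \le C' \eta \|p_t\| \cdot \|g_t\| + O(\eta^2 \|p_t\|^2)$. Using $\|g_t\| = \|H_t p_t\| \le \|H_t\| \|p_t\| \le C'' \|p_t\|$ (bounding the largest eigenvalue of $H_t$), this becomes $\|r\| \le C_1' \eta \|p_t\|^2 + C_2' \eta^2 \|p_t\|^2$. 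So $p_{t+1}$ is a positive scalar multiple of $p_t$ plus a small perturbation. Then I would invoke the elementary geometric fact that if $v = \alpha u + r$ with $\alpha > 0$, then $1 - \mathcal{S}(u,v) \le \|r\|/(\alpha\|u\|) + $ (higher-order), more precisely $1 - \mathcal{S}(u, \alpha u + r) \le \frac{\|r\|^2}{2\alpha^2\|u\|^2} + \frac{\|r\|}{\alpha \|u\|}$ type bound; dividing $\|r\| \le C_1'\eta\|p_t\|^2 + C_2'\eta^2\|p_t\|^2$ by $\alpha\|p_t\| = (1-\eta)\|p_t\|$ gives the claimed bound $|1-\mathcal{S}(p_t,p_{t+1})| \le C_1 \eta \|p_t\| + C_2 \eta^2 \|p_t\|^2$ after absorbing the $(1-\eta)^{-1}$ factor (bounded for $\eta \le 1/2$, say) into the constants.

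The main obstacle is being careful about uniformity of the constants: the bounds $\mu \le \lambda_{\min}(H_t)$, $\lambda_{\max}(H_t) \le M$, and the Lipschitz constant $L_H$ must hold uniformly along the trajectory (or at least on a neighborhood of $\theta_t$ of radius $\eta \|p_t\|$), and one must verify that for $\eta$ sufficiently small the perturbed Hessian $H_{t+1}$ remains positive definite so that $H_{t+1}^{-1}$ exists and the resolvent bound applies — this is where the ``$\eta$ sufficiently small'' hypothesis is really used. A secondary subtlety is that the hypothesis as stated ($\eta$ small) makes $\|p_t\|$ essentially fixed, so the bound is genuinely driven by $\eta \to 0$; I would state explicitly that the constants $C_1, C_2$ depend on $\mu$, $M$, and $L_H$ but not on $t$, and note the final conclusion $\mathcal{S}(p_t,p_{t+1}) \to 1$ as $\eta \to 0$ follows immediately since the right-hand side vanishes. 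The rest — the Taylor expansions, the resolvent identity, the cosine inequality — is routine and I would relegate the detailed constant-chasing to the appendix.
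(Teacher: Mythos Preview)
Your proposal is correct and follows essentially the same route as the paper: Taylor-expand $g_{t+1}$ using $p_t=-H_t^{-1}g_t$ to get $g_{t+1}=(1-\eta)g_t+O(\eta^2\|p_t\|^2)$, bound $\|H_{t+1}^{-1}-H_t^{-1}\|$ via Lipschitz Hessian plus the resolvent identity, combine to obtain $p_{t+1}=(1-\eta)p_t+e_t$ with $\|e_t\|=O(\eta\|p_t\|^2)$, and finish with a cosine perturbation inequality. Your explicit discussion of the uniformity of the spectral bounds and of why ``$\eta$ sufficiently small'' is needed to keep $H_{t+1}\succ 0$ is a useful addition that the paper leaves implicit.
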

Lemma~\ref{lemma:inter_step_gradient_align} guarantees that consecutive updates point in nearly the same direction, explaining the smooth, non-oscillatory progress when using a second-order optimizer.


\textbf{Intra-step alignment (Cooperation).}
Beyond stabilizing the trajectory, Newton also aligns the competing objectives within each step. Near a nondegenerate minimizer $\theta^*$, the component gradients linearize as $g_R \approx H_R(\theta - \theta^*)$ and $g_D \approx H_D(\theta - \theta^*)$~\cite{nocedal1999numerical}. {Although the raw gradients $g_R$ and $g_D$ may point in different directions, they share the underlying curvature of the model. Under the shared preconditioner (due to combined loss) approximating the inverse of the Hessian $H$, and the component Hessians ($H_R, H_D$) share sufficient structure with $H$ (if they are locally proportional or jointly diagonalizable), the preconditioner effectively rotates both gradients toward the common solution $\theta^*$. We detail this justification in Appendix~\ref{proof:prop_soap_lic}, leading to:}

\begin{proposition}[Newton aligns component steps near the optimum]
\label{prop:soap_lic}
{Under the structural conditions described above (proof in Appendix~\ref{proof:prop_soap_lic}),}
\begin{equation}
\lim_{\theta \to \theta^*} \mathcal{S} \!\big( p_R, p_D \big) = 1.
\end{equation}
\end{proposition}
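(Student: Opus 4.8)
The plan is to reduce the claim to the idealized Newton updates and then to a short spectral computation near $\theta^*$. First I invoke Theorem~\ref{theorem1soap} (and the derivation in Appendix~\ref{sec:soap_newton_approximation}) to replace SOAP's update vectors by their Newton approximations, so $p_R \approx -P g_R$ and $p_D \approx -P g_D$, where $P$ is the \emph{shared} preconditioner with $P \to (H^*)^{-1}$ as $\theta \to \theta^*$, $H = H_R + \lambda H_D$ is the Hessian of $\mathcal{L}_{\mathrm{R\text{-}D}}$, and $H^* := \nabla^2 \mathcal{L}_{\mathrm{R\text{-}D}}(\theta^*) \succ 0$ by nondegeneracy. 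Writing $\delta := \theta - \theta^*$ and $H_R^* := H_R(\theta^*)$, $H_D^* := H_D(\theta^*)$, the stated linearizations give $g_R = H_R^* \delta + o(\|\delta\|)$ and $g_D = H_D^* \delta + o(\|\delta\|)$, hence $p_R = -(H^*)^{-1} H_R^* \delta + o(\|\delta\|)$ and $p_D = -(H^*)^{-1} H_D^* \delta + o(\|\delta\|)$.

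Next I use the structural hypothesis, read as: the component curvatures at $\theta^*$ are locally proportional (equivalently, they share a diagonalizing basis on which their spectra are proportional), so there exist scalars $\alpha,\beta > 0$ with $H_R^* = \alpha H^*$ and $H_D^* = \beta H^*$; consistency forces $\alpha + \lambda\beta = 1$, and strict positivity holds when each component Hessian is itself nondegenerate at the optimum. Then $(H^*)^{-1} H_R^* = \alpha I$ and $(H^*)^{-1} H_D^* = \beta I$, so $p_R = -\alpha\delta + o(\|\delta\|)$ and $p_D = -\beta\delta + o(\|\delta\|)$, with the remainders uniform in the direction $\delta/\|\delta\|$ because the leading terms have norm $\Theta(\|\delta\|)$.

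Finally I evaluate the cosine: $\langle p_R, p_D\rangle = \alpha\beta\|\delta\|^2 + o(\|\delta\|^2)$ and $\|p_R\|\,\|p_D\| = |\alpha|\,|\beta|\,\|\delta\|^2 + o(\|\delta\|^2)$, so
\begin{equation}
\mathcal{S}(p_R, p_D) = \frac{\alpha\beta\,\|\delta\|^2 + o(\|\delta\|^2)}{|\alpha|\,|\beta|\,\|\delta\|^2 + o(\|\delta\|^2)} \;\longrightarrow\; \frac{\alpha\beta}{|\alpha|\,|\beta|} = 1 \qquad (\theta \to \theta^*),
\end{equation}
using $\alpha,\beta > 0$; the scale-invariance of $\mathcal{S}$ is precisely why letting $\delta \to 0$ suffices, since only the ratio of the leading $\Theta(\|\delta\|^2)$ terms survives.

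The main obstacle is pinning down the structural hypothesis so that the limit is genuinely $1$ rather than merely bounded below. Joint diagonalizability of $H_R$ and $H_D$ alone does not suffice: on a common eigenbasis the updates become $(\lambda_{R,i}/\lambda_i)\,v_i$ and $(\lambda_{D,i}/\lambda_i)\,v_i$, whose cosine is a spectrally weighted inner product equal to $1$ only when the ratios $\lambda_{R,i}/\lambda_{D,i}$ agree across all coordinates with $v_i \neq 0$, i.e. exactly when the spectra are proportional. So the proof should either adopt this proportional-spectrum (local-proportionality) version, or weaken the conclusion to $\liminf_{\theta\to\theta^*}\mathcal{S}(p_R,p_D)$ being at least that weighted cosine. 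A secondary technical point, needed to make the first paragraph rigorous, is that SOAP's $P$ is a trajectory-averaged quantity rather than the pointwise $H(\theta)^{-1}$; controlling $P g_R = (H^*)^{-1}H_R^*\delta + o(\|\delta\|)$ uniformly requires Theorem~\ref{theorem1soap} together with continuity of $H_R, H_D, H$ near $\theta^*$ and the nondegeneracy that keeps $(H^*)^{-1}$ and $P$ bounded.
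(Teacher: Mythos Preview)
Your proposal is correct and follows essentially the same approach as the paper: invoke the Newton approximation to get $p_R \approx -H^{-1}H_R\delta$ and $p_D \approx -H^{-1}H_D\delta$, use the local-proportionality hypothesis $H_R \approx \alpha H$, $H_D \approx \beta H$ with $\alpha,\beta>0$ to reduce both to scalar multiples of $-\delta$ up to $o(\|\delta\|)$, and conclude the cosine tends to $1$. Your closing discussion---that joint diagonalizability alone does not suffice unless the spectral ratios agree on the active coordinates, and the caveat about $P$ being a trajectory-averaged surrogate---matches the paper's own Remark almost point for point.
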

Intuitively, the preconditioner ensures that both $p_R$ and $p_D$ point toward the optimum along $-(\theta - \theta^*)$ up to vanishing error, ensuring that R and D are optimized cooperatively.

Together, these results highlight a central reason for second-order optimizer’s superiority: by aligning updates both across steps and between objectives, second-order optimizer ensures that progress made in one iteration is not undone in the next, and that rate and distortion are optimized in a more cooperative rather than adversarial manner. This dual alignment reduces optimization inefficiency, avoids oscillatory behavior common in first-order methods, and leads to faster, more stable convergence with better final R-D tradeoffs.


\textbf{Why Adam struggles.}
{In contrast, Adam's fundamental limitation is its diagonal constraint. As we formalize in Appendix~\ref{proof:adam_limitations} (Proposition~\ref{prop:adam_hessian_diag}), Adam locally approximates a diagonally preconditioned step: $p \propto -\operatorname{diag}(H)^{-1} g$. While this scales coordinates individually, it cannot utilize off-diagonal curvature information to \emph{rotate} the update vector. Because the conflict between rate and distortion is rarely axis-aligned, diagonal scaling is insufficient to align the gradients. Our analysis in Appendix~\ref{sec:adam_rd_conflict} demonstrates that at initialization, this can lead to orthogonal updates, while Appendix~\ref{sec:adam_alignment_local} shows that even near the optimum, Adam's updates can remain misaligned or adversarial (negative cosine similarity).}

\begin{figure}[htbp]
    \centering

    \subfloat[Intra-step score]{%
        \includegraphics[width=0.75\linewidth]{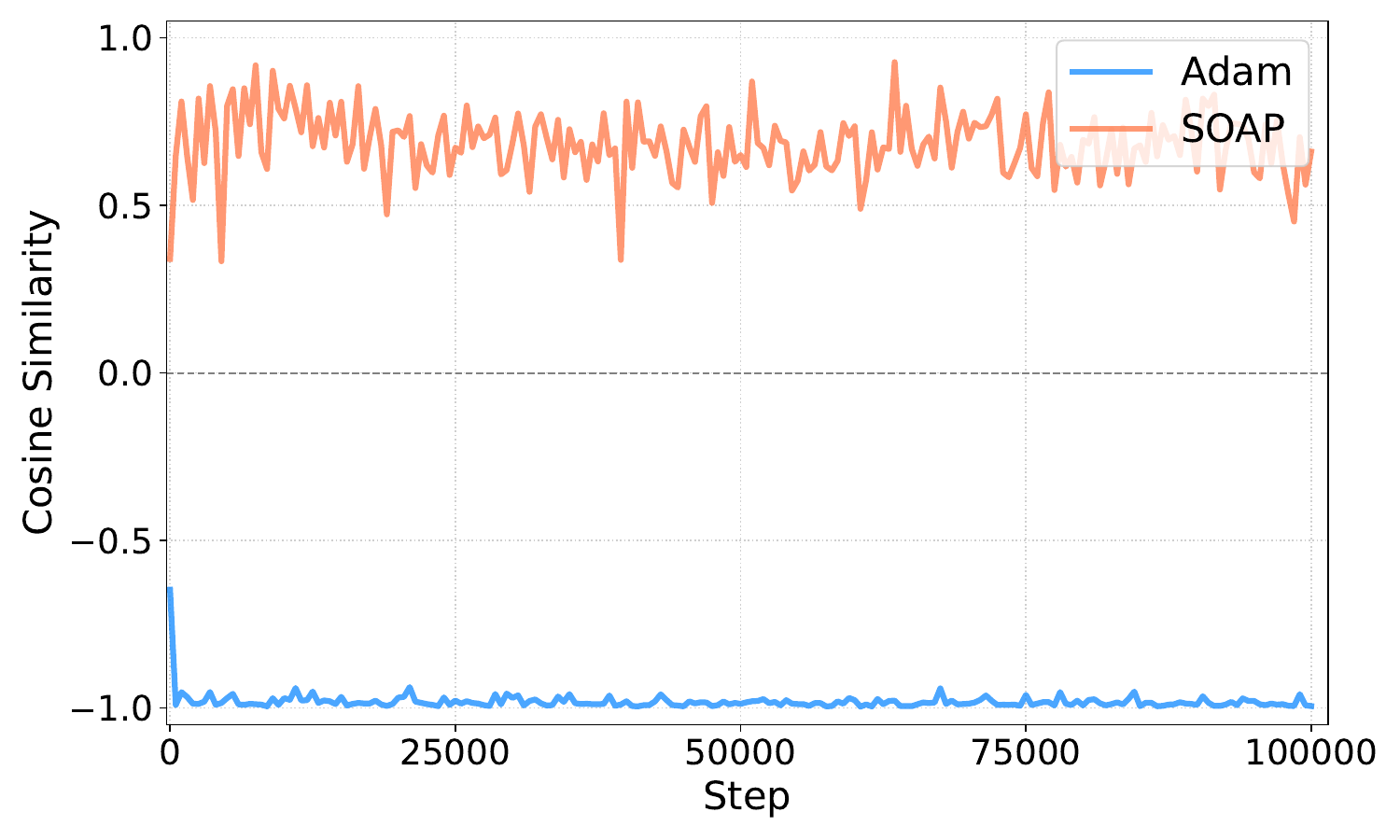}%
        \label{subfig:elic_intra}
    }
    \hfill
    \subfloat[Inter-step score]{%
        \includegraphics[width=0.75\linewidth]{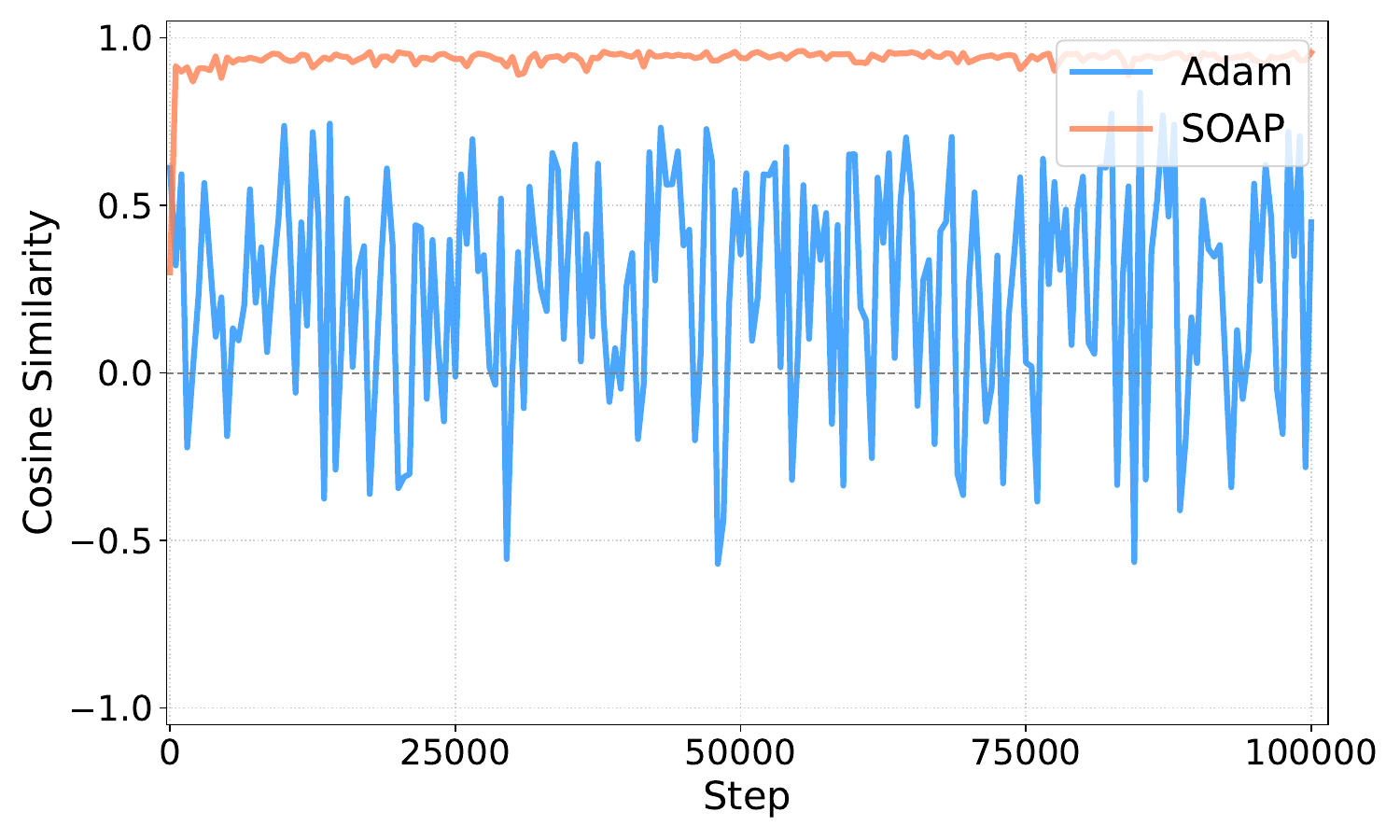}%
        \label{subfig:elic_inter}
    }

    \caption{\textbf{Evolution of intra-step and inter-step gradient scores} for ELIC trained with Adam vs.\ SOAP.
    SOAP achieves high intra-step and inter-step scores, while Adam exhibits negative intra-step scores and oscillatory inter-step scores, highlighting SOAP’s ability to suppress gradient conflicts.}
    \label{fig:grad_align_lic}
\end{figure}
\subsection{Empirical Validation}
To validate these theoretical predictions, we track the intra-step ($\mathcal{S}_{\mathrm{intra}}^t$) and inter-step ($\mathcal{S}_{\mathrm{inter}}^t$) scores for the ELIC model trained with Adam and SOAP. We initialize from a pretrained model to observe behavior near a local minimum, using a small learning rate ($1\mathrm{e}{-5}$) and $\lambda = 0.013$.

Fig.~\ref{fig:grad_align_lic} strongly supports our analysis:
\begin{itemize}
    \item \textbf{SOAP achieves high alignment:} SOAP maintains consistently high positive alignment for both metrics. The inter-step score remains near $1.0$, indicating a highly stable trajectory, while the intra-step score remains strongly positive, indicating cooperative optimization of rate and distortion. This is consistent with the Newton behavior described in Lemma~\ref{lemma:inter_step_gradient_align} and Proposition~\ref{prop:soap_lic}.
    \item \textbf{Adam exhibits significant conflict:} Adam shows low and highly oscillatory alignment. The intra-step score frequently dips toward $-1.0$ (strong opposition between rate and distortion updates), while the inter-step score fluctuates wildly around zero, indicating an unstable, inefficient trajectory. This confirms Adam's inability to resolve the inherent conflicts of the R-D objective characterized in Appendix~\ref{proof:adam_limitations},~\ref{sec:adam_rd_conflict}, and~\ref{sec:adam_alignment_local}.
\end{itemize}

These empirical findings substantiate our central claim: SOAP’s performance gains arise from resolving both intra- and inter-step gradient conflicts in rate-distortion optimization.

\section{Newton Suppresses Latent and Activation Outliers}
\label{sec:soap-outliers}
Beyond accelerating convergence and improving rate-distortion performance, we observe a second advantage: Second-order optimizer reduces extreme values (\emph{outliers}) in both latents and intermediate activations. Outlier suppression tightens entropy models and improves robustness to post-training quantization, where a large dynamic range is a primary failure mode~\cite{bondarenko2021understanding,dettmers2022gpt3,ashkboos2024quarot,nrusimha2024mitigating}.

\subsection{Outlier Measurement}
\textbf{Metrics.}
Following prior work on neural feature analysis~\cite{bondarenko2021understanding,elhage2023privileged,he2024understanding}, we quantify outliers using two complementary, scale-invariant statistics. Let $\rmX\in\mathbb{R}^{n\times d}$ denote latents or activations, rescaled such that the second moment $\mTwo{\rmX} \;\triangleq\; \frac{1}{nd}\,\|\rmX\|_F^2 \;=\; 1$. We define the root mean square (RMS) per channel as $s_j \!=\! \sqrt{\frac{1}{n}\sum_{\alpha=1}^n X_{\alpha j}^2}$. We use:
\begin{equation}
\begin{aligned}
\kurt{\rmX}
&= \frac{\tfrac{1}{d}\sum_{j=1}^d s_j^4}
        {\bigl(\tfrac{1}{d}\sum_{j=1}^d s_j^2\bigr)^2}, \\[4pt]
\maxmed{\rmX}
&= \frac{1}{n}\sum_{\alpha=1}^n
   \frac{\max_j \lvert X_{\alpha j}\rvert}
        {\operatorname{median}_j \lvert X_{\alpha j}\rvert}.
\end{aligned}
\end{equation}
$\kurt{\rmX}$ measures the \emph{tailedness} (heavy-tailed distributions imply more outliers) of energies, while $\maxmed{\rmX}$ captures per-sample extreme values relative to typical magnitudes.

\subsection{How Newton Suppresses Outliers}
The mechanism of outlier suppression lies in how the Newton updates interact with the underlying feature distributions.

\textbf{Newton preconditioning redistributes update energy.}
SOAP applies a layerwise quasi-Newton step~\cite{gupta2018shampoo,anil2020scalable,vyas2024soap}
\begin{equation}
\Delta W \;=\; -\eta\,H_W^{-1} G,
\label{eq:newton-step-main}
\end{equation}
where $G$ is the gradient and $H_W\!\succ\!0$ is an SPD curvature proxy (Appendix~\ref{sec:soap_newton_approximation}). In the eigenbasis $H_W = U\Lambda U^\top$, SOAP scales principal directions by $\Lambda^{-1}$ and rotates back via $U$, \emph{coupling channels within a layer}. This rotation+rescaling compresses per-direction step size dispersion compared to  Adam/AdaFactor~\cite{kingma2014adam,shazeer2018adafactor}, limiting runaway growth along isolated high-variance directions that otherwise produce outliers.

\textbf{A conserved-quantity view from signal propagation.}
We can further understand this phenomenon through the lens of Signal Propagation theory~\cite{schoenholz2016deep,noci2022signal}, which studies the input-wise Gram matrix ($\sigmai=\rmX\rmX^\top$) and how $\sigmai$ evolves in deep NNs. {A key property is that the total energy of the feature correlations is conserved under rotation. Specifically, using the cyclicity of the trace ($\mathrm{Tr}(\SigF^2)=\mathrm{Tr}(\sigmai^2)$)~\cite{petersen2008matrix}, we derive an identity in Appendix~\ref{appendix:soap-outliers} that links feature kurtosis to cross-channel correlations:}
\begin{equation}
\small
\underbrace{n^2 d\cdot \kurt{\rmX}}_{\text{Diagonal (Kurtosis)}} \;+\; \underbrace{\sum_{i\neq j} (\sigmaf)_{ij}^2}_{\text{Off-Diagonal (Cross-Channel)}}
\;=\; \underbrace{\sum_{\alpha,\beta} (\sigmai)_{\alpha\beta}^2}_{\text{Input Correlation Energy}}.
\label{eq:kurt-decomp-main}
\end{equation}

\textit{Intuition.} 
The right-hand side, $\mathrm{Tr}(\sigmai^2)$, measures the total “input correlation energy”. 
When inputs are correlated, $\sigmai$ develops large off-diagonal entries, and this energy increases. 
Because the trace identity enforces conservation, the extra energy must manifest in the feature statistics. 
{A diagonal optimizer like Adam is inefficient at moving energy into the off-diagonal terms ($(\sigmaf)_{ij}^2$). Consequently, it forces the energy into the diagonal term, inflating the kurtosis and creating outliers. In contrast, SOAP rotates the basis, allowing it to redistribute this correlation energy into the off-diagonal terms, thereby keeping the kurtosis low.}
In essence: \emph{Adam isolates outlier items, while SOAP diffuses variance across directions}.

\textbf{Small-step bound.}
{We can further quantify this by analyzing how the kurtosis grows during a single update step. Kurtosis is driven by the fourth moment ($L_4$ norm) of the parameter updates. Because Adam scales coordinates individually, it tends to produce axis-aligned updates that maximize this norm. SOAP, however, computes updates in a curvature-aligned eigenbasis and rotates them back, effectively ``diffusing'' the update energy across multiple physical channels.
In Appendix~\ref{appendix:soap-outliers}, we prove that the {dominant second-order contribution} to kurtosis growth for SOAP is upper-bounded:}
\begin{equation}
{\mathbb{E}\!\left[\Delta \kurt{\rmX}\right]_{\mathrm{SOAP}}
\;\le\;
\mathbb{E}\!\left[\Delta \kurt{\rmX}\right]_{\mathrm{Diag}}.}
\end{equation}
{This inequality (which holds up to negligible $O(\eta^3)$ terms) guarantees that diagonal optimizers represent the worst-case baseline for outlier generation. In non-diagonal landscapes, SOAP's rotational mixing ensures strictly lower growth.}

\subsection{Empirical Validation}
We measure $\kurt{\rmX}$ and $\maxmed{\rmX}$ for latents {$z$, which is the feature after the last layer of the encoder,} and feature activations\footnote{Randomly selected as the fourth layer at the encoder.} on Kodak, $\lambda=0.013$. 
PTQ robustness is assessed via $\Delta$BD-Rate (\%, lower is better) across all $\lambda$ for W8A8 (int8 weights and activations) quantization, using AdaRound~\cite{nagel2020up}. {Activation quantization is implemented as a non-learnable, dynamic channel-wise quantization approach that is applied on-the-fly during inference following~\cite{shi2023rate}. More specifically, for each channel, it computes the minimum and maximum values from the current activation data, then uses these to define an asymmetric 8-bit uniform quantization range where the zero-point equals the channel minimum and the scale factor is determined by the range divided by 255. The floating-point values are then quantized by subtracting the zero-point, dividing by the scale, rounding to the nearest integer, clamping to the 0-255 range, and finally dequantizing back by multiplying by the scale and adding the zero-point. Critically, this entire process is non-learnable as activation quantization serves as a fixed, statistical operation applied during each forward pass.}%
\footnote{ 
We use AdaRound for illustration, following implementation at~\url{https://github.com/Eric-qi/RDO-PTQ}; more advanced PTQ methods~\cite{shi2023rate} are likely to yield even stronger results.} 
We also visualize the latent scaled deviation map~\cite{xie2021enhanced,feng2025linear} for the ELIC model between $\hat{y}$ and $y$ (Fig.~\ref{fig:latent_elic}), defined as
$
\varepsilon = \tfrac{|\hat{y} - y|}{\sum y},
$
where lower values denote fewer outliers.


\begin{table*}[htbp]
\centering
\small
\caption{{Outlier metrics and PTQ robustness: Adam vs.\ SOAP. Metrics averaged on Kodak ($\lambda=0.013$). PTQ robustness reported as $\Delta$BD-Rate (\%); lower is better.}}
\label{tab:outliers}
\resizebox{0.65\linewidth}{!}{
\begin{tabular}{l|cc|cc|c}
\hline \hline 
\multirow{2}{*}{{Model + Optimizer}} 
 & \multicolumn{2}{c|}{{Latents}} 
 & \multicolumn{2}{c|}{{Activations}} 
 & \multicolumn{1}{c}{{W8A8 PTQ}} \\
 & {$\kurt{\rmX}$ $\downarrow$}& {$\maxmed{\rmX}$ $\downarrow$}& {$\kurt{\rmX}$$\downarrow$} & {$\maxmed{\rmX}$ $\downarrow$}& {$\Delta$BD-Rate $\downarrow$} \\
\hline 
{ELIC + Adam}   & {151.76} & {194.65} & {64.96} & {48.34} & {7.67\%} \\
{ELIC + SOAP}   & {\textbf{128.89}} & {\textbf{99.25}} & {\textbf{4.28}} & {\textbf{8.01}} & {\textbf{5.96\%}} \\ \hline 
{TCM + Adam}    & {127.99} & {182.32} & {12.26} & {18.27} & {7.75\%} \\
{TCM + SOAP}    & {\textbf{93.07}} & {\textbf{89.45}} & {\textbf{1.10}} & {\textbf{4.36}} & {\textbf{5.66\%}} \\ \hline 
{LALIC + Adam}  & {142.25} & {221.10} & {108.47} & {94.31} & {8.06\%} \\
{LALIC + SOAP}  & {\textbf{80.80}} & {\textbf{46.37}} & {\textbf{32.27}} & {\textbf{24.13}} & {\textbf{6.02\%}} \\\hline 
{DCAE + Adam}   & {133.32} & {178.69} & {23.01} & {29.38} & {8.98\%} \\
{DCAE + SOAP}   & {\textbf{101.9}} & {\textbf{90.70}} & {\textbf{1.57}} & {\textbf{5.25}} & {\textbf{6.98\%}} \\
\hline \hline 
\end{tabular}}
\end{table*}

Tab.~\ref{tab:outliers} reports outlier metrics and PTQ robustness across four representative architectures. 
SOAP consistently yields substantially lower latent and activation kurtosis as well as reduced $\maxmed{\rmX}$ values compared to Adam. 
For example, on ELIC, SOAP reduces latent kurtosis from $151.76$ to $128.89$ and activation kurtosis from $64.96$ to $4.28$, yielding a nearly $2\%$ BD-Rate gain under W8A8 quantization. 
Similar improvements hold across TCM, LALIC, and DCAE, demonstrating that SOAP’s outlier suppression effect is architecture-agnostic. 
In the challenging W8A8 setting, quantization penalties consistently drop by about $2\%$ BD-Rate across models.

Fig.~\ref{fig:latent_elic} shows scaled deviation maps for the ELIC model. 
Under Adam, latents exhibit scattered extreme deviations (bright orange patches), reflecting concentrated outliers in a few positions. 
SOAP-trained latents, by contrast, display smoother and more uniform deviation maps with significantly lower peak values, directly corroborating the statistical improvements.

These empirical findings support the theoretical perspective in Sec.~\ref{sec:soap-outliers}: 
By coupling channels via Newton-like scaling and rotations, SOAP redistributes variance across directions rather than concentrating it in a few, preventing outlier formation. 
This yields more regular feature statistics, improving entropy modeling and stabilizing activations, with the downstream benefit of enhanced PTQ robustness. 
Thus, SOAP not only accelerates training and improves R-D trade-offs but also produces models that are substantially easier to deploy on constrained hardware.







\begin{figure*}[htbp]
\centering

\begin{minipage}{0.85\linewidth} 
\centering

    \newcommand{\wA}{0.248} 
    \newcommand{\wB}{0.30}  

    \subfloat[kodim01]{%
      \raisebox{0.7mm}{%
        \includegraphics[width=\wA\linewidth]{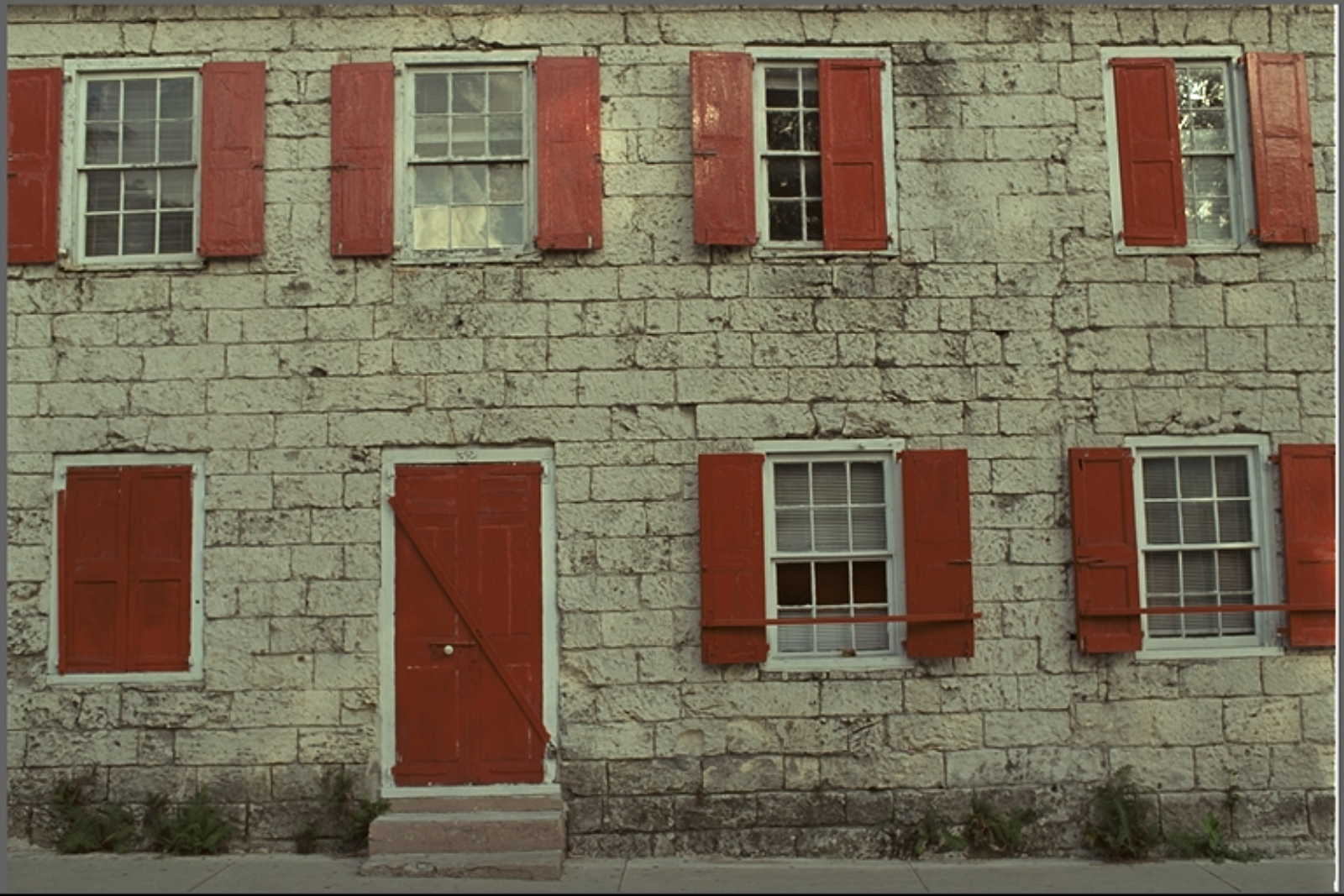}%
      }%
    }\hfill
    \subfloat[Adam]{%
      \includegraphics[width=\wB\linewidth]{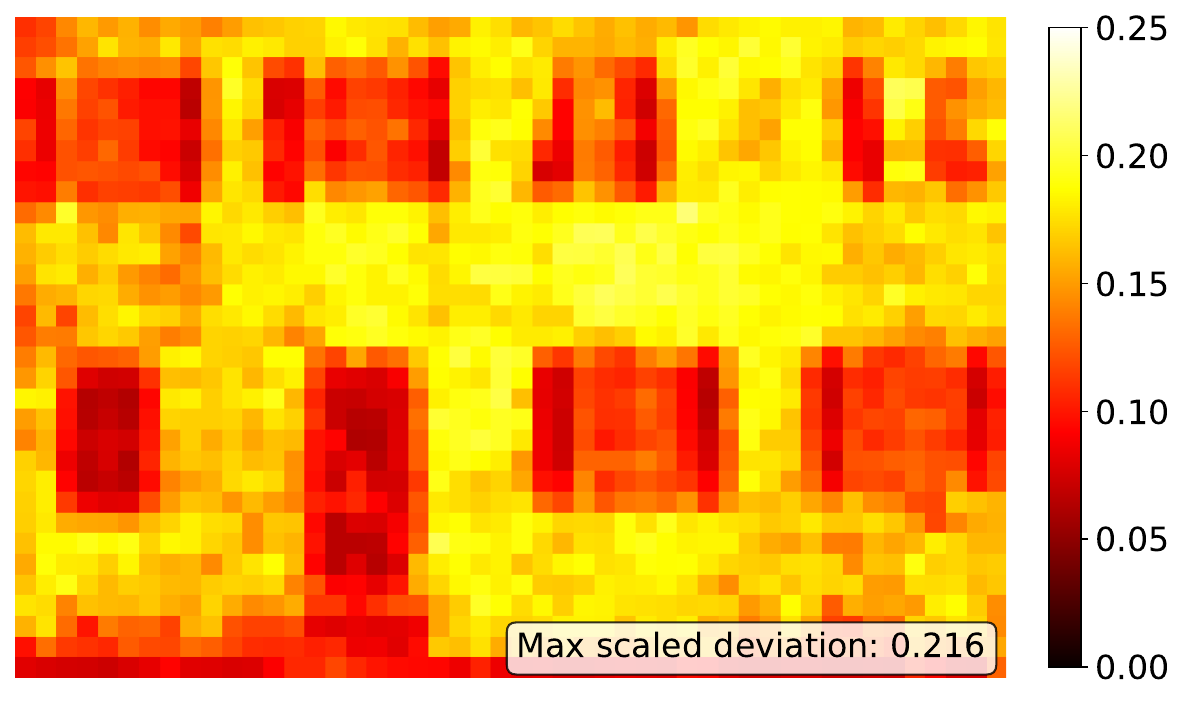}%
    }\hfill
    \subfloat[SOAP]{%
      \includegraphics[width=\wB\linewidth]{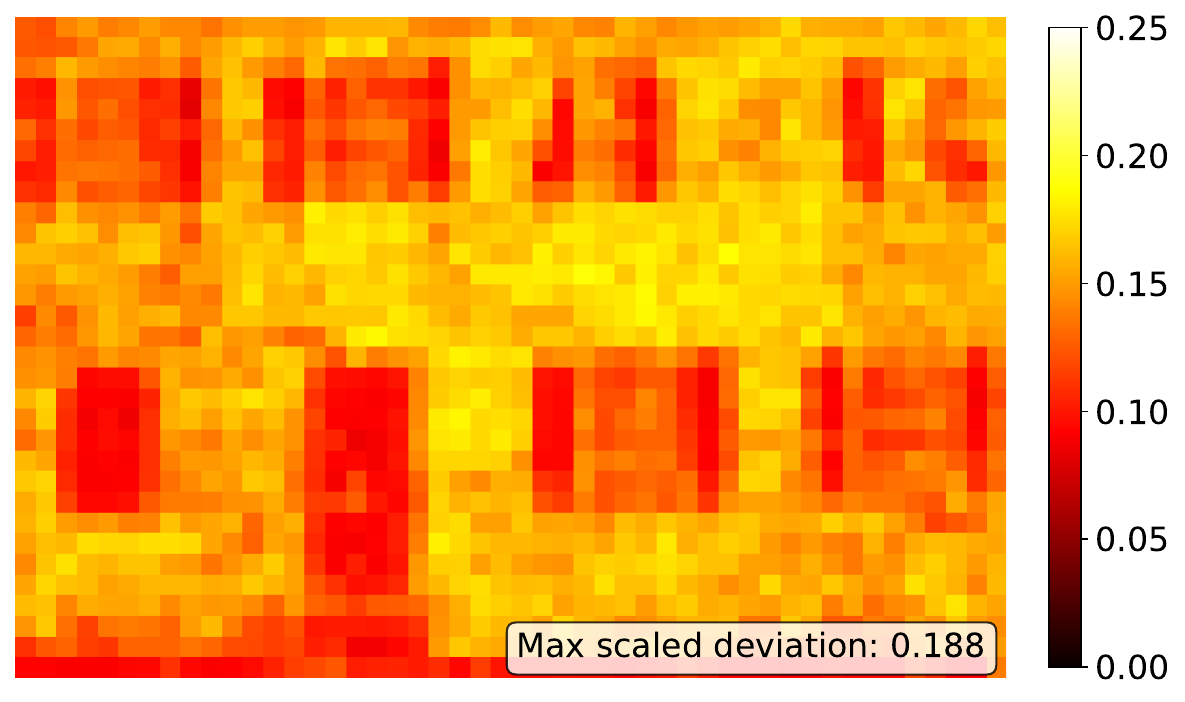}%
    }

    \vspace{2pt}

    \subfloat[CLIC17]{%
      \raisebox{0.7mm}{%
        \includegraphics[width=\wA\linewidth]{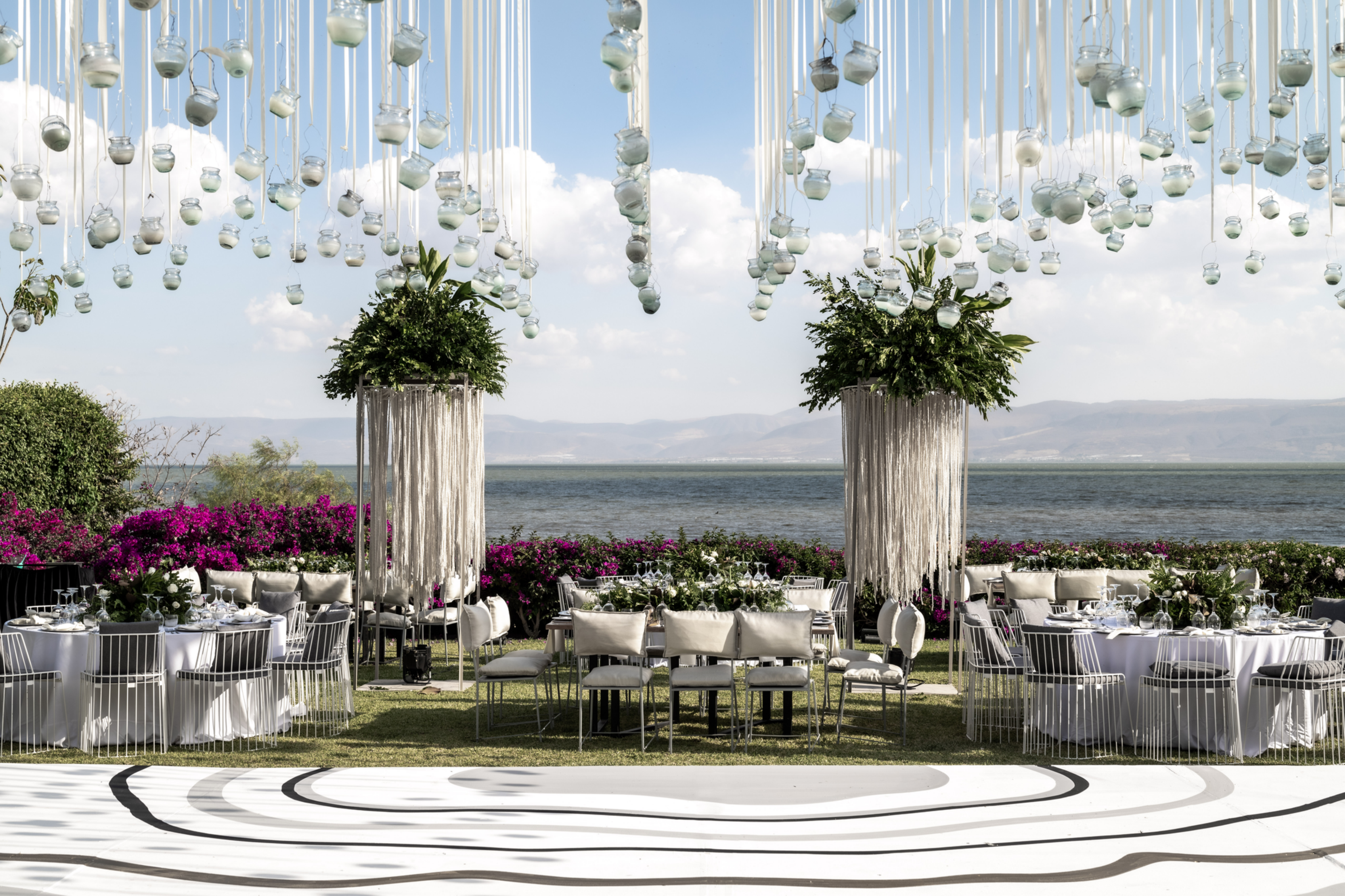}%
      }%
    }\hfill
    \subfloat[Adam]{%
      \includegraphics[width=\wB\linewidth]{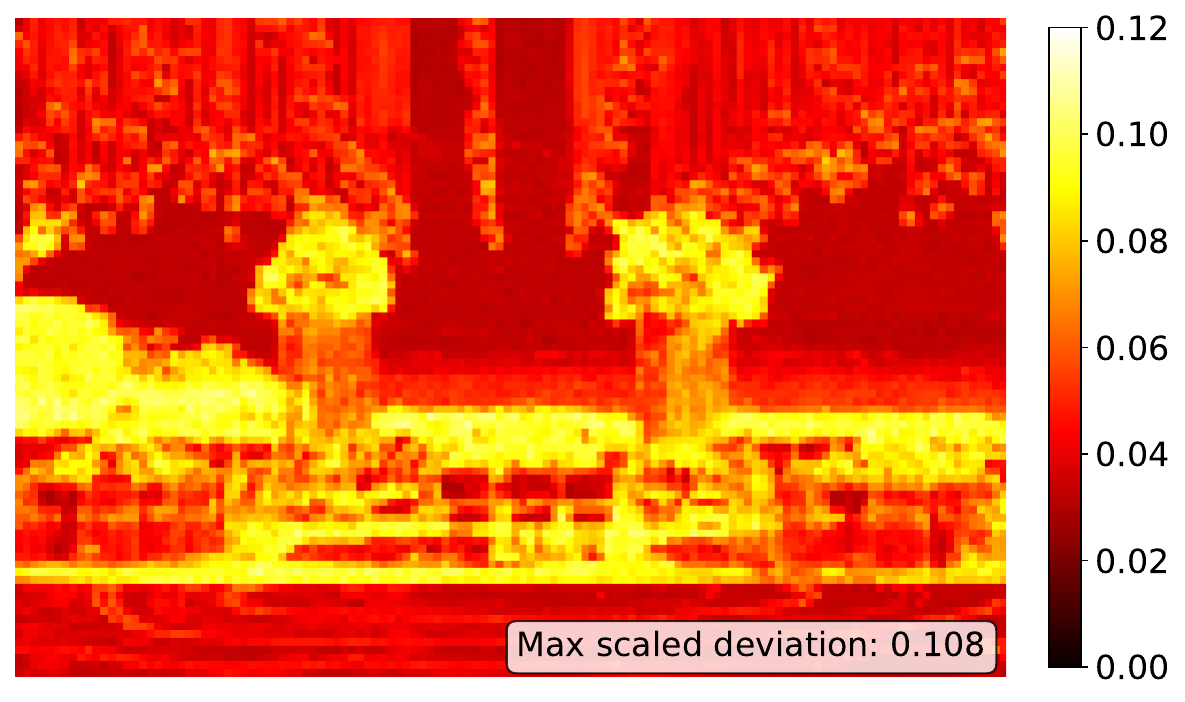}%
    }\hfill
    \subfloat[SOAP]{%
      \includegraphics[width=\wB\linewidth]{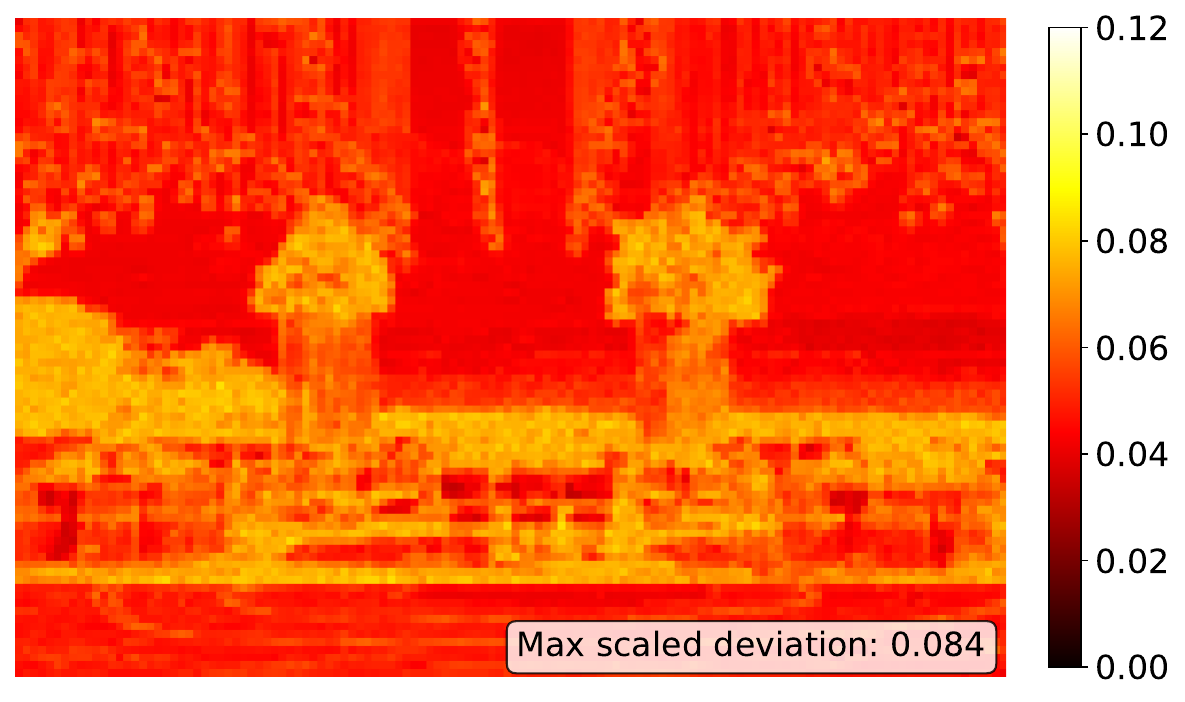}%
    }

    \caption{\textbf{Scaled deviation maps for ELIC latent representations.}
    Each row shows the input image (left), latent scaled deviation with Adam (middle), and SOAP (right).
    SOAP consistently suppresses extreme values and yields lower maximum scaled deviation.
    \textit{(Best viewed zoomed in.)}}
    \label{fig:latent_elic}

\end{minipage}
\end{figure*}

\section{Discussion}
\subsection{Comparison with Other Methods}
\label{subsec:compare}
As discussed in Sec.~\ref{sec:intro} and ~\ref{subsec:related}, the second-order optimizer can be compared against and discussed with (i) \emph{acceleration methods}, such as Auxiliary Transform (AuxT)~\cite{li2025on} and CMD-LIC~\cite{zhang2025accelerating}; and (ii) \emph{gradient-conflict mitigation methods}, such as Balanced-RD~\cite{zhang2025balanced}. Since the released Auxiliary Transform code\footnote{\url{https://github.com/qingshi9974/AuxT}} is implemented for TCM, we adopt the TCM model for fair comparison. Balanced-RD results are reproduced following the official implementation\footnote{\url{https://gitlab.com/viper-purdue/balanced-rd}} ($\gamma$ values are swept to find the best results.). {Additionally, to verify the additiveness of second-order optimizer to other acceleration techniques, we further applied the SOAP to AuxT, termed as AuxT + SOAP.} All the experiments follow the protocol in Sec.~\ref{sec:empirical}.

\begin{table*}[h]
    \centering
    \small
    \caption{Computational Complexity and BD-Rate Comparison on TCM-S}
    \resizebox{0.85\linewidth}{!}{
    \begin{tabular}{c|c|c|c|c|c|c|c}
        \hline \hline
        \multicolumn{2}{c|}{\multirow{2}{*}{Method}} & \multirow{2}{*}{Steps-to-Adam $\downarrow$} & \multirow{2}{*}{Time-to-Adam $\downarrow$} & \multicolumn{4}{c}{BD-Rate (\%) $\downarrow$} \\ \cline{5-8}
        \multicolumn{2}{c|}{} & & & Kodak & Tecnick & CLIC2022 & Avg. \\
        \hline 
        \multirow{5}{*}{\makecell{TCM-S\\~\cite{liu2023learned}}}
        & + Adam & 1 & 1 & 0\% & 0\% & 0\% & 0\% \\
        & + AuxT~\cite{li2025on} & 0.43 & 0.46 & -1.11\% & -1.24\% & -1.66\% & -1.34\% \\
        & + CMD-LIC~\cite{zhang2025accelerating} & 0.49 & 0.50 & -0.47\% & -0.55\% & -0.68\% & -0.57\% \\
        & + Balanced-RD~\cite{zhang2025balanced} & 0.67 & 0.81 & -1.37\% & -1.91\% & -1.87\% & -1.71\% \\
        & + SOAP & {0.28} & {0.39} & {-2.86\%} & {-2.40\%} & {-3.01\%} & {-2.76\%} \\
        & {+ AuxT + SOAP }& {\textbf{0.23}} & {\textbf{0.35}} &\textbf{ {-2.97\%}}  & \textbf{ {-2.53\%}} & \textbf{ {-3.22\%}} & \textbf{ {-2.91\%}} \\
        \hline \hline
    \end{tabular}
    }
    \begin{tablenotes}
    \item {\bf Training Conditions}: 1 $\times$ NVIDIA H100 GPU, 2 $\times$ Intel Xeon Platinum 8480+ CPU, 1TB RAM.
    \textbf{Bold} indicates the best performance. The ``Avg.'' column reports the mean BD-Rate across Kodak, Tecnick, and CLIC2022.
    \end{tablenotes}
    \label{tab:bdrate_comp}
\end{table*}

\begin{figure}[t]
\centering
\newcommand{\mywidth}{0.47}

\subfloat[Steps]{%
    \includegraphics[width=\mywidth\linewidth]{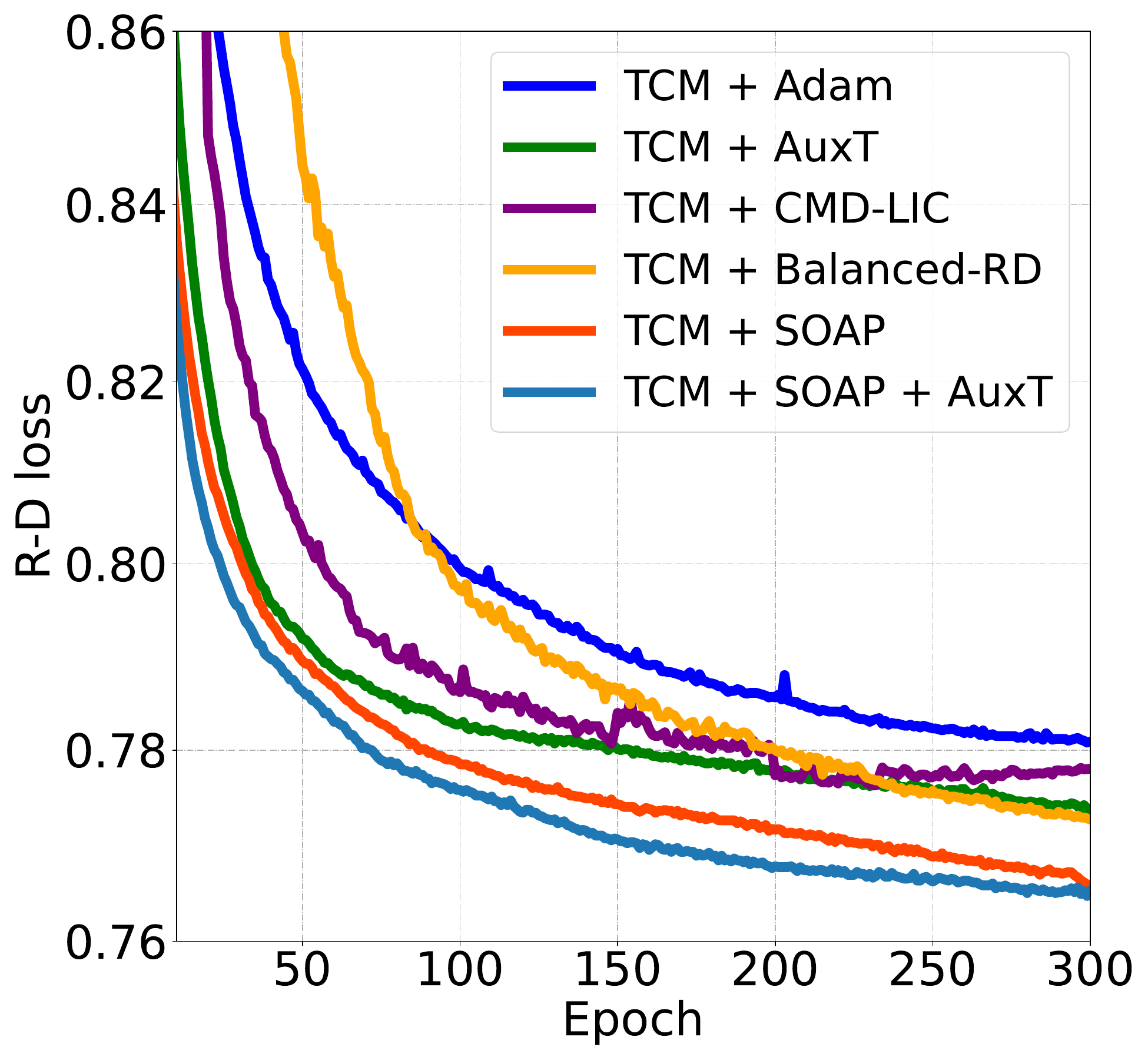}%
    \label{subfig:elicrd}
}
\hfill
\subfloat[Wall-time]{%
    \includegraphics[width=\mywidth\linewidth]{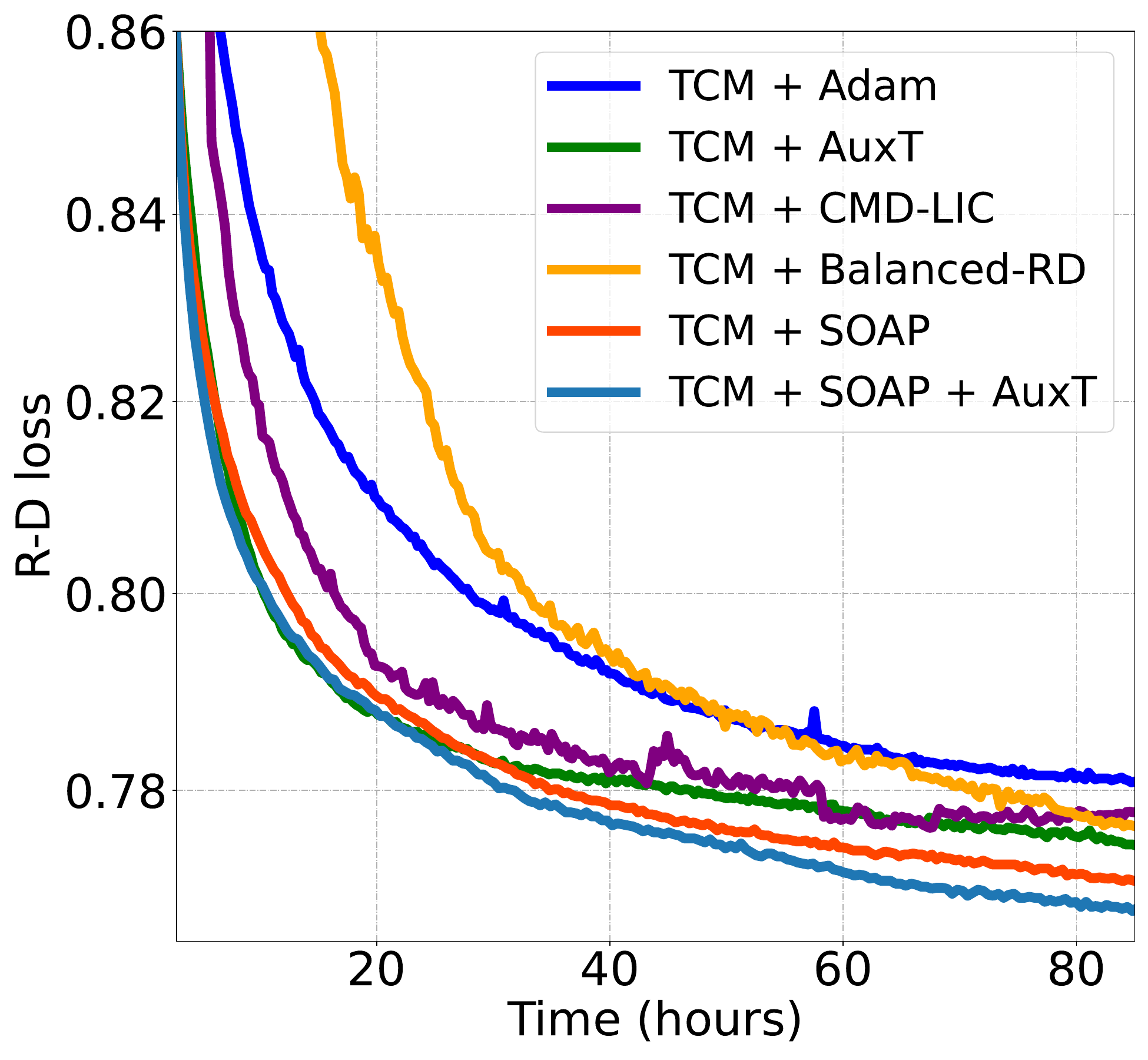}%
    \label{subfig:tcmrd}
}

\caption{\textbf{Comparison of Testing Loss: Epochs/Wall-time vs. R-D Loss.} The first 10 epochs are omitted for better visualization.
The SOAP optimizer demonstrates significantly faster convergence compared to Adam, AuxT, CMD-LIC, and Balanced-RD,
and the AuxT + SOAP combination further accelerates convergence. SOAP not only accelerates training but also achieves
a lower final R-D loss. Evaluation is performed on the Kodak dataset with $\lambda = 0.013$; the R-D loss is computed as
$\lambda \cdot 255^2 \cdot \text{MSE} + \text{Bpp}$.}
\label{fig:comp_tcm}
\end{figure}

{
Tab.~\ref{tab:bdrate_comp} and Fig.~\ref{fig:comp_tcm} reveal a clear trend: while existing acceleration methods (AuxT, CMD-LIC) and gradient-conflict mitigation (Balanced-RD) provide modest gains, SOAP consistently delivers stronger improvements in both convergence speed and final R-D performance. On TCM-S, SOAP alone reduces the number of steps and wall-clock time to reach Adam's performance by about $72\%$ and $61\%$, respectively, compared to $51$--$57\%$ step reductions for AuxT and CMD-LIC and even slower convergence for Balanced-RD. SOAP also outperforms Balanced-RD by more than $1\%$ BD-Rate on average across Kodak, Tecnick, and CLIC2022. Moreover, combining SOAP with AuxT (AuxT + SOAP) yields the best overall performance, further reducing the steps- and time-to-Adam ratios to $0.23$ and $0.35$, and improving the average BD-Rate to $-2.91\%$. These results indicate that SOAP is not only effective on its own but also complementary to existing acceleration techniques. Unlike prior approaches, SOAP requires no auxiliary networks, progressive parameter freezing, loss reweighting, or extensive hyperparameter tuning, making it easy to integrate into existing training pipelines. Overall, the empirical evidence supports that incorporating second-order curvature is a direct and effective way to accelerate training and mitigate gradient conflicts in learned image compression.
}
\subsection{{Comparison with Other Optimizers}}
\label{sec:other_optimizers}

{To contextualize the performance of SOAP, we evaluate it against a broader spectrum of optimization strategies, ranging from basic first-order methods to other advanced second-order approximations. The results are visualized in Fig.~\ref{fig:comp_optimizers}. Please note that for the compared optimizers, the hyperparameters (lr, momentum, and update frequency) are swept to get the best possible results.}

\begin{figure}[htbp] 
\newcommand{\mywidth}{0.5}
\centering 
\subfloat[Epochs vs. R-D Loss]{
    \includegraphics[width=\mywidth\linewidth]{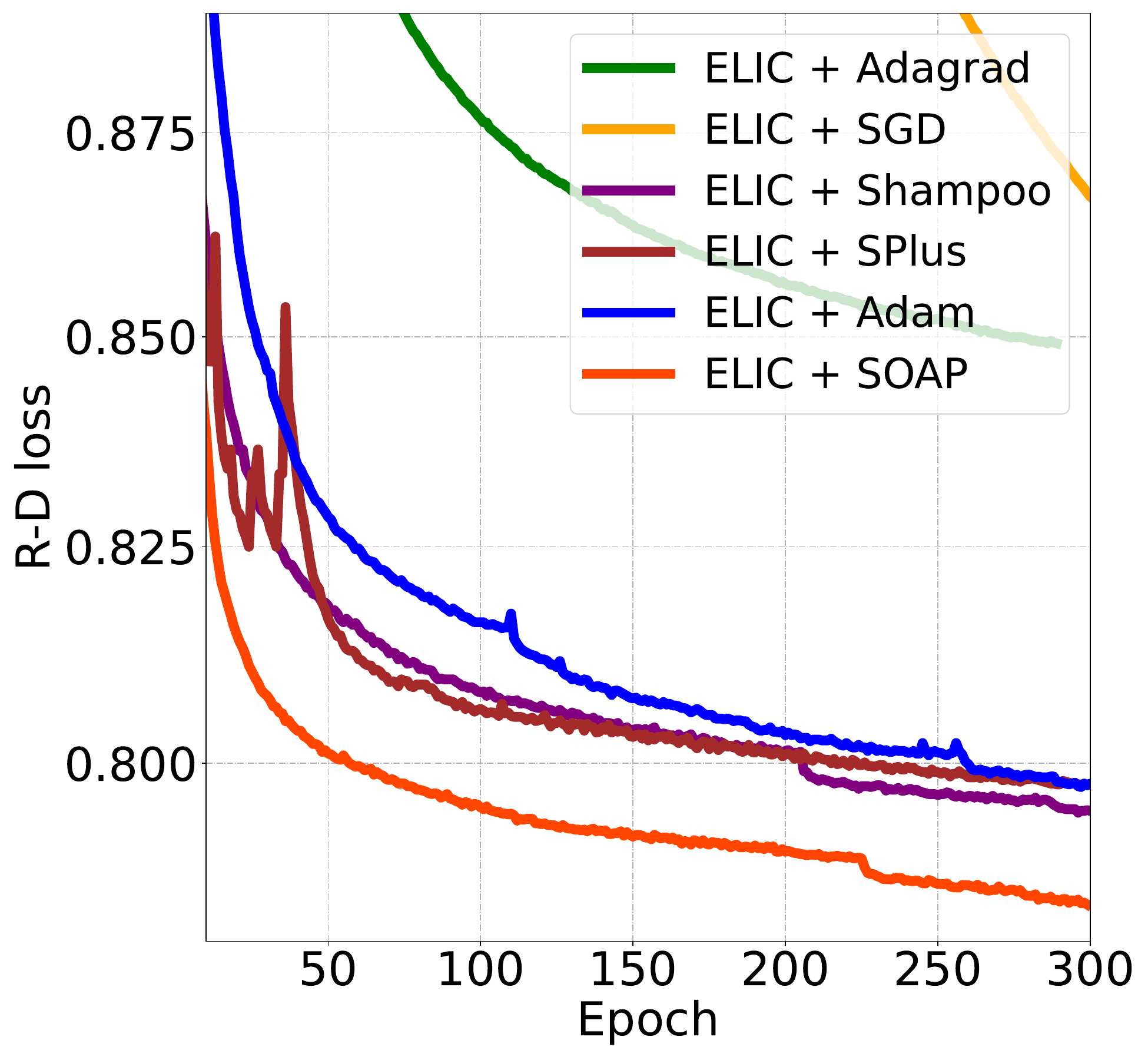}
    \label{subfig:opt_comp}
}
\caption{
    {\textbf{Comparison of Testing Loss for Various Optimizers.} We compare SOAP against SGD (First-Order), Adagrad (Diagonal Root-Inverse), Shampoo (Structured Root-Inverse), and Adam. The first 10 epochs are omitted for clarity. Evaluation is conducted on the Kodak dataset with $\lambda = 0.013$.}
}
\label{fig:comp_optimizers} 
\end{figure}

{\textbf{First-Order Methods.} SGD represents the baseline with no curvature information and no additional information estimation. Without the ability to rescale or rotate gradients based on the loss landscape geometry, it is theoretically unable to mitigate the gradient conflicts inherent in the R-D objective. Empirically, as expected, we observe that SGD performs worse than Adam and fails to reach a competitive rate-distortion performance within the same training period~\footnote{It is widely known that SGD requires much more steps to converge}.}

{\textbf{Root-Inverse Methods ($H^{-1/2}$).} Another class of second-order optimizers approximates the inverse square root of the Hessian ($H^{-1/2}$) rather than the full inverse ($H^{-1}$) used by Newton methods. Theoretically, the full inverse is required to completely ``whiten'' the local landscape into a spherical shape where gradient alignment is maximized (see Appendix~\ref{sec:soap_newton_approximation}). The square root inverse only partially corrects the curvature, which limits its ability to fully resolve intra-step conflicts.}
\begin{itemize}
    \item {\textbf{Adagrad}~\cite{duchi2011adaptive} approximates a diagonal $H^{-1/2}$ using the sum of squares of gradients. While it provides adaptive scaling, its diagonal formulation lacks the off-diagonal information to rotate updates and the full-inverse scaling to whiten them. Empirically, we found its performance consistently outperformed by Adam and SOAP, which is as expected due to the simple design of Adagrad.}
    
    \item \textbf{Shampoo~\cite{gupta2018shampoo}/ SPlus~\cite{frans2025stable}}  utilize Kronecker products to approximate a structured $H^{-1/2}$. While they capture more correlations than Adagrad, the root-inverse formulation still falls short of the alignment offered by the full inverse. In our experiments, Shampoo/SPlus achieved slightly faster convergence than Adam but remained slower and less effective than SOAP. Furthermore, we observed the instability issues~\cite{anil2020scalable}; these methods required careful tuning and gradient crafting to avoid divergence, whereas SOAP served as a stable drop-in replacement.
\end{itemize}

{\textbf{Muon (Momentum Orthogonal Optimizer).} Muon is an emerging optimizer designed for transport in LLMs that also conceptually approximates an orthogonalizing $H^{-1/2}$ update\footnote{\url{https://kellerjordan.github.io/posts/muon/}}. However, it faces specific structural challenges in LIC:}
\begin{enumerate}
    \item {\textbf{Dimensionality Mismatch:} Muon is defined for 2D parameters (matrices). For 1D parameters (e.g., biases), it falls back to AdamW. Crucially, LIC models rely heavily on 4D Conv kernels ($C_{out} \times C_{in} \times K \times K$). To apply Muon, these kernels are flattened into 2D (e.g., $C_{out} \times (C_{in} \cdot K \cdot K)$), disrupting the spatial inductive bias.}
    \item \textbf{{Divergence:}} {Despite extensive hyperparameter tuning (learning rates, momentum, and flattening strategies), we were unable to achieve stable convergence with Muon in the setting of learned image compression. We hypothesize that Muon's specific orthogonalization constraints may conflict with the initialization or dynamic range requirements of LIC modules. Future research is needed to adapt such constraints to convolutional architectures.}
\end{enumerate}
\subsection{{Will a longer training period make any difference?}}
\label{subsec:longtrain}

{To ensure that the superior performance of SOAP is not simply due to the optimizer requiring more training steps to converge, we conducted ablation studies with extended training durations (up to 1000 epochs) using the ELIC model. The results are summarized in Tab.~\ref{tab:long_training}.}

{First, we observed that extending training beyond 300 epochs yields negligible improvements. This is because the \texttt{ReduceLROnPlateau} scheduler monitors the validation loss; by epoch 300, the learning rate has typically decayed to values less than $5 \times 10^{-6}$. At this magnitude, the optimization updates become small, and the model has effectively reached a stationary point. Consequently, training for 1000 epochs results in a statistically insignificant BD-Rate improvement compared to the 300-epoch baseline.}

{Second, to investigate if the specific choice of scheduler limited the baseline's convergence, we implemented a ``Half Constant + Cosine'' scheduler over 300 and 500 epochs. In this setting, the learning rate is held constant at the initial value for 150 or 250 epochs before undergoing cosine decay. Even with this prolonged period of high learning rate, the final converged rate-distortion performance did not show significant differences compared to the standard setting. These results confirm that the performance gap between SOAP and Adam is fundamental to how they navigate the optimization landscape, specifically, SOAP's ability to resolve gradient conflicts, rather than a result of insufficient training time for the baseline.}

\begin{table}[htbp]
\centering
\small
\caption{Comparisons of ELIC performance trained with different durations and schedulers. Evaluated on Kodak.}
\label{tab:long_training}

\resizebox{\linewidth}{!}{%
\begin{tabular}{l|c|c|c}
\toprule
Optimizer & Epochs & Scheduler & BD-Rate vs. Baseline \\
\midrule
Adam (Baseline) & 300 & ReduceOnPlateau & 0.00\% \\
Adam & 1000 & ReduceOnPlateau & -0.02\% \\
Adam & 300 & Half Constant + Cosine & -0.05\% \\
Adam & 500 & Half Constant + Cosine & -0.10\% \\
\midrule
SOAP & 300 & ReduceOnPlateau & \textbf{-3.49\%} \\
SOAP & 1000 & ReduceOnPlateau & \textbf{-3.51\%} \\
\bottomrule
\end{tabular}
}
\end{table}

\subsection{A Preliminary Exploration for Learned Video Compression}
Since our analysis is not closely constrained by image sources, we believe it is generally applicable to R-D problems, such as video compression. To further demonstrate the effectiveness and the generalization of SOAP and our analysis, we also performed a preliminary exploration on DCVC~\cite{li2021deep}. 

Since the DCVC training code is not open-sourced, we use an online reproduced version available at ~\url{https://gitlab.com/viper-purdue/opendcvcs}.

\textbf{Training Data:}  
We use the training partition of the Vimeo-90k  dataset~\cite{xue2019video} with randomly cropped $256 \times 256$ patches.

\textbf{Testing Data:}  
We evaluate our models on widely used benchmarks: HEVC Class B~\cite{jvetj1010}; UVG~\cite{uvg2021}; MCL-JCV~\cite{wang2016mcl}.

\textbf{Test Conditions:} 
We test 96 frames, and the intra period is set as 32. All the frames are converted to the YUV444 color space by the ITU-R BT.709 transform matrix, and distortion loss is a weighted version in both RGB and YUV420 color spaces~\cite{jia2025towards}. We follow the progressive training strategy~\cite{li2021deep}. For illustration, we only train $\lambda=256$ models.

\textbf{Results:}
\begin{figure}[htbp]
\centering
\newcommand{\mywidth}{0.43} 


\vspace{0.6em} 

\includegraphics[width=\mywidth\linewidth]{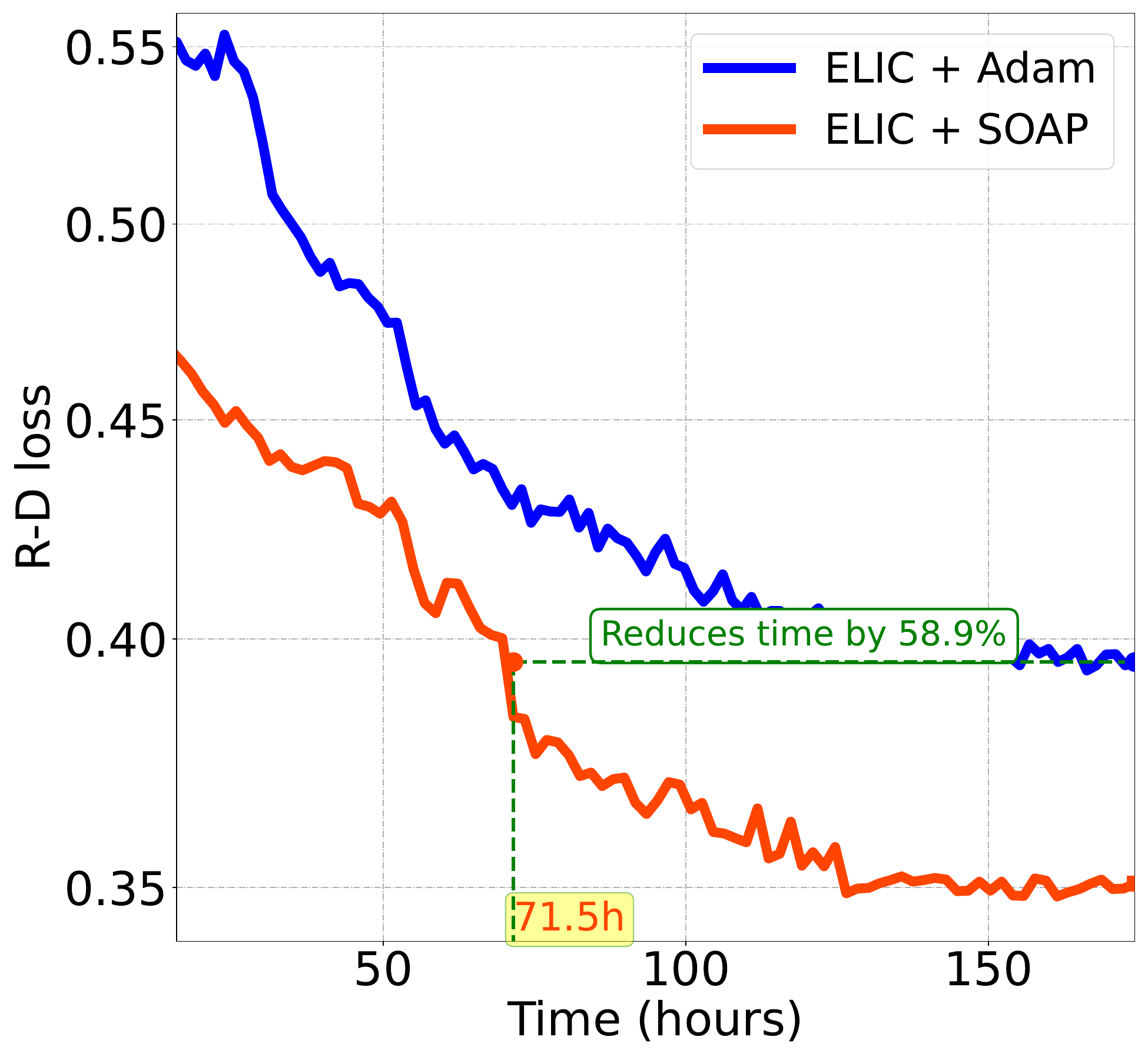}
\includegraphics[width=\mywidth\linewidth]{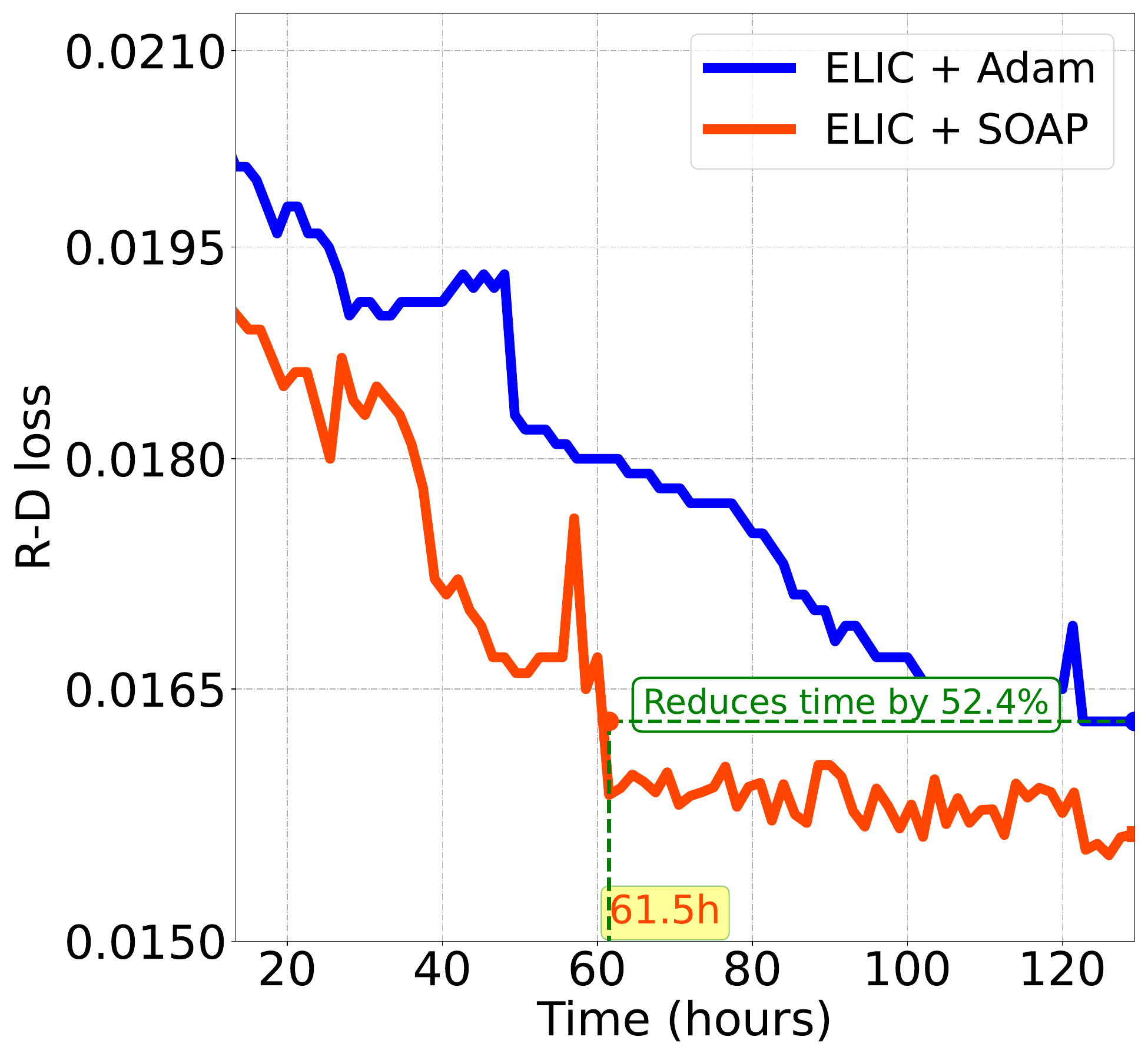}
\includegraphics[width=\mywidth\linewidth]{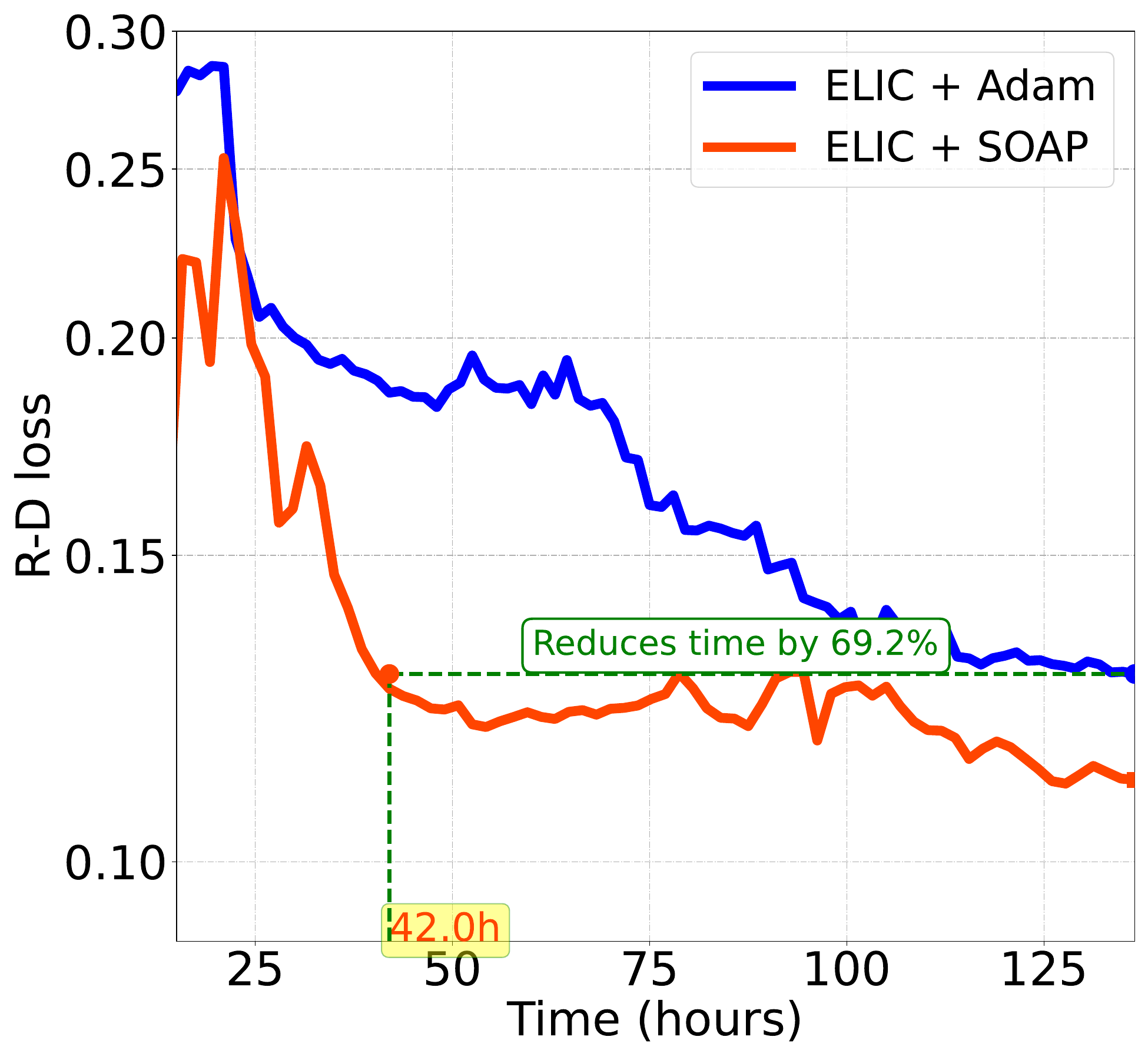}
\includegraphics[width=\mywidth\linewidth]{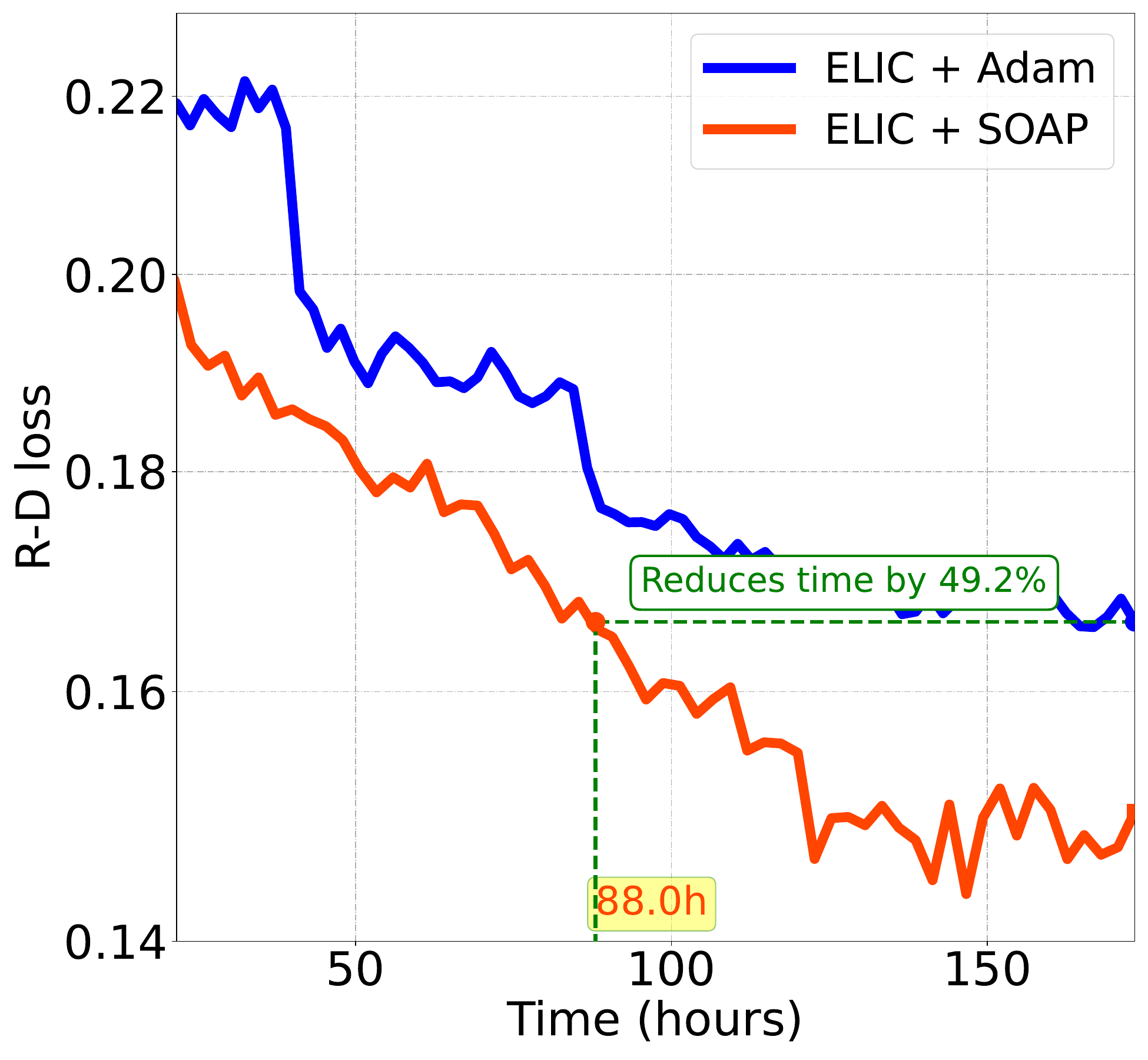}

\caption{
\textbf{Training dynamics across stages.}  
Loss vs Wall time for Stage 1, Stage 2, Stage 3, and Stage 4.  
SOAP consistently converges faster and more stably than Adam across all stages.
}
\label{fig:training_stages}
\end{figure}
As shown in Fig.~\ref{fig:training_stages}, SOAP achieves faster convergence and more stable training dynamics than Adam across all stages of DCVC.  Importantly, the benefits of SOAP extend beyond acceleration. The final rate--distortion performance (R-D loss) achieved by SOAP is stronger, suggesting that curvature-aware optimization is valuable in the highly complex setting of video compression, where gradient conflicts are even more pronounced. This corroborates our central claim: by resolving intra- and inter-step conflicts, SOAP not only speeds up training but also yields higher-quality solutions. 

These preliminary findings suggest that second-order optimization via SOAP generalizes effectively from LIC to learned video compression. While additional large-scale experiments are warranted, the results highlight SOAP as a promising optimizer for future research in video and other high-dimensional compression domains.

\section{Conclusion and Future Work}
In this work, we demonstrated that a simple optimizer switch yields faster training, as well as improved R-D performance across advanced LICs (ELIC, TCM, LALIC, and DCAE). Our theoretical and empirical analyses reveal that Newton-style preconditioning effectively resolves the inherent gradient conflicts of the R-D objective by aligning updates both between the competing terms (intra-step) and across iterations (inter-step). Furthermore, we uncovered a critical practical benefit: Second-order optimizer-trained models exhibit significantly fewer activation and latent outliers, which enhances robustness to post-training quantization, making the models more deployable.

Looking forward, we identify several promising research directions:
(i) developing hybrid optimization strategies that combine second-order information with complementary techniques (e.g., energy compression or feature decorrelation);
(ii) extending second-order training to other domain compression methods, such as videos~\cite{li2024neural,jia2025towards} and 3d representations~\cite{wang2024versatile,wang2024versatile2,gao2025deep}, where training costs (wall-time) are even higher;
{(iii) investigating adaptive R-D Hessian decomposition strategies to explicitly model and exploit the specific curvature interactions between rate and distortion terms;
(iv)} strengthening the theoretical foundations by relaxing assumptions, quantifying curvature drift, and formally connecting outlier suppression to PTQ error bounds. We hope these results encourage the community to recognize optimization strategy as a critical pillar, alongside architecture and algorithm design, for advancing practical learned compression.


\bibliographystyle{IEEEtran}
\bibliography{second_order}

\begin{IEEEbiography}[{\includegraphics[width=1in,height=1.25in,clip,keepaspectratio]{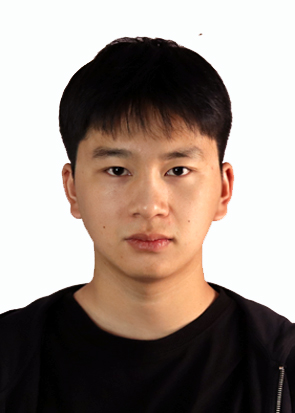}}]{Yichi Zhang}
received a B.E. degree in Computer Science from Hangzhou Normal University, Hangzhou, China, in 2023 and an M.S. degree in Electrical and Computer Engineering from Purdue University, West Lafayette, IN, USA, in 2024. He is now a Ph.D. student at the Video and Image Processing Laboratory at Purdue University, West Lafayette, IN, USA.
His research interests include data compression and neural dynamics.
\end{IEEEbiography}

\begin{IEEEbiography}[{\includegraphics[width=1in,height=1.25in,clip,keepaspectratio]{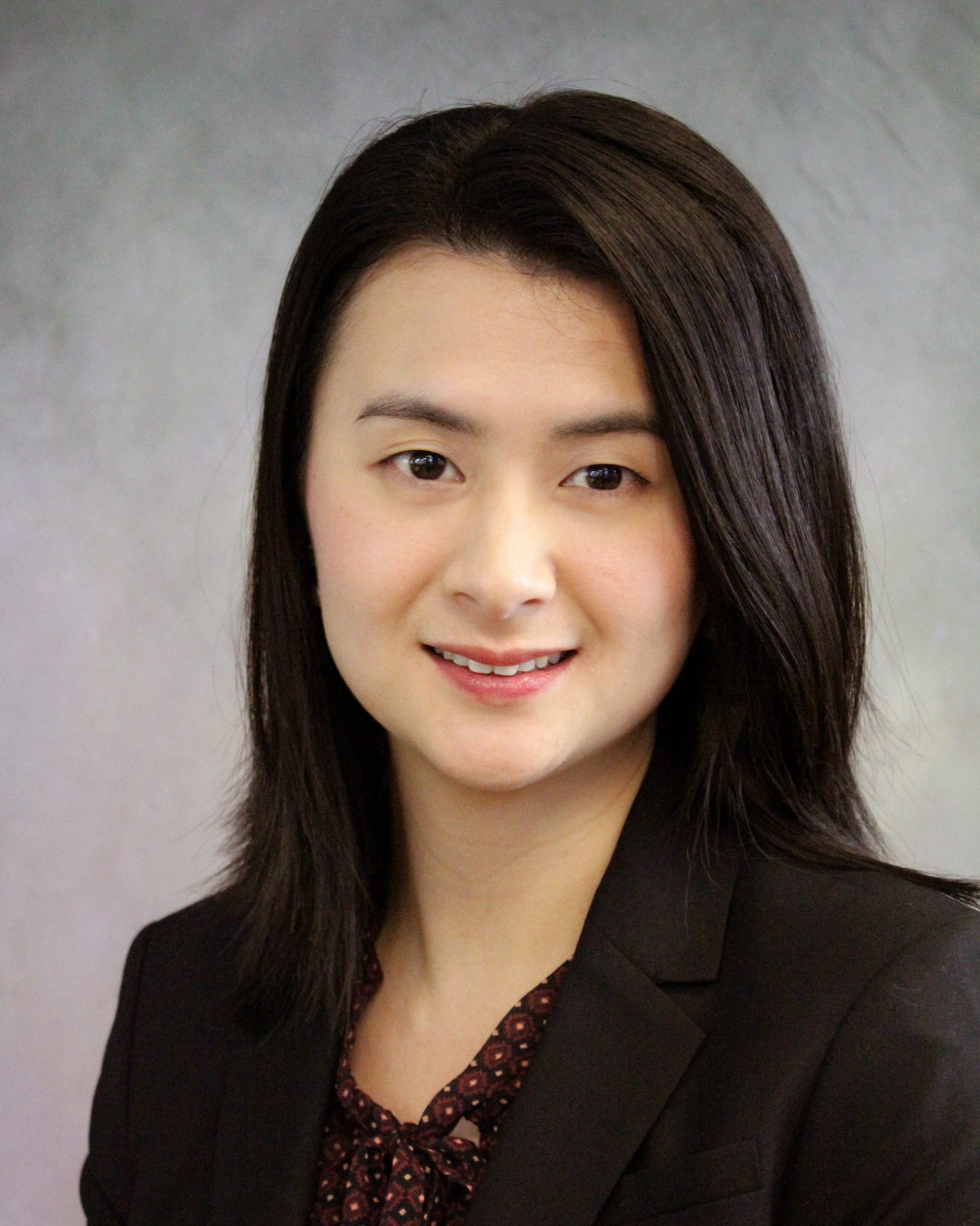}}]{Fengqing Zhu}
(Senior Member, IEEE) received the B.S.E.E. (with highest distinction), M.S., and Ph.D. degrees in Electrical and Computer Engineering from Purdue University in 2004, 2006, and 2011, respectively. From 2012 to 2014, she was a Staff Researcher at Futurewei Technologies, where she received a Certification of Recognition for Core Technology Contribution in 2012. She joined Purdue University, West Lafayette, IN, USA, in 2015, where she is currently an Associate Professor of Electrical and Computer Engineering. Her research interests include visual coding for machines and visual data analytics with a focus on smart health applications. She is a recipient of the 2023 Winter Conference on Applications of Computer Vision Best Algorithms Paper Award and the 2022 Picture Coding Symposium Best Paper Finalist. She is currently serving as the Vice Chair of the IEEE MMSP-TC (2025-2026) and an Elected Member of the IVMSP-TC (2025-2027).
\end{IEEEbiography}

\clearpage
\appendices
\section{}

This appendix provides additional details.

\subsection{SOAP as an Approximation to Newton's Method}
\label{sec:soap_newton_approximation}

We argue that the SOAP update can behave like a Newton step \emph{locally and under specific modeling assumptions}, i.e., $p \approx -\,H^{-1}g$. The derivation proceeds through standard curvature approximations and a rotated-basis view in which SOAP applies an Adam-style preconditioner.

\textbf{Under standard assumptions:}
\begin{enumerate}[label=\textbf{(A\arabic*)}, leftmargin=2.2em]
\item \textbf{Gauss--Newton (GN) surrogate.} The Hessian is well-approximated by its GN component~\cite{bishop2006pattern,martens2015optimizing,martens2010deep,morwani2024new,zhang2025concurrence,schraudolph2007stochastic}:
\begin{equation}\small
H \approx H_{\mathrm{GN}} \qquad \text{(GN approximation).}
\label{eq:gn_approx}
\end{equation}
\item \textbf{Layerwise Kronecker structure.} For a single layer with weight matrix $W$ (vectorized as $\mathrm{vec}(W)$), the GN is well-approximated by a Kronecker product of second-moment factors built from forward activations $a_t$ and backpropagated sensitivities $\delta_t$~\cite{grosse2016kronecker,martens2015optimizing,li2017preconditioned,martens2020new,morwani2024new,gupta2018shampoo,vyas2024soap}:
\begin{equation}
\small
H_{\mathrm{GN}} \;\approx\; R_t \otimes L_t,
\quad L_t=\mathbb{E}[\delta_t\delta_t^\top],\;\; R_t=\mathbb{E}[a_t a_t^\top].
\label{eq:kron_approx}
\end{equation}
(An $L\otimes R$ parameterization is equivalent; only the rotation/diagonalization matters.)
\item \textbf{Rotated-basis diagonalization.} With eigendecompositions $L_t=Q_L\Lambda_L Q_L^\top$ and $R_t=Q_R\Lambda_R Q_R^\top$, the $(Q_L\!\otimes\!Q_R)$ rotation makes the GN surrogate (nearly) diagonal:
\begin{align}
\small
\tilde H_{\mathrm{GN}}
&= (Q_L\!\otimes\!Q_R)^\top H_{\mathrm{GN}} (Q_L\!\otimes\!Q_R)
\;\approx\; \Lambda_R\!\otimes\!\Lambda_L,
\end{align}
which is diagonal because it is the Kronecker product of diagonal matrices.
\item \textbf{Adam-as-diagonal preconditioner (local).} In the rotated basis and sufficiently close to a (nondegenerate) local minimum, the Adam/Adafactor-style update acts like preconditioning by the \emph{diagonal} curvature~\cite{kingma2014adam,reddi2019convergence,vyas2024soap}:
\begin{equation}
\small
\tilde p\;\approx\; -\,\operatorname{diag}(\tilde H_{\mathrm{GN}})^{-1}\tilde g,
\label{eq:adam_diag}
\end{equation}
up to standard damping ($\varepsilon I$), EMAs, and step-size factors.
\end{enumerate}

\textbf{Rotated-space argument.}
Under (A1)--(A3), $\tilde H_{\mathrm{GN}}$ is diagonal, so $\operatorname{diag}(\tilde H_{\mathrm{GN}})^{-1}=\tilde H_{\mathrm{GN}}^{-1}$. By (A4),
\begin{equation}\small
\tilde p\;\approx\; -\,\tilde H_{\mathrm{GN}}^{-1}\tilde g.
\end{equation}
Because $(Q_L\!\otimes\!Q_R)$ is orthogonal, applying a preconditioned step in the rotated space is equivalent to applying the corresponding step in the original coordinates:
\begin{equation}\small
p\;=\;(Q_L\!\otimes\!Q_R)\,\tilde p
\;\approx\; -\,(Q_L\!\otimes\!Q_R)\,\tilde H_{\mathrm{GN}}^{-1}(Q_L\!\otimes\!Q_R)^\top g.
\end{equation}
Finally, by (A1),
\begin{equation}\small
(Q_L\!\otimes\!Q_R)\,\tilde H_{\mathrm{GN}}^{-1}(Q_L\!\otimes\!Q_R)^\top
\;\approx\; H^{-1},
\end{equation}
yielding the claimed local Newton approximation.

\begin{theorem}[Conditional Newton approximation for SOAP]
\label{theorem1}
Under \textbf{(A1)--(A4)} and with standard damping and stable moment estimates, the SOAP layer update is a local approximation to the Newton update:
\begin{equation}\small
p\;\approx\; -\,H^{-1}g.
\end{equation}
\end{theorem}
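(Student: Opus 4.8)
The plan is to turn the heuristic ``rotated-space argument'' sketched above into a chain of controlled approximations, tracking how the error in each of \textbf{(A1)--(A4)} propagates to the final bound. First I would write the SOAP layer update in its native form: let $Q = Q_L\otimes Q_R$ be the orthogonal matrix assembled from the eigenvectors of the Kronecker factors $L_t,R_t$ of \textbf{(A2)--(A3)}. SOAP rotates the gradient to $\tilde g = Q^\top g$, applies a coordinate-wise Adam-style rescaling in that basis to obtain $\tilde p$, and returns $p = Q\tilde p$. Since $Q$ is an isometry, the claim $p \approx -H^{-1}g$ is equivalent to $\tilde p \approx -\tilde H^{-1}\tilde g$ with $\tilde H = Q^\top H Q$, so it suffices to work entirely in the rotated coordinates.

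The steps are then: (i) use \textbf{(A1)} and \textbf{(A2)} to replace $H$ by $H_{\mathrm{GN}}$ and then by $R_t\otimes L_t$, recording errors $\|H-H_{\mathrm{GN}}\|\le\epsilon_1$ and $\|H_{\mathrm{GN}}-R_t\otimes L_t\|\le\epsilon_2$; (ii) conjugate by $Q$, using $Q^\top (R_t\otimes L_t) Q = \Lambda_R\otimes\Lambda_L$, which is exactly diagonal, so $\tilde H = \Lambda_R\otimes\Lambda_L + E$ with $\|E\|\le\epsilon_1+\epsilon_2$; (iii) invoke \textbf{(A4)} to write the Adam-style update in the rotated basis as $\tilde p \approx -\operatorname{diag}(\tilde H)^{-1}\tilde g$ (up to damping $\varepsilon I$, EMA, and step-size constants), and observe that because $\Lambda_R\otimes\Lambda_L$ is already diagonal we have $\operatorname{diag}(\tilde H) = \Lambda_R\otimes\Lambda_L + \operatorname{diag}(E)$, hence $\operatorname{diag}(\tilde H)^{-1} = \tilde H^{-1} + O(\|E\|)$ by a first-order resolvent expansion, valid because $\Lambda_R\otimes\Lambda_L$ is bounded away from $0$ near a nondegenerate minimizer; (iv) transform back, $p = Q\tilde p \approx -Q\tilde H^{-1}Q^\top g = -H_{\mathrm{GN}}^{-1}g \approx -H^{-1}g$, with total error $O(\epsilon_1+\epsilon_2+\varepsilon+\text{EMA/noise})$ and no amplification since $Q$ is orthogonal. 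This yields the stated conditional, first-order Newton approximation.

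The main obstacle is step (iii), i.e.\ rigorously justifying \textbf{(A4)}: that Adam's running second moment $v_t = \beta_2 v_{t-1} + (1-\beta_2)\,\tilde g_t\odot\tilde g_t$ actually tracks $\operatorname{diag}(\tilde H)$ rather than some unrelated quantity. This is where ``local'' and ``nondegenerate minimum'' carry the weight: near $\theta^*$ one has $\tilde g \approx \tilde H(\theta-\theta^*)$, so $\mathbb{E}[\tilde g_j^2] = \big(\tilde H(\theta-\theta^*)(\theta-\theta^*)^\top\tilde H\big)_{jj}$, which reduces to $\tilde H_{jj}^2$ only if the iterate fluctuation is itself roughly isotropic in the rotated basis — a self-consistency condition the preconditioner is meant to enforce but which must be argued (or accepted as part of the ``standard assumptions'' hedge). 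A secondary subtlety is that \textbf{(A3)} diagonalizes $R_t\otimes L_t$ exactly but not the true $H$, so the residual $E$ must be shown small (appealing to the Kronecker-factored approximate-curvature literature) for the ``diagonal truncation is harmless'' step to be legitimate. I would therefore present these as the explicit content of the ``standard assumptions'' and ``stable moment estimates'' qualifiers, and state the result as a conditional local approximation rather than a hard inequality.
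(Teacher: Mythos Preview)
Your proposal is correct and follows essentially the same rotated-space argument the paper uses: rotate by $Q=Q_L\otimes Q_R$, observe that the Kronecker surrogate is exactly diagonal in that basis so the Adam-style diagonal preconditioner coincides with the full inverse there, then rotate back and invoke \textbf{(A1)} to pass from $H_{\mathrm{GN}}^{-1}$ to $H^{-1}$. Your version is in fact more careful than the paper's, which presents the chain of approximations qualitatively without tracking the $\epsilon_1,\epsilon_2$ errors or the resolvent perturbation; your discussion of the self-consistency issue behind \textbf{(A4)} (that $\mathbb{E}[\tilde g_j^2]$ only recovers $\tilde H_{jj}^2$ under an isotropy assumption on the iterate fluctuations) is exactly the caveat the paper flags in its post-theorem ``Remarks and limitations'' but does not analyze.
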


\textbf{Remarks and limitations.}
(i) The Adam preconditioner tracks (diagonal) second moments of gradients (Fisher-like), not the exact Hessian diagonal; the identification in~\eqref{eq:adam_diag} is a \emph{local} approximation strongest when $\operatorname{diag}(H_{\mathrm{GN}})\approx \operatorname{diag}(H)$ near the optimum.
(ii) Finite-sample EMAs, infrequent preconditioner updates, and regularization $(+\varepsilon I)$ introduce additional approximation error.
(iii) The argument is layerwise and ignores inter-layer curvature; nonetheless, in practice, the rotated-space diagonalization substantially improves conditioning compared to first-order methods.
(iv) For common losses (e.g., MSE, cross-entropy, typical distortion, and rate losses), the Fisher information matrix and GN coincide and provide a PSD approximation to the true Newton matrix under standard assumptions, which are widely used in practice as stable surrogates for second-order optimization.

\subsection{Proof of Lemma~\ref{lemma:inter_step_gradient_align}}
\label{proof:lemma_inter_step_align}

Assume $f$ has an $L_H$-Lipschitz Hessian in a neighborhood of $\theta_t$, and $H_t=\nabla^2 f(\theta_t)$ is SPD with $\|H_t^{-1}\|\le \kappa$. For the Newton update
\begin{equation}\small
p_t = - H_t^{-1} g_t,
\qquad \theta_{t+1} = \theta_t + \eta p_t, \quad 0<\eta<1,
\end{equation}
there exist constants $C_1,C_2$ (depending on $L_H$ and uniform bounds on $\|H_t\|,\|H_t^{-1}\|$) such that, whenever $\|g_t\|$ is sufficiently small,
\begin{equation}\small
\bigl|1-\mathcal{S}(p_t,p_{t+1})\bigr| \;\le\; C_1\,\eta\,\|p_t\| \;+\; C_2\,\eta^2\,\|p_t\|^2 .
\end{equation}
In particular, as $\|p_t\|\to 0$ (or as $\eta\to 0$), $\mathcal{S}(p_t,p_{t+1})\to 1$.

\begin{proof}
By the Lipschitz continuity of the Hessian (Taylor expansion),
\begin{equation}
\small
g_{t+1} = g(\theta_{t+1}) = g_t + H_t(\theta_{t+1}-\theta_t) + r_t,
\quad \|r_t\|\le \tfrac{L_H}{2}\|\theta_{t+1}-\theta_t\|^2.
\end{equation}
Since $\theta_{t+1}-\theta_t=\eta p_t = -\eta H_t^{-1} g_t$ and $\|H_t^{-1}\|\le\kappa$,
\begin{equation}\small
g_{t+1} = (1-\eta)g_t + r_t,
\qquad \|r_t\|\le \tfrac{L_H}{2}\kappa^2\,\eta^2\|g_t\|^2.
\end{equation}
Now the next Newton update is
\begin{equation}\small
p_{t+1} = - H_{t+1}^{-1} g_{t+1}.
\end{equation}
Add and subtract $H_t^{-1}$:
\begin{equation}\small
p_{t+1} = -H_t^{-1} g_{t+1} \;-\; (H_{t+1}^{-1}-H_t^{-1})g_{t+1}.
\end{equation}
Using $g_{t+1}=(1-\eta)g_t+r_t$ and $p_t=-H_t^{-1}g_t$, we get
\begin{equation}\small
p_{t+1} = (1-\eta)p_t - H_t^{-1} r_t \;-\; (H_{t+1}^{-1}-H_t^{-1})g_{t+1}.
\end{equation}
Lipschitzness implies $\|H_{t+1}-H_t\|\le L_H \|\theta_{t+1}-\theta_t\|= L_H \eta \|p_t\|$, hence
\begin{equation}\small
\|H_{t+1}^{-1}-H_t^{-1}\| \;\le\; \|H_t^{-1}\|\,\|H_{t+1}-H_t\|\,\|H_{t+1}^{-1}\|
\;\le\; C\, \eta \|p_t\|
\end{equation}
for $C=\kappa^2 L_H$ (assuming $\|H_{t+1}^{-1}\|$ remains bounded) in a small neighborhood. Combining these bounds yields
\begin{equation}\small
p_{t+1} = (1-\eta)p_t + e_t,\qquad
\|e_t\| \;\le\; C_1'\,\eta\,\|p_t\|^2 + C_2'\,\eta^2\,\|p_t\|^3.
\end{equation}
Writing $u=p_t/\|p_t\|$ and $p_{t+1}=(1-\eta)\|p_t\|\,u + e_t$, a standard cosine perturbation bound gives
\begin{equation}\small
\bigl|1-\mathcal{S}(p_t,p_{t+1})\bigr|
\;\le\; C_1 \,\eta\, \|p_t\| \;+\; C_2 \,\eta^2 \,\|p_t\|^2,
\end{equation}
as claimed.
\end{proof}

\subsection{Proof of Proposition~\ref{prop:soap_lic}}
\label{proof:prop_soap_lic}

Let $\theta^*$ be a nondegenerate local minimizer with Hessian $H \succ 0$.
Assume that, in a neighborhood of $\theta^*$, the component gradients admit quadratic models~\cite{nocedal1999numerical,boyd2004convex}:
\begin{equation}\small
g_R(\theta) \approx H_R(\theta-\theta^*),\qquad
g_D(\theta) \approx H_D(\theta-\theta^*),
\end{equation}
and that SOAP uses a single (shared) preconditioner that locally approximates $H^{-1}$, i.e.,
\begin{equation}\small
p \approx -\,H^{-1} g \quad \text{(cf.\ Sec.~\ref{sec:soap_newton_approximation})}.
\end{equation}

Suppose, moreover, that the component Hessians are \emph{locally proportional} to $H$:
\begin{equation}\small
H_R(\theta) \;=\; \alpha_R(\theta)\,H(\theta) + E_R(\theta),\ 
H_D(\theta) \;=\; \alpha_D(\theta)\,H(\theta) + E_D(\theta),
\end{equation}
where $\alpha_R,\alpha_D>0$ are continuous near $\theta^*$ and $\|E_R(\theta)\|,\|E_D(\theta)\| = o(1)$ as $\theta\to\theta^*$. Then
\begin{equation}\small
\lim_{\theta\to\theta^*}\;
\mathcal{S}\!\left(p_R(\theta),\;p_D(\theta)\right)\;=\;1,
\end{equation}
where $p_R \approx -\,H^{-1} g_R$ and $p_D \approx -\,H^{-1} g_D$ are the update vectors corresponding to the rate and distortion gradients.

\begin{proof}
Using the quadratic models,
\begin{equation}\small
p_R \;\approx\; -\,H^{-1}H_R(\theta-\theta^*)
= -\,\alpha_R(\theta)\,(\theta-\theta^*) \;+\; -\,H^{-1}E_R(\theta)\,(\theta-\theta^*).
\end{equation}
Because $\|E_R(\theta)\|=o(1)$ and $\|H^{-1}\|$ is bounded near $\theta^*$, we have
\begin{equation}\small
\|H^{-1}E_R(\theta)(\theta-\theta^*)\| = o(\|\theta-\theta^*\|).
\end{equation}
An identical argument yields
\begin{equation}\small
p_D \;\approx\; -\,\alpha_D(\theta)\,(\theta-\theta^*) \;+\; o(\|\theta-\theta^*\|).
\end{equation}
Thus both update vectors $p_R$ and $p_D$ are colinear with $-(\theta-\theta^*)$ up to a vanishing error. Hence their cosine similarity converges to $1$ as $\theta\to\theta^*$.
\end{proof}

\textbf{Remark.}
Without proportionality, the update vectors $p_R = -\,H^{-1}H_R(\theta-\theta^*)$ and
$p_D = -\,H^{-1}H_D(\theta-\theta^*)$ need not be parallel. A weaker (sufficient) condition is that $H, H_R, H_D$ are jointly diagonalizable near $\theta^*$ and that the ratios $\lambda_R^i/\lambda^i$ and $\lambda_D^i/\lambda^i$ are constant on the (active) eigenspaces visited by $(\theta-\theta^*)$, which again renders the two vectors colinear. However, these assumptions serve as sufficient conditions that provide essential theoretical intuition for why second-order preconditioning aids alignment. In the context of R-D optimization, it is plausible that rate and distortion objectives share significant curvature structure, as both depend on the capacity and fidelity of the underlying transform. The strong empirical alignment observed in practice (Fig.~\ref{subfig:elic_intra}) suggests that the R-D optimization landscape possesses enough shared structure for the second-order optimizer to effectively exploit, even if these idealized conditions are not perfectly met. The Newton preconditioner inherently seeks a shared descent direction by accounting for how the objectives interact locally.

\subsection{Limitations of Adam for Gradient Alignment}
\label{proof:adam_limitations}

Adam is powerful and widely used, but its effectiveness is limited by its \emph{diagonal} preconditioner. Because it scales coordinates independently, it cannot exploit off-diagonal curvature that encodes interactions among parameters—precisely what is needed to resolve non-axis-aligned gradient conflicts in multi-objective settings such as rate-distortion (R-D) optimization. The following proposition formalizes a standard local approximation behind this limitation, following~\cite{molybog2023theory,martens2015optimizing}.

\begin{proposition}[Local diagonal-preconditioner approximation]
\label{prop:adam_hessian_diag}
In a neighborhood of a nondegenerate local minimum $\theta^*$ where the loss is well-approximated by a quadratic and the Hessian $H \succ 0$ is close to diagonal (diagonally dominant), the Adam update vector is approximately a diagonally preconditioned gradient step:
\begin{equation}\small
p_{\mathrm{Adam}}(g) \;=\; c\,\operatorname{diag}(H)^{-1} g \;+\; o(\|g\|),
\end{equation}
for some scalar $c>0$ that absorbs stepsize, bias-correction, and damping factors.
\end{proposition}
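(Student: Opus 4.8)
The plan is to specialize Adam's recursion to the local quadratic model and read off the coordinate-dependent part of the update map, folding everything coordinate-independent (stepsize $\eta$, the bias-correction factors $1-\beta_1^t,\,1-\beta_2^t$, the damping $\varepsilon$) into the scalar $c$ and everything sub-leading in $\|g\|$ into the $o(\|g\|)$ remainder. Writing the update componentwise with $m_t=\beta_1 m_{t-1}+(1-\beta_1)g_t$, $v_t=\beta_2 v_{t-1}+(1-\beta_2)\,g_t\odot g_t$, and bias-corrected $\hat m_t,\hat v_t$, we have
\begin{equation}
p_{\mathrm{Adam}}(g)_j \;=\; -\,\eta\,\frac{\hat m_{t,j}}{\sqrt{\hat v_{t,j}}+\varepsilon}.
\end{equation}
In a neighborhood of $\theta^*$ the quadratic model gives $g_t=H(\theta_t-\theta^*)+o(\|\theta_t-\theta^*\|)$, and diagonal dominance of $H$ lets us write $g_{t,j}\approx H_{jj}(\theta_{t,j}-\theta^*_j)$. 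It therefore suffices to establish two claims: (i) $\hat m_t = g_t + o(\|g\|)$, so the first-moment accumulator contributes nothing to the coordinate-dependence; and (ii) $\sqrt{\hat v_{t,j}}+\varepsilon = \gamma\,H_{jj}\,(1+o(1))$ for a coordinate-independent $\gamma>0$.

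\textbf{First moment.} Since $\|g_t-g_{t-1}\|\le \|H\|\,\|\theta_t-\theta_{t-1}\| = \eta\|H\|\,\|p_{t-1}\|$ and the steps $\theta_t-\theta_{t-1}=\eta p_{t-1}$ are small near the optimum with small $\eta$, the gradient is slowly varying along the trajectory. The bias-corrected EMA then equals the current gradient up to a lag term of order $\eta\,\beta_1(1-\beta_1)^{-1}\|g\|$, which the small-step, near-minimum hypotheses push into the remainder; this is the same Lipschitz/quadratic control already used in the proof of Lemma~\ref{lemma:inter_step_gradient_align}. Hence the coordinate dependence of $p_{\mathrm{Adam}}(g)$ enters only through $\sqrt{\hat v_{t,j}}$.

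\textbf{Second moment (the crux).} I would model the stochastic gradient near $\theta^*$ as a deterministic part plus zero-mean noise, so that the accumulator concentrates on the stationary second moment $\hat v_{t,j}\approx [\operatorname{Cov}(g)+\bar g\bar g^{\top}]_{jj}$. The deterministic part $\bar g\bar g^{\top}$ vanishes as $\theta\to\theta^*$; and under diagonal dominance together with the standard assumption that the intrinsic gradient noise entering the dynamics is near-isotropic — equivalently, that the stationary parameter fluctuation is near-isotropic, so that $\operatorname{Cov}(g)=H\Theta H$ with $\Theta\approx\theta_0 I$ — one obtains $[\operatorname{Cov}(g)]_{jj}\approx\theta_0 H_{jj}^{2}$. (An equivalent route is the Fisher $=$ Gauss--Newton $\approx$ Hessian chain already invoked for Theorem~\ref{theorem1}, combined with the curvature scaling of the gradient second moment.) Once the noise floor dominates the damping, $\sqrt{\hat v_{t,j}}+\varepsilon=\sqrt{\theta_0}\,H_{jj}(1+o(1))$ and $\varepsilon$ is absorbed into $c$. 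Combining with the first-moment estimate,
\begin{equation}
p_{\mathrm{Adam}}(g)_j \;=\; -\,\frac{\eta}{\sqrt{\theta_0}\,(1-\beta_1^t)}\,\frac{g_{t,j}}{H_{jj}} \;+\; o(\|g\|) \;=\; c\,\bigl(\operatorname{diag}(H)^{-1}g\bigr)_j + o(\|g\|),
\end{equation}
which is the claim.

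\textbf{Main obstacle.} I expect the load-bearing step to be the second-moment scaling $\hat v_{t,j}\propto H_{jj}^{2}$. A naive deterministic-quadratic analysis of the Adam recursion instead yields sign descent ($p\propto\operatorname{sign}(g)$) in the slowly-decaying phase and plain gradient descent ($p\propto g$) once the residual drops below the damping scale; and the naive identification $\hat v\approx(\text{gradient variance})\approx\operatorname{diag}(\text{Fisher})\approx\operatorname{diag}(H)$ yields only a \emph{root}-inverse, $p\propto\operatorname{diag}(H)^{-1/2}g$. Obtaining exactly $\operatorname{diag}(H)^{-1}$ requires the curvature-amplified-noise / near-isotropic-stationary-fluctuation hypothesis above, which is the modeling assumption standard in second-order interpretations of adaptive methods (cf.~\cite{martens2015optimizing,molybog2023theory}); stating it explicitly, and then verifying that the EMA lag, bias correction, and damping all contribute only to $c$ and $o(\|g\|)$, constitutes the bulk of the work, the latter being routine under the proposition's hypotheses.
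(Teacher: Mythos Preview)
Your proposal is correct and takes essentially the same route as the paper: both arguments hinge on the near-isotropic stationary parameter fluctuation assumption $\mathbb{E}[(\theta_t-\theta^*)(\theta_t-\theta^*)^\top]\approx\sigma^2 I$ (your $\Theta\approx\theta_0 I$) to obtain $\hat v_{t,j}\propto H_{jj}^2$, followed by diagonal dominance to reduce $\operatorname{diag}(HH^\top)_{jj}$ to $H_{jj}^2$. Your treatment is more explicit than the paper's about the first-moment EMA lag and about why this isotropy hypothesis is load-bearing (versus the competing sign-descent or $\operatorname{diag}(H)^{-1/2}$ readings), but the mathematical core is identical.
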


\begin{proof}
Adam~\cite{kingma2014adam} maintains
\begin{align*}\small
m_t &= \beta_1 m_{t-1} + (1-\beta_1) g_t, \\\small
v_t &= \beta_2 v_{t-1} + (1-\beta_2) (g_t \odot g_t), \\\small
\theta_{t+1} &= \theta_t - \eta \frac{\hat m_t}{\sqrt{\hat v_t} + \epsilon},
\end{align*}
with bias-corrected $\hat m_t,\hat v_t$ and elementwise operations. For local conditioning it suffices to (i) linearize $g_t \approx H(\theta_t-\theta^*)$ and (ii) use $m_t \approx g_t$ to expose the preconditioner. Under small, approximately isotropic perturbations near $\theta^*$, $\mathbb{E}[(\theta_t-\theta^*)(\theta_t-\theta^*)^\top]\approx \sigma^2 I$, giving
\begin{equation}\small
\mathbb{E}[g_t g_t^\top] \;\approx\; \sigma^2 H H^\top.
\end{equation}
Hence
\begin{equation}\small
v_t \;\approx\; \operatorname{diag}(\sigma^2 H H^\top),
\qquad \sqrt{v_t} \;\approx\; \sqrt{\sigma^2\,\operatorname{diag}(H H^\top)}.
\end{equation}
Diagonal dominance implies $\operatorname{diag}(H H^\top)_{ii} = \sum_k H_{ik}^2 \approx H_{ii}^2$, so
\begin{equation}\small
\sqrt{v_t} \;\approx\; \sigma\,\operatorname{diag}(H),
\end{equation}
(using $H\succ 0$). Therefore the Adam update vector is
\begin{equation}\small
p_{\mathrm{Adam}}(g_t) \;\approx\; \frac{g_t}{\sigma\,\operatorname{diag}(H)}
\;=\; c\,\operatorname{diag}(H)^{-1} g_t,
\end{equation}
with $c = 1/\sigma$, as claimed.
\end{proof}

\textbf{Why a diagonal preconditioner fails.}
The core limitation of Adam in this context is structural. For multi-objective problems (e.g., R-D), parameter couplings are encoded in the \emph{off-diagonal} entries of $H$~\cite{das2024towards}. A diagonal preconditioner cannot mix coordinates and therefore cannot \emph{rotate} the update direction $p_{\mathrm{Adam}}$ toward a descent direction that resolves conflicting objectives. This remains true regardless of how accurately Adam's second-moment estimate $v_t$ approximates the true Hessian diagonal (which itself relies on strong assumptions like diagonal dominance used in the proof above). This inability to rotate the update leads to inherent intra-step conflicts and poor inter-step alignment, often manifesting as oscillatory trajectories in practice. In contrast, SOAP’s block-diagonal curvature approximation preserves within-block off-diagonal structure, enabling within-layer rotations that align conflicting updates and accelerate convergence.

\subsection{Adam's Gradient Conflict in a Simplified R-D Setting}
\label{sec:adam_rd_conflict}

We now make the above limitation concrete in a toy R-D problem. Consider a linear autoencoder~\cite{saxe2013exact} with encoder $e$ and decoder $d$. For a scalar input $x$, the latent is $z=W_e x$ and the reconstruction is $\hat x = W_d z$, where $W_e\in\mathbb{R}^{M\times 1}$ and $W_d\in\mathbb{R}^{1\times M}$. Let $\theta=(\mathbf{w}_e,\mathbf{w}_d)$ denote the vectorized parameters. The R-D loss balances distortion and rate,
\begin{equation}\small
\mathcal{L}(\theta) \;=\;
\underbrace{\mathbb{E}_{x\sim \mathcal{U}[-1,1]}[(\hat x-x)^2]}_{\mathcal{L}_D(\theta)}
\;+\; \lambda\,
\underbrace{\mathbb{E}_{x\sim \mathcal{U}[-1,1]}[\|z\|^2]}_{\mathcal{L}_R(\theta)}.
\end{equation}
Write $C=\mathbb{E}[x^2]=1/3$.

\begin{assumption}\label{assp_rd}
\emph{Small initialization.} Entries of $\mathbf{w}_e,\mathbf{w}_d$ are i.i.d.\ $\mathcal{N}(0,\epsilon^2)$ with $\epsilon=o(1)$, and $\bar{\mathbf{w}}_e=\epsilon^{-1}\mathbf{w}_e$, $\bar{\mathbf{w}}_d=\epsilon^{-1}\mathbf{w}_d$.
\end{assumption}

\begin{proposition}
\label{prop:rd_adam_conflict}
Under Assumption~\ref{assp_rd}, let $p_R=\operatorname{Adam}(\nabla\mathcal{L}_R)$ and $p_D=\operatorname{Adam}(\nabla\mathcal{L}_D)$ denote Adam’s \emph{update vectors} at initialization. In the wide-latent limit $M\to\infty$~\cite{jacot2018neural,yang2020feature},
\begin{equation}\small
\mathcal{S}(p_R,\,p_D)\;\xrightarrow[M\to\infty]{\text{a.s.}}\;0.
\end{equation}
Thus, Adam updates rate and distortion in asymptotically orthogonal directions, inducing an inefficient trajectory.
\end{proposition}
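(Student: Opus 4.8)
The plan is to collapse the vector-valued problem onto the two scalars $a=\mathbf{w}_d^\top\mathbf{w}_e$ and $\|\mathbf{w}_e\|^2$, read off the sign structure of the two component gradients, and then use that at initialization Adam behaves like sign descent. First I would rewrite the losses: since $z=\mathbf{w}_e x$ and $\hat x=ax$, taking expectations with $C=\E[x^2]=\tfrac13$ gives $\mathcal{L}_D=C(a-1)^2$ and $\mathcal{L}_R=C\|\mathbf{w}_e\|^2$; differentiating and stacking $\theta=(\mathbf{w}_e,\mathbf{w}_d)$,
\begin{equation}
g_R=\nabla\mathcal{L}_R=2C\,(\mathbf{w}_e,\,\mathbf{0}),\qquad g_D=\nabla\mathcal{L}_D=2C(a-1)\,(\mathbf{w}_d,\,\mathbf{w}_e).
\end{equation}
The structural fact to extract here is that $g_R$ is supported only on the encoder block (and there is parallel to $\mathbf{w}_e$), whereas on that same block $g_D$ is parallel to the \emph{independent} random direction $\mathbf{w}_d$.

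Next I would invoke the form of a single Adam step from zero moment states, $\operatorname{Adam}(g)_i=-\eta\,g_i/(|g_i|+\varepsilon)$, which in the $\varepsilon\!\to\!0$ (sign-descent) regime equals $-\eta\,\sign(g_i)$ when $g_i\neq0$ and $0$ otherwise. Because a Gaussian initialization makes every coordinate of $\mathbf{w}_e,\mathbf{w}_d$ nonzero and $a\neq1$ almost surely, this yields
\begin{equation}
p_R=-\eta\,(\sign(\mathbf{w}_e),\,\mathbf{0}),\qquad p_D=-\eta\,\sign(a-1)\,(\sign(\mathbf{w}_d),\,\sign(\mathbf{w}_e)).
\end{equation}
Then a direct computation gives $\|p_R\|^2=M\eta^2$, $\|p_D\|^2=2M\eta^2$, and — the zero decoder block of $p_R$ killing its overlap with the decoder block of $p_D$ —
\begin{equation}
\langle p_R,p_D\rangle=\eta^2\,\sign(a-1)\sum_{j=1}^M\xi_j,\qquad \xi_j:=\sign(w_{e,j})\,\sign(w_{d,j}).
\end{equation}
Since the encoder and decoder entries are independent and symmetric, $\{\xi_j\}_{j\le M}$ are i.i.d.\ mean-zero Rademacher variables, and $|\sign(a-1)|=1$ a.s., so
\begin{equation}
\bigl|\mathcal{S}(p_R,p_D)\bigr|=\frac{\eta^2\,\bigl|\sum_{j=1}^M\xi_j\bigr|}{\sqrt{M\eta^2}\cdot\sqrt{2M\eta^2}}=\frac1{\sqrt2}\,\Bigl|\tfrac1M\textstyle\sum_{j=1}^M\xi_j\Bigr|\ \xrightarrow[M\to\infty]{}\ 0\quad\text{almost surely}
\end{equation}
by the strong law of large numbers, which is the claim.

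The hard part will be making the last step fully rigorous once a fixed Adam damping $\varepsilon>0$ is kept (note the clash of notation with the initialization scale): then each update coordinate carries a weight $|g_i|/(|g_i|+\varepsilon)\in(0,1)$ that depends on the random initialization magnitudes and, through $a$, on all the signs, so $\langle p_R,p_D\rangle$ becomes a \emph{weighted} Rademacher sum and the norms are perturbed away from $\sqrt{M}\,\eta$ and $\sqrt{2M}\,\eta$. I would handle this by conditioning on the magnitudes $(|w_{e,j}|,|w_{d,j}|)_j$ and using that $a$ concentrates (so $|a-1|$ is essentially constant), after which the weighted overlap is still $O_P(\sqrt{M})$ against denominators of order $\sqrt{M}\,\eta$ — routine but the only genuine technicality. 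It is worth remarking that the argument does not actually use the small-initialization part of Assumption~\ref{assp_rd}: only $\sign(a-1)$ enters the cosine and it has modulus $1$, so any i.i.d.\ Gaussian initialization yields asymptotically orthogonal Adam updates for the rate and distortion objectives.
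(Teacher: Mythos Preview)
Your proof is correct and follows essentially the same route as the paper: compute the component gradients, invoke Adam-as-sign-descent at initialization, and reduce the cosine to a normalized Rademacher sum that vanishes by the strong law. The only cosmetic difference is that the paper uses the small-initialization assumption to replace $\sign(a-1)$ by $-1$ outright (since $a=O(\epsilon^2)$), whereas you keep the factor and take absolute values; your closing remark that this makes the small-$\epsilon$ hypothesis inessential for the pure sign-descent case is accurate, and your sketch for the $\varepsilon>0$ damping regime (conditioning on magnitudes, concentration of $a$) parallels the paper's ``positive per-coordinate scalings'' note.
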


\begin{proof}
\textbf{Leading-order gradients.}
The rate term is
\begin{equation}\small
\mathcal{L}_R=\mathbb{E}\|W_e x\|^2=\mathbb{E}[x^2]\,\| \mathbf{w}_e\|^2=C\|\mathbf{w}_e\|^2,
\end{equation}
so $\nabla_{\mathbf{w}_e}\mathcal{L}_R=2C\,\mathbf{w}_e=2C\epsilon\,\bar{\mathbf{w}}_e$ and $\nabla_{\mathbf{w}_d}\mathcal{L}_R=\mathbf{0}$. The distortion term is
\begin{equation}\small
\mathcal{L}_D=\mathbb{E}[(W_d W_e x - x)^2]=C\,(\mathbf{w}_d\!\cdot\!\mathbf{w}_e - 1)^2.
\end{equation}
Because $\mathbf{w}_d\!\cdot\!\mathbf{w}_e=\epsilon^2(\bar{\mathbf{w}}_d\!\cdot\!\bar{\mathbf{w}}_e)=O(\epsilon^2)$, we obtain
\begin{align*}\small
\nabla_{\mathbf{w}_e}\mathcal{L}_D &= -2C\epsilon\,\bar{\mathbf{w}}_d + O(\epsilon^3),\\\small
\nabla_{\mathbf{w}_d}\mathcal{L}_D &= -2C\epsilon\,\bar{\mathbf{w}}_e + O(\epsilon^3).
\end{align*}
Collecting terms for $\theta=(\mathbf{w}_e,\mathbf{w}_d)$,
\begin{equation}\small
\nabla_\theta \mathcal{L}_R \approx (2C\epsilon\,\bar{\mathbf{w}}_e,\,\mathbf{0}), \qquad
\nabla_\theta \mathcal{L}_D \approx (-2C\epsilon\,\bar{\mathbf{w}}_d,\,-2C\epsilon\,\bar{\mathbf{w}}_e).
\end{equation}

\textbf{Adam’s initial updates.}
Early in training, Adam’s elementwise scaling makes the update close to $\operatorname{sign}(g)$~\cite{balles2018dissecting}. Thus,
\begin{equation}\small
p_R \propto (\operatorname{sign}(\bar{\mathbf{w}}_e),\,\mathbf{0}),\quad
p_D \propto (-\operatorname{sign}(\bar{\mathbf{w}}_d),\,-\operatorname{sign}(\bar{\mathbf{w}}_e)).
\end{equation}

\textbf{Alignment.}
Let $u_R=(\operatorname{sign}(\bar{\mathbf{w}}_e),\,\mathbf{0})$ and $u_D=(-\operatorname{sign}(\bar{\mathbf{w}}_d),\,-\operatorname{sign}(\bar{\mathbf{w}}_e))$. Then
\begin{align}\small
\langle u_R,u_D\rangle &= -\sum_{i=1}^M \operatorname{sign}(\bar w_{e,i})\,\operatorname{sign}(\bar w_{d,i}),\\\small
\|u_R\|^2&=M,\quad \small\|u_D\|^2=2M.
\end{align}
Hence
\begin{equation}\small
\mathcal{S}(u_R,u_D) \;=\; -\frac{1}{M\sqrt{2}}\sum_{i=1}^M \operatorname{sign}(\bar w_{e,i}\bar w_{d,i}).
\end{equation}
Under Assumption~\ref{assp_rd}, the signs are i.i.d.\ Rademacher variables with mean zero, so the average converges a.s.\ to $0$ as $M\to\infty$ by the strong law of large numbers, proving the claim.
\end{proof}

\textbf{Takeaway.}
In this R-D setting, Adam’s diagonal preconditioning makes the rate and distortion \emph{updates} nearly orthogonal at initialization, degrading joint progress. Methods that capture within-layer off-diagonal curvature (e.g., block-diagonal preconditioner) can rotate updates to better align competing objectives, yielding more direct descent paths~\cite{balles2018dissecting,wang2025gradient}.

\textbf{Note (positive per-coordinate scalings).}
The orthogonality conclusion in Proposition~\ref{prop:rd_adam_conflict} is unchanged if Adam’s elementwise normalization introduces arbitrary \emph{positive} scalings that are independent of the signs of the initialized weights. Concretely, let
\begin{equation}\label{eq:u_vectors}\small
\begin{aligned}
u_R
&= \big(a_i\,\operatorname{sign}(\bar w_{e,i})\big)_{i=1}^M \oplus \mathbf{0}, \\
u_D
&= \big(-b_i\,\operatorname{sign}(\bar w_{d,i})\big)_{i=1}^M
   \oplus \big(-c_i\,\operatorname{sign}(\bar w_{e,i})\big)_{i=1}^M .
\end{aligned}
\end{equation}
where $a_i,b_i,c_i>0$ are any (possibly random) scalings produced by Adam’s second-moment terms and damping, assumed independent of $\operatorname{sign}(\bar w_{e,i}),\operatorname{sign}(\bar w_{d,i})$. Then
\begin{equation} \label{eq:similarity}\small
\mathcal{S}(u_R,u_D)
\;=\;
-\frac{\frac{1}{M}\sum_{i=1}^M a_i b_i\,\operatorname{sign}(\bar w_{e,i}\bar w_{d,i})}
{\sqrt{\big(\frac{1}{M}\sum_{i=1}^M a_i^2\big)\big(\frac{1}{M}\sum_{i=1}^M (b_i^2+c_i^2)\big)}}.
\end{equation}
If the empirical second moments converge, i.e.,
\begin{equation} \label{eq:convergence_assumptions}\small
\frac{1}{M}\sum_i a_i^2 \to \bar a^2>0
\quad\text{and}\quad
\frac{1}{M}\sum_i (b_i^2+c_i^2)\to \bar b^2+\bar c^2>0,
\end{equation}
and the Rademacher variables $\operatorname{sign}(\bar w_{e,i}\bar w_{d,i})$ are i.i.d.\ with mean $0$
(and independent of $a_i,b_i$), then by the strong law of large numbers
\begin{equation} \label{eq:numerator_convergence}\small
\frac{1}{M}\sum_{i=1}^M a_i b_i\,\operatorname{sign}(\bar w_{e,i}\bar w_{d,i}) \;\xrightarrow{\text{a.s.}}\; 0,
\end{equation}
while the denominator converges almost surely to $\sqrt{\bar a^2(\bar b^2+\bar c^2)}$. Hence
\begin{equation} \label{eq:final_convergence}\small
\mathcal{S}(u_R,u_D)\xrightarrow{\text{a.s.}}0.
\end{equation}
Thus, the asymptotic orthogonality persists under any positive, sign-independent coordinate scalings induced by Adam.

\subsection{Adam’s Behaviour Near a Nondegenerate Optimum}
\label{sec:adam_alignment_local}

We complement Secs.~\ref{proof:adam_limitations} and~\ref{sec:adam_rd_conflict} by \emph{formally} characterizing Adam’s
(i) inter-step alignment $\mathcal{S}(p_t,p_{t+1})$ and
(ii) intra-step alignment $\lim_{\theta\to\theta^*}\mathcal{S}(p_R,p_D)$
in a neighborhood of a nondegenerate optimum.

\textbf{Standing assumptions.}
Throughout we adopt the standard local model used for diagonal adaptive methods:
\begin{enumerate}[label=\textbf{(B\arabic*)}, leftmargin=2.1em, itemsep=1mm]
\item \textbf{Quadratic model near $\theta^*$.} Writing $e_t=\theta_t-\theta^*$, the total loss satisfies
$g_t=\nabla \mathcal{L}(\theta_t)=H e_t$ with $H\succ0$ constant locally~\cite{nocedal1999numerical}.
\item \textbf{Frozen second moments.} Adam’s second-moment accumulator and damping are (locally) stationary, yielding a fixed positive diagonal matrix $D\succ0$ and a scalar $c>0$ (absorbing stepsize, bias correction, $\varepsilon$). Hence the Adam update is the diagonally preconditioned step~\cite{kingma2014adam,reddi2019convergence,zaheer2018adaptive}
\begin{equation}\small
p_t \;=\; -\,c\,D^{-1} g_t \;=\; -\,A\,e_t,
\qquad
A \;\coloneqq\; c\,D^{-1}H .
\label{eq:adam_local_linear}
\end{equation}
\item \textbf{Small step regime.} Parameters evolve by $\theta_{t+1}=\theta_t+\eta p_t$ with $0<\eta<1$ so that the linearization remains valid.
\end{enumerate}

\textbf{A symmetric reparameterization.}
To analyze the dynamics, we introduce a reparameterization using the fixed diagonal preconditioner $D$. Let
\[\small
B \;\coloneqq\; c\,D^{-1/2} H D^{-1/2} \;\succ 0, \ 
q_t \;\coloneqq\; D^{1/2}p_t,\ 
y_t \;\coloneqq\; D^{1/2}e_t .
\]
Here, $B$ represents the Hessian in the $D$-whitened coordinate space. Then the local dynamics (\eqref{eq:adam_local_linear}) implies
\begin{equation}\small
q_t \;=\; -\,B\,y_t, \ 
y_{t+1}\;=\; \big(I-\eta B\big)\,y_t, \ 
q_{t+1} \;=\; \big(I-\eta B\big)\,q_t .
\label{eq:adam_q_dynamics}
\end{equation}
Since $B$ is symmetric positive definite, inter-step cosines in the $q$-space admit closed forms; cosines in the original coordinates are equivalent up to constants depending only on $\kappa(D)$ (the condition number of D)~\cite{petersen2008matrix}.

\subsubsection{Inter-step cosine for Adam}

\begin{lemma}[Local inter-step cosine for diagonal preconditioning]
\label{lem:adam_inter_cos}
Under \textbf{(B1)--(B3)}, with $u_t = q_t/\|q_t\|$ and Rayleigh statistics
\[\small
\mu_1(u_t) \;\coloneqq\; u_t^\top B\,u_t,\qquad
\mu_2(u_t) \;\coloneqq\; u_t^\top B^2 u_t,
\]
the \emph{exact} inter-step cosine in the $q$-space is
\begin{equation}\small
\mathcal{S}_q(q_t,q_{t+1})
\;=\;
\frac{1-\eta\,\mu_1(u_t)}
{\sqrt{\,1-2\eta\,\mu_1(u_t)+\eta^2\,\mu_2(u_t)\,}}.
\label{eq:adam_inter_exact}
\end{equation}
In particular, for small $\eta$,
\begin{equation}\label{eq:adam_inter_taylor}\small
\begin{aligned}
\mathcal{S}_q(q_t,q_{t+1})
&= 1 - \tfrac{1}{2}\,\eta^2\!\left(\mu_2(u_t)-\mu_1(u_t)^2\right) + O(\eta^3) \\
&= 1 - \tfrac{1}{2}\,\eta^2\,\mathrm{Var}_{u_t}(B) + O(\eta^3).
\end{aligned}
\end{equation}
\end{lemma}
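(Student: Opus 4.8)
The plan is to reduce the cosine to Rayleigh quotients of $B$ against the current unit direction, exploiting the linear recursion $q_{t+1}=(I-\eta B)q_t$ established in~\eqref{eq:adam_q_dynamics}. First I would expand the numerator using the symmetry of $B$: $\langle q_t,q_{t+1}\rangle=\langle q_t,(I-\eta B)q_t\rangle=\|q_t\|^2\bigl(1-\eta\,u_t^\top B u_t\bigr)=\|q_t\|^2\bigl(1-\eta\mu_1(u_t)\bigr)$. Then for the squared norm of the next iterate I would compute $\|q_{t+1}\|^2=q_t^\top(I-\eta B)^2 q_t=\|q_t\|^2\bigl(1-2\eta\,u_t^\top B u_t+\eta^2\,u_t^\top B^2 u_t\bigr)=\|q_t\|^2\bigl(1-2\eta\mu_1(u_t)+\eta^2\mu_2(u_t)\bigr)$. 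Dividing, the common factor $\|q_t\|^2$ cancels and the exact identity~\eqref{eq:adam_inter_exact} drops out. The small-step regime \textbf{(B3)}, with $\eta<1/\lambda_{\max}(B)$, guarantees $I-\eta B\succ0$, hence $q_{t+1}\neq 0$ and the cosine is well-defined, and also $1-\eta\mu_1(u_t)>0$, which fixes its sign.

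For the asymptotic~\eqref{eq:adam_inter_taylor} I would abbreviate $a:=\mu_1(u_t)$ and $b:=\mu_2(u_t)$ and Taylor-expand the reciprocal square root of the denominator, $(1-2\eta a+\eta^2 b)^{-1/2}=1+\eta a-\tfrac{1}{2}\eta^2 b+\tfrac{3}{2}\eta^2 a^2+O(\eta^3)$, the $\tfrac{3}{2}\eta^2 a^2$ arising from the quadratic binomial term applied to $-2\eta a$. Multiplying by the numerator $(1-\eta a)$ and collecting by order, the $O(\eta)$ contributions cancel and the $O(\eta^2)$ contributions sum to $-\tfrac{1}{2}b+\tfrac{1}{2}a^2$, yielding $1-\tfrac{1}{2}\eta^2(b-a^2)+O(\eta^3)$. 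To recognize $b-a^2$ as $\mathrm{Var}_{u_t}(B)$ I would diagonalize $B$ in an orthonormal eigenbasis $\{v_i\}$ with eigenvalues $\{\beta_i\}$ and set $\pi_i:=(u_t^\top v_i)^2$, which is a probability vector since $\|u_t\|=1$; then $a=\sum_i\pi_i\beta_i$, $b=\sum_i\pi_i\beta_i^2$, so $b-a^2=\mathrm{Var}_{\pi}(\beta)\ge0$, with nonnegativity also immediate from Cauchy--Schwarz applied to $u_t$ and $Bu_t$.

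I do not expect a genuine obstacle here, as the argument is a direct computation; the two places that require attention are (i) the well-definedness and sign statements, which rely on \textbf{(B3)} so that no denominator vanishes, and (ii) careful bookkeeping of the $O(\eta^2)$ coefficients in the expansion — it is tempting to discard the $\tfrac{3}{2}\eta^2 a^2$ cross term when squaring $(-2\eta a+\eta^2 b)$, but this is exactly the term that promotes the leading correction from $-\tfrac{1}{2}\eta^2\mu_2(u_t)$ to the directional variance $-\tfrac{1}{2}\eta^2\bigl(\mu_2(u_t)-\mu_1(u_t)^2\bigr)$. Finally, I would remark — consistent with the discussion following the lemma — that translating the result back into the original coordinates only rescales these $B$-Rayleigh quotients by factors bounded in terms of $\kappa(D)$, so the qualitative conclusion (Adam's per-step misalignment is controlled by the spread of the $D$-whitened Hessian spectrum seen along the current update direction) does not depend on the choice of coordinates.
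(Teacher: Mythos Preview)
Your proposal is correct and follows essentially the same approach as the paper's proof: substitute the linear recursion $q_{t+1}=(I-\eta B)q_t$ into the cosine definition, normalize by $\|q_t\|$ to obtain the exact identity~\eqref{eq:adam_inter_exact}, and then Taylor-expand the denominator to extract the variance term. Your treatment is in fact more detailed than the paper's (which simply asserts ``a Taylor expansion of the denominator gives~\eqref{eq:adam_inter_taylor}''), and your added remarks on well-definedness under \textbf{(B3)} and the eigenbasis interpretation of $\mathrm{Var}_{u_t}(B)$ are accurate and appropriate.
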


\begin{proof}
From \eqref{eq:adam_q_dynamics}, $q_{t+1}=(I-\eta B)q_t$. Therefore
\[\small
\begin{aligned}
\mathcal{S}_q(q_t,q_{t+1})
&= \frac{\langle q_t,(I-\eta B)q_t\rangle}
        {\|q_t\|\;\|(I-\eta B)q_t\|} \\
&= \frac{1-\eta\,u_t^\top B u_t}
        {\sqrt{1-2\eta\,u_t^\top B u_t+\eta^2\,u_t^\top B^2 u_t}} .
\end{aligned}
\]
yielding \eqref{eq:adam_inter_exact}. A Taylor expansion of the denominator gives \eqref{eq:adam_inter_taylor}. The term $\mathrm{Var}_{u_t}(B)$ represents the variance of the eigenvalues of $B$ (the whitened Hessian) with respect to the direction $u_t$.
\end{proof}

\textbf{Consequences.}
\begin{itemize}[leftmargin=1.6em,itemsep=1mm]
\item \emph{No automatic alignment as $\|p_t\|\!\to\!0$.} Unlike Newton (Lemma~\ref{lemma:inter_step_gradient_align}), the deviation $1-\mathcal{S}_q$ is \emph{second order in $\eta$} and controlled by curvature anisotropy $\mathrm{Var}_{u_t}(B)$, not by $\|p_t\|$. Thus $\mathcal{S}(p_t,p_{t+1})$ need not approach $1$ near the optimum unless $B$ is a scalar multiple of $I$ or $u_t$ is an eigenvector of $B$.
\item \emph{Oscillation threshold.} If $\eta\,\mu_1(u_t)>1$, the numerator in \eqref{eq:adam_inter_exact} is negative and $\mathcal{S}_q<0$. Hence whenever $\eta\,\lambda_{\max}(B)>1$, there exist directions with \emph{negative} inter-step cosine (flip-flop behaviour).
\item \emph{Back to original coordinates.} Since $q_t=D^{1/2}p_t$ and $D\succ0$ is fixed locally, Euclidean cosines of $(p_t,p_{t+1})$ and $(q_t,q_{t+1})$ are equivalent up to constants depending on $\kappa(D)$; all qualitative conclusions transfer to $\mathcal{S}(p_t,p_{t+1})$.
\end{itemize}

\textit{Intuition.}
In the R-D setting, Adam’s diagonal preconditioning cannot remove curvature anisotropy: the inter-step cosine is governed by the variance of eigenvalues rather than by step size alone. As a result, update directions often fail to align even near convergence. In practice, this manifests as oscillatory trajectories—updates pulling in different directions—rather than the smooth progress observed under Newton-like SOAP. In other words, \emph{Adam has no guarantee of alignment, while SOAP actively suppresses this oscillation behaviour.}

\subsubsection{Intra-step cosine for Adam near the optimum}

Near $\theta^*$, the component gradients linearize~\cite{nocedal1999numerical} as
\[\small
g_R \approx H_R(\theta-\theta^*),\qquad
g_D \approx H_D(\theta-\theta^*),
\]
with $H_R,H_D\succeq 0$. Under \textbf{(B2)}, the shared Adam preconditioner $D$ is (locally) fixed, so
\[\small
p_R(\theta) \;\approx\; -\,c\,D^{-1}H_R(\theta-\theta^*),\qquad
p_D(\theta) \;\approx\; -\,c\,D^{-1}H_D(\theta-\theta^*).
\]
Let $e=\theta-\theta^*$ and $u=e/\|e\|$.

\begin{proposition}[Exact intra-step limit for Adam]
\label{prop:adam_intra_limit}
Fix any sequence $\theta_k\to\theta^*$ such that $u_k=(\theta_k-\theta^*)/\|\theta_k-\theta^*\|\to u$ with $\|u\|=1$. Under \textbf{(B2)},
\begin{equation}\small
\lim_{k\to\infty} \mathcal{S}\big(p_R(\theta_k),\,p_D(\theta_k)\big)
\;=\;
\frac{\langle D^{-1}H_R u,\; D^{-1}H_D u\rangle}
{\big\|D^{-1}H_R u\big\|\;\big\|D^{-1}H_D u\big\|}
\;\eqqcolon\; \rho_{\mathrm{Adam}}(u).
\label{eq:adam_intra_limit}
\end{equation}
Moreover:
\begin{enumerate}[label=(\roman*), itemsep=1mm]
\item $\rho_{\mathrm{Adam}}(u)=1$ \emph{iff} $D^{-1}H_R u$ and $D^{-1}H_D u$ are colinear, i.e., $D^{-1}H_R u=\alpha\,D^{-1}H_D u$ for some $\alpha>0$. A sufficient condition is that $H_R$ and $H_D$ are locally proportional on the $D$-whitened direction $D^{1/2}u$.
\item If $H_R,H_D,D$ are jointly diagonalizable, then $\rho_{\mathrm{Adam}}(u)\in[0,1]$ for all $u$, and $\rho_{\mathrm{Adam}}(u)=1$ iff the per-coordinate ratios are constant on the support of $u$ (the same condition that yields SOAP’s alignment in Prop.~\ref{prop:soap_lic}).
\item In general (non-commuting case), $\rho_{\mathrm{Adam}}(u)$ can take any value in $(-1,1)$. In particular, there exist SPD triples $(H_R,H_D,D)$ and $u$ such that $\rho_{\mathrm{Adam}}(u)\le 0$.
\end{enumerate}
\end{proposition}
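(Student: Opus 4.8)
The plan is to reduce the whole statement to two elementary facts: the scale-invariance of the cosine similarity and the equality case of Cauchy--Schwarz. \textbf{Step 1 (the limiting value).} Under (B2) the shared, locally frozen preconditioner gives $p_R(\theta_k)=-c\,D^{-1}g_R(\theta_k)$ and $p_D(\theta_k)=-c\,D^{-1}g_D(\theta_k)$, while the quadratic model gives $g_R(\theta_k)=H_R e_k+o(\|e_k\|)$ and $g_D(\theta_k)=H_D e_k+o(\|e_k\|)$ with $e_k\coloneqq\theta_k-\theta^*$. Dividing by $\|e_k\|>0$ and letting the $o$-term vanish after normalization yields $p_R(\theta_k)/\|e_k\|\to-c\,D^{-1}H_R u$ and $p_D(\theta_k)/\|e_k\|\to-c\,D^{-1}H_D u$. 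Since $\mathcal{S}$ is unchanged when either argument is multiplied by a positive scalar, $\mathcal{S}(p_R(\theta_k),p_D(\theta_k))=\mathcal{S}(p_R(\theta_k)/\|e_k\|,\,p_D(\theta_k)/\|e_k\|)$, and $\mathcal{S}$ is continuous wherever both arguments are nonzero; passing to the limit gives $\mathcal{S}(-cD^{-1}H_R u,\,-cD^{-1}H_D u)=\mathcal{S}(D^{-1}H_R u,D^{-1}H_D u)=\rho_{\mathrm{Adam}}(u)$. The only extra hypothesis this needs is the mild nondegeneracy $H_R u\neq0$, $H_D u\neq0$, automatic when $H_R,H_D\succ0$, which ensures the limiting vectors — and the cosines along the tail of the sequence — are nonzero and hence well defined.

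\textbf{Step 2 (part (i)).} Cosine equality: for nonzero $a,b$, $\mathcal{S}(a,b)=1$ iff $a=\alpha b$ with $\alpha>0$. Applying this to $a=D^{-1}H_R u$ and $b=D^{-1}H_D u$ and then left-multiplying by the invertible $D$ gives $\rho_{\mathrm{Adam}}(u)=1\Leftrightarrow H_R u=\alpha H_D u$ for some $\alpha>0$. For the stated sufficient condition I would pass to the $D$-whitened coordinates $w\coloneqq D^{1/2}u$, $\widetilde H_\bullet\coloneqq D^{-1/2}H_\bullet D^{-1/2}$, so that $D^{-1}H_\bullet u=D^{-1/2}\widetilde H_\bullet w$; ``$H_R$ and $H_D$ proportional along $w$'', i.e.\ $\widetilde H_R w=\beta\,\widetilde H_D w$ with $\beta>0$, is then exactly colinearity of $D^{-1}H_R u$ and $D^{-1}H_D u$, hence $\rho_{\mathrm{Adam}}(u)=1$.

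\textbf{Step 3 (part (ii)).} In the simultaneously-diagonalizable case write $H_R=Q\Lambda_R Q^\top$, $H_D=Q\Lambda_D Q^\top$, $D=Q\Delta Q^\top$ with $\Lambda_R=\operatorname{diag}(\lambda_R^i)\succeq0$, $\Lambda_D=\operatorname{diag}(\lambda_D^i)\succeq0$, $\Delta=\operatorname{diag}(d_i)\succ0$, and set $\tilde u=Q^\top u$. Orthogonality of $Q$ makes $\rho_{\mathrm{Adam}}(u)$ the cosine of the coordinate vectors $x_i=(\lambda_R^i/d_i)\tilde u_i$ and $y_i=(\lambda_D^i/d_i)\tilde u_i$; then $\langle x,y\rangle=\sum_i(\lambda_R^i\lambda_D^i/d_i^2)\tilde u_i^2\ge0$ term-by-term, and Cauchy--Schwarz gives $\rho_{\mathrm{Adam}}(u)\le1$, so $\rho_{\mathrm{Adam}}(u)\in[0,1]$. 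Equality holds iff $x\parallel y$, i.e.\ iff $\lambda_R^i/\lambda_D^i$ is constant over $\{i:\tilde u_i\neq0\}$ (coordinates with $\lambda_R^i=\lambda_D^i=0$ contributing nothing) — precisely the condition appearing in Proposition~\ref{prop:soap_lic}.

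\textbf{Step 4 (part (iii)), the only non-bookkeeping step.} Here I would exhibit an explicit $2\times2$ family: take $D=\operatorname{diag}(1,\epsilon)$, let $H_R$ and $H_D$ be the identity perturbed by off-diagonal entries $a$ and $b$ respectively (in both symmetric positions) with $|a|,|b|<1$ so that both remain SPD, and set $u=e_1$. A direct computation gives $D^{-1}H_R u=(1,a/\epsilon)^\top$, $D^{-1}H_D u=(1,b/\epsilon)^\top$, hence $\rho_{\mathrm{Adam}}(u)=(\epsilon^2+ab)\big/\bigl(\sqrt{\epsilon^2+a^2}\,\sqrt{\epsilon^2+b^2}\bigr)$. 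Choosing $a>0>b$ with $\epsilon^2<|ab|$ makes this strictly negative (e.g.\ $a=\tfrac12$, $b=-\tfrac12$, $\epsilon=\tfrac14$ gives $-\tfrac35$), proving the ``$\le0$'' assertion. For the full range, observe that this expression is jointly continuous in $(a,b,\epsilon)$ on the connected region $\{|a|,|b|<1,\ \epsilon>0\}$, equals $+1$ when $a=b$, and tends to $\operatorname{sign}(ab)$ (which can be driven arbitrarily close to $\pm1$) as $\epsilon\to0$; an intermediate-value argument then realizes every target in $(-1,1)$. The only care required is keeping $|a|,|b|<1$ (rescale the diagonals of $H_R,H_D$ upward if a larger off-diagonal is wanted) so the matrices stay positive definite, and noting that any $D\succ0$ is a legitimate Adam second-moment matrix so no further constraint arises. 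I expect this construction to be the least mechanical part, though it is still elementary; everything else is the scale-invariance argument of Step 1 together with the Cauchy--Schwarz equality case.
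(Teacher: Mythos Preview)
Your proposal is correct and, for the limit formula and parts (i)--(ii), essentially identical to the paper's proof: both cancel the positive scaling $c/\|e_k\|$, invoke continuity of $\mathcal{S}$, and then read off Cauchy--Schwarz equality in the (joint) eigenbasis.

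The only genuine divergence is in part~(iii). The paper fixes $D=I$ and takes $H_R=\operatorname{diag}(10,1)$, $H_D=R(\vartheta)\operatorname{diag}(10,1)R(\vartheta)^\top$ for a specific rotation angle, then checks $\langle H_R u,H_D u\rangle<0$ at $u=2^{-1/2}(1,-1)$. Your construction instead keeps $H_R,H_D$ as identity-plus-symmetric-off-diagonal (so they actually commute with each other, though not with your anisotropic $D=\operatorname{diag}(1,\epsilon)$) and obtains the clean closed form $\rho=(\epsilon^2+ab)/\bigl(\sqrt{\epsilon^2+a^2}\sqrt{\epsilon^2+b^2}\bigr)$. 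Both are valid SPD triples that are not jointly diagonalizable, so both establish the ``$\le 0$'' claim. Your version has the advantage of a one-line formula that, via the IVT argument you sketch, actually \emph{proves} the full-range assertion $\rho_{\mathrm{Adam}}(u)\in(-1,1)$, whereas the paper only states this and exhibits a single negative instance. The paper's version has the mild expository advantage of placing the non-commutativity between $H_R$ and $H_D$ themselves (matching the ``non-commuting case'' phrasing), but this is cosmetic since part~(ii) requires joint diagonalizability of the \emph{triple}.
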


\begin{proof}
Substitute the linearizations
\[\small
p_R(\theta) \approx -c\,D^{-1}H_R e, \qquad p_D(\theta) \approx -c\,D^{-1}H_D e, \quad e=\theta-\theta^*,
\]
and cancel the common positive factor $c/\|e\|$. This yields the expression in \eqref{eq:adam_intra_limit}. Continuity then guarantees the limit along any sequence $\theta_k\to\theta^*$ with normalized directions $u_k\to u$.

\emph{Case (i).} If $D^{-1}H_R u$ and $D^{-1}H_D u$ are colinear with positive scalar $\alpha$, then the cosine is exactly $1$. Conversely, if the cosine is $1$, the two vectors must be positively colinear by definition.

\emph{Case (ii).} If $H_R,H_D,D$ are jointly diagonalizable, choose the common eigenbasis. In this basis, $D^{-1}H_R$ and $D^{-1}H_D$ are diagonal with positive entries. For any $u$, the inner product is nonnegative, so $\rho_{\mathrm{Adam}}(u)\in[0,1]$. Equality $\rho_{\mathrm{Adam}}(u)=1$ requires that the coordinatewise ratios $(H_R)_{jj}/(H_D)_{jj}$ be constant on the support of $u$, ensuring proportionality of the two preconditioned vectors.

\emph{Case (iii).} In the general non-commuting case, $D^{-1}H_R$ and $D^{-1}H_D$ need not share eigenvectors, and their images of $u$ can point in very different directions. To see that negative cosines are possible, set $D=I$ in $\mathbb{R}^2$ and take
\[\small
H_R=\begin{bmatrix}10&0\\0&1\end{bmatrix}, \qquad
H_D=R(\vartheta)\begin{bmatrix}10&0\\0&1\end{bmatrix}R(\vartheta)^\top,
\]
with $R(\vartheta)$ a rotation by $\vartheta\simeq 57^\circ$. For $u=2^{-1/2}(1,-1)$, a direct calculation gives $\langle H_R u,\,H_D u\rangle<0$, so $\rho_{\mathrm{Adam}}(u)<0$ even though $H_R,H_D$ are SPD.
\end{proof}

\textbf{Empirical note.} In practice, we find that Adam’s intra-step cosine rarely falls in $[0,1]$ as in the commutative case, but instead is often \emph{strongly negative}. Around local minima of the ELIC model, the measured $\mathcal{S}_{\mathrm{intra}}^t$ values concentrate near $-1$ (see Fig.~\ref{subfig:elic_intra}), confirming that Adam is unable to align the updates of rate and distortion objectives. This behaviour is consistent with Proposition~\ref{prop:adam_intra_limit} (case (iii)) and explains the inefficient dynamics observed empirically.

\textbf{Remarks}
Nonzero momentum ($\beta_1>0$) produces a linear two-term recurrence in the $q$-space; all qualitative conclusions above persist with $B$ replaced by an $O(1)$ affine function of $B$. If $D$ evolves slowly rather than remaining fixed, \eqref{eq:adam_inter_exact}-\eqref{eq:adam_intra_limit} apply between preconditioner refreshes with the current $D_t$.

\textbf{Takeaways vs.\ SOAP.}
Near $\theta^*$, Adam’s inter-step misalignment is governed by curvature \emph{anisotropy} via $\mathrm{Var}_{u_t}(B)$ and does \emph{not} vanish with $\|p_t\|$ (Lemma~\ref{lem:adam_inter_cos}), whereas the Newton-like SOAP bound (Lemma~\ref{lemma:inter_step_gradient_align}) decays as $O(\eta\|p_t\|)$. For the intra-step metric, SOAP yields $\mathcal{S}(p_R,p_D)\to1$ under mild proportionality/diagonalization conditions (Prop.~\ref{prop:soap_lic}), while Adam’s limit $\rho_{\mathrm{Adam}}(u)$ in \eqref{eq:adam_intra_limit} generally depends on the \emph{approach direction} $u$ and can be $\le 0$ unless the component Hessians align in the $D$-whitened geometry.

\subsection{{Why High Cosine Accelerates Optimization}}
\label{sec:why_cosine_matters}

{A natural question arises regarding the optimization of the rate-distortion objective: given that rate ($\mathcal{L}_R$) and distortion ($\mathcal{L}_D$) are intrinsically conflicting objectives, one might expect the cosine similarity between their update directions to be small or negative~\cite{yu2020gradient}. While raw gradient conflict is indeed a characteristic of the problem, inspired by~\cite{wang2025gradient}, we formally show here that a larger cosine similarity between the \textit{preconditioned} update vectors ($p_R$ and $p_D$) is strictly beneficial for convergence speed.}

{We demonstrate that the lower bound of the loss reduction at each step depends monotonically on the intra-step cosine. Consequently, an optimizer (like SOAP) that induces high cosine effectively resolves the ``destructive interference'' between competing gradients, maximizing the effective step size for a given gradient magnitude.}

\begin{proposition}[{Alignment Maximizes Descent Efficiency}]
\label{prop:cosine_convergence}
{Let the total loss function $\mathcal{L}(\theta) = \mathcal{L}_R(\theta) + \lambda \mathcal{L}_D(\theta)$ be $L$-smooth. Consider the update $\theta_{t+1} = \theta_t + \eta p_t$, where the total update $p_t = p_{R,t} + p_{D,t}$ is composed of preconditioned rate and distortion components ($p_{k,t} = -P_t \nabla \mathcal{L}_k(\theta_t)$). Assume the preconditioner $P_t$ is positive definite with eigenvalues bounded by $0 < \mu \le \lambda_i(P_t) \le M$.}

{For a learning rate $\eta < \frac{2}{LM}$, the reduction in loss $\Delta_t = \mathcal{L}(\theta_t) - \mathcal{L}(\theta_{t+1})$ is lower-bounded by:}
\begin{equation} \small
    \Delta_t \ge C(\eta) \cdot \left( \|p_{R,t}\|^2 + \|p_{D,t}\|^2 + 2 \|p_{R,t}\| \|p_{D,t}\| \cdot \mathcal{S}_{\mathrm{intra}}^t \right),
\end{equation}
{where $C(\eta) = \frac{\eta}{M} - \frac{L \eta^2}{2} > 0$. Thus, strictly increasing $\mathcal{S}_{\mathrm{intra}}^t$ strictly increases the guaranteed loss reduction.}
\end{proposition}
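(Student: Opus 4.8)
The plan is to apply the standard descent lemma for $L$-smooth functions and then convert the resulting preconditioned inner product into the update-norm form stated in the proposition. Let $g_t$ denote $\nabla\mathcal{L}(\theta_t)$; I would first note a bookkeeping point, namely that the weight $\lambda$ must be folded consistently into the distortion component (say $p_{D,t} = -\lambda P_t\nabla\mathcal{L}_D(\theta_t)$) so that $p_t = p_{R,t}+p_{D,t} = -P_t g_t$ holds exactly. Then $L$-smoothness gives
\begin{equation}
\mathcal{L}(\theta_{t+1}) \;\le\; \mathcal{L}(\theta_t) + \eta\langle g_t, p_t\rangle + \tfrac{L\eta^2}{2}\|p_t\|^2 ,
\end{equation}
so that $\Delta_t \ge -\eta\langle g_t, p_t\rangle - \tfrac{L\eta^2}{2}\|p_t\|^2$.

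Next I would rewrite the linear term purely in terms of $\|p_t\|$. Since $p_t = -P_t g_t$ with $P_t = P_t^\top \succ 0$, we have $-\langle g_t, p_t\rangle = g_t^\top P_t g_t$, and substituting $g_t = -P_t^{-1}p_t$ yields $g_t^\top P_t g_t = p_t^\top P_t^{-1} p_t$. The eigenvalues of $P_t$ lie in $[\mu, M]$, hence those of $P_t^{-1}$ lie in $[1/M, 1/\mu]$, so $p_t^\top P_t^{-1} p_t \ge \tfrac{1}{M}\|p_t\|^2$. Combining,
\begin{equation}
\Delta_t \;\ge\; \Big(\tfrac{\eta}{M} - \tfrac{L\eta^2}{2}\Big)\|p_t\|^2 \;=\; C(\eta)\,\|p_t\|^2 ,
\end{equation}
and $C(\eta) = \tfrac{\eta}{M} - \tfrac{L\eta^2}{2} > 0$ precisely when $\eta < 2/(LM)$, which is the hypothesized step-size bound.

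To finish, I would expand the squared norm of the sum, $\|p_t\|^2 = \|p_{R,t}\|^2 + \|p_{D,t}\|^2 + 2\langle p_{R,t}, p_{D,t}\rangle$, and use $\langle p_{R,t}, p_{D,t}\rangle = \|p_{R,t}\|\,\|p_{D,t}\|\,\mathcal{S}_{\mathrm{intra}}^t$ directly from the definition of the cosine score; this produces exactly the claimed lower bound. Monotonicity of the bound in $\mathcal{S}_{\mathrm{intra}}^t$ is then immediate since $C(\eta) > 0$ and $\|p_{R,t}\|\,\|p_{D,t}\| \ge 0$.

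There is no deep obstacle here; the argument is essentially a two-line application of smoothness. The only places that require care are (i) the $\lambda$ bookkeeping noted above, so that $p_t$ is literally the preconditioned total gradient and the cross-term expansion is clean; and (ii) the observation that one must pass through $g_t = -P_t^{-1}p_t$ to express the guaranteed descent in terms of $\|p_t\|$ rather than $\|g_t\|$ — only the upper eigenvalue bound $M$ is actually used in the final inequality, while $\mu>0$ serves merely to ensure $P_t$ is invertible. If a tighter constant were desired one could instead keep $g_t^\top P_t g_t \ge \mu\|g_t\|^2$ and work with $\|g_t\|$, but the $\|p_t\|$-based constant $1/M$ is what makes the cosine-expansion step go through directly.
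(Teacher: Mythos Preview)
Your proposal is correct and follows essentially the same three-step route as the paper: apply the descent lemma from $L$-smoothness, convert the linear term to $p_t^\top P_t^{-1} p_t \ge \tfrac{1}{M}\|p_t\|^2$ via $g_t = -P_t^{-1}p_t$, and then expand $\|p_t\|^2 = \|p_{R,t}+p_{D,t}\|^2$ using the cosine definition. Your remark on the $\lambda$ bookkeeping (folding it into $p_{D,t}$ so that $p_t = -P_t g_t$ holds exactly) and your observation that only the upper spectral bound $M$ is actually used are both accurate refinements that the paper leaves implicit.
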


\begin{proof}
{\textbf{Step 1: Quadratic Upper Bound.}}
{By the $L$-smoothness of $\mathcal{L}$, the Descent Lemma guarantees~\cite{boyd2004convex}:}
\begin{equation} \small
    \mathcal{L}(\theta_{t+1}) - \mathcal{L}(\theta_t) \le \langle \nabla \mathcal{L}(\theta_t), \eta p_t \rangle + \frac{L}{2} \|\eta p_t\|^2.
\end{equation}

{\textbf{Step 2: Linking Gradient to Update Norm.}}
{Let $g_t = \nabla \mathcal{L}(\theta_t)$. Since $p_t = -P_t g_t$, we have $g_t = -P_t^{-1} p_t$. The linear term becomes:}
\begin{equation}\small
    \langle g_t, p_t \rangle = - p_t^\top P_t^{-1} p_t.
\end{equation}
{Using the eigenvalue bounds of $P_t$, the eigenvalues of $P_t^{-1}$ are at least $1/M$. Therefore, $p_t^\top P_t^{-1} p_t \ge \frac{1}{M} \|p_t\|^2$. Substituting this into the inequality:}
\begin{equation}\small
    \mathcal{L}(\theta_{t+1}) - \mathcal{L}(\theta_t) \le -\eta \frac{1}{M} \|p_t\|^2 + \frac{L \eta^2}{2} \|p_t\|^2 = - \left( \frac{\eta}{M} - \frac{L \eta^2}{2} \right) \|p_t\|^2.
\end{equation}

{\textbf{Step 3: Geometric Decomposition.}}
{Let $\Delta_t = \mathcal{L}(\theta_t) - \mathcal{L}(\theta_{t+1})$. Provided $\eta < \frac{2}{LM}$, the coefficient $C(\eta) = \frac{\eta}{M} - \frac{L \eta^2}{2}$ is positive. Thus:}
\begin{equation} \small
    \Delta_t \ge C(\eta) \|p_t\|^2 = C(\eta) \|p_{R,t} + p_{D,t}\|^2.
\end{equation}
{Expanding the squared norm using the cosine definition $\langle a, b \rangle = \|a\|\|b\|\mathcal{S}(a,b)$ yields the final bound:}
\begin{equation} \small
    \Delta_t \ge C(\eta) \left( \|p_{R,t}\|^2 + \|p_{D,t}\|^2 + 2 \|p_{R,t}\| \|p_{D,t}\| \mathcal{S}_{\mathrm{intra}}^t \right).
\end{equation}
\end{proof}

{\textbf{Interpretation.}}
{The proposition highlights that optimization efficiency depends not just on gradient magnitudes, but critically on their vector alignment.}
\begin{itemize}
    \item {\textbf{Destructive Interference ($\mathcal{S}_{\mathrm{intra}}^t < 0$):} When update vectors conflict, the cross-term becomes negative. The optimizer expends the magnitude of the individual updates (``energy'') merely to cancel each other out, resulting in a small effective step $\|p_t\|$ and minimal loss reduction. This corresponds to the ``igzagging'' often seen with Adam.}
    \item {\textbf{Constructive Synergy ($\mathcal{S}_{\mathrm{intra}}^t \to 1$):} When SOAP aligns the updates via curvature correction, the cross-term is maximized. The rate and distortion updates effectively sum up, producing the largest possible descent step for the given gradient magnitudes.}
\end{itemize}
{Thus, while the \textit{objectives} ($\mathcal{L}_R, \mathcal{L}_D$) are conflicting, an optimal preconditioner must rotate the space such that the \textit{updates} are cooperative. The high intra-step cosine observed with SOAP (Fig.~\ref{fig:grad_align_lic}) confirms it successfully achieves this constructive synergy.}

{We now show that alignment between consecutive updates (inter-step) is equally critical for maximizing the effective displacement along the descent path.}

\begin{proposition}[{Trajectory Coherence Maximizes Descent}]
\label{prop:inter_step}
{Consider the cumulative loss reduction over two consecutive steps, $\Delta_{2,t} = \mathcal{L}(\theta_{t-1}) - \mathcal{L}(\theta_{t+1})$. Adopting the same assumptions as Proposition~\ref{prop:cosine_convergence} ($L$-smoothness and spectral bound $M$), further assume the preconditioner $P$ is locally isotropic ($P^{-1} \approx \frac{1}{\sigma}I$) for the cross-term approximation.}
{For a learning rate $\eta < \frac{1}{L \sigma}$, the cumulative reduction is lower-bounded by:}
\begin{equation} \small
    \Delta_{2,t} \ge C_1(\eta) \left(\|p_{t-1}\|^2 + \|p_t\|^2\right) + C_2(\eta) \|p_{t-1}\| \|p_t\| \mathcal{S}_{\mathrm{inter}}^t,
\end{equation}
{where $C_2(\eta) > 0$. Thus, maximizing the inter-step cosine $\mathcal{S}_{\mathrm{inter}}^t$ strictly increases the guaranteed loss reduction by preventing trajectory cancellation.}
\end{proposition}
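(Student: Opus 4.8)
The plan is to combine two complementary applications of the Descent Lemma over the two-step window $\theta_{t-1}\to\theta_{t+1}$. First I would apply the Descent Lemma~\cite{boyd2004convex} to the \emph{composite} displacement $\theta_{t+1}-\theta_{t-1}=\eta(p_{t-1}+p_t)$, anchored at $\theta_{t-1}$, obtaining $\mathcal{L}(\theta_{t+1})-\mathcal{L}(\theta_{t-1})\le \langle g_{t-1},\eta(p_{t-1}+p_t)\rangle+\tfrac{L\eta^2}{2}\|p_{t-1}+p_t\|^2$. Writing $g_{t-1}=-P^{-1}p_{t-1}$ and invoking the local isotropy $P^{-1}\approx\tfrac1\sigma I$ turns the linear term into $-\tfrac{\eta}{\sigma}\|p_{t-1}\|^2-\tfrac{\eta}{\sigma}\langle p_{t-1},p_t\rangle$, while expanding $\|p_{t-1}+p_t\|^2$ produces the cross term $2\langle p_{t-1},p_t\rangle$. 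Collecting everything yields a lower bound on $\Delta_{2,t}$ of the form $(\tfrac{\eta}{\sigma}-\tfrac{L\eta^2}{2})\|p_{t-1}\|^2-\tfrac{L\eta^2}{2}\|p_t\|^2+(\tfrac{\eta}{\sigma}-L\eta^2)\langle p_{t-1},p_t\rangle$.

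The cross-term coefficient $\tfrac{\eta}{\sigma}-L\eta^2=\eta(\tfrac1\sigma-L\eta)$ is strictly positive precisely when $\eta<\tfrac1{L\sigma}$, which is the hypothesis; this is exactly where that condition enters. The remaining difficulty is that this bound, being anchored only at $\theta_{t-1}$, assigns a \emph{negative} coefficient $-\tfrac{L\eta^2}{2}$ to $\|p_t\|^2$ — the curvature term of the composite step is not offset by any descent contribution from $g_t$. To repair this I would separately bound each single step exactly as in the proof of Proposition~\ref{prop:cosine_convergence}: for $k\in\{t-1,t\}$, $\mathcal{L}(\theta_k)-\mathcal{L}(\theta_{k+1})\ge(\tfrac{\eta}{M}-\tfrac{L\eta^2}{2})\|p_k\|^2$ using $p_k^\top P^{-1}p_k\ge\tfrac1M\|p_k\|^2$, and add the two to get $\Delta_{2,t}\ge(\tfrac{\eta}{M}-\tfrac{L\eta^2}{2})(\|p_{t-1}\|^2+\|p_t\|^2)$, a clean bound with positive coefficients but no cross term.

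The final step is to take a convex combination $\alpha\cdot(\text{composite bound})+(1-\alpha)\cdot(\text{per-step bound})$ with $\alpha\in(0,1)$ chosen small enough that the coefficient of $\|p_t\|^2$, namely $(1-\alpha)(\tfrac{\eta}{M}-\tfrac{L\eta^2}{2})-\alpha\tfrac{L\eta^2}{2}$, stays strictly positive; any $\alpha<\frac{a}{a+L\eta^2/2}$ with $a=\tfrac{\eta}{M}-\tfrac{L\eta^2}{2}$ works, and $a>0$ is inherited from the condition $\eta<\tfrac2{LM}$ of Proposition~\ref{prop:cosine_convergence}. Since $\sigma\le M$ forces $\tfrac{\eta}{\sigma}-\tfrac{L\eta^2}{2}\ge a$, the coefficient of $\|p_{t-1}\|^2$ is automatically at least $a>0$ as well; so setting $C_1(\eta)$ to the minimum of the two squared-norm coefficients and $C_2(\eta)=\alpha(\tfrac{\eta}{\sigma}-L\eta^2)>0$, and substituting $\langle p_{t-1},p_t\rangle=\|p_{t-1}\|\,\|p_t\|\,\mathcal{S}_{\mathrm{inter}}^t$, gives the claimed inequality, with $C_2(\eta)>0$ exactly under $\eta<\tfrac1{L\sigma}$.

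I expect the main obstacle to be the bookkeeping in the last step: one must verify that there is a nonempty range of $\alpha$ for which all three coefficients simultaneously carry the desired sign, and that the resulting $C_1,C_2$ depend only on $\eta$ and the fixed constants $L,M,\sigma$, not on the iterates. A secondary point worth flagging is that the isotropy assumption on $P^{-1}$ is precisely what keeps the off-diagonal term proportional to the genuine cosine rather than merely bounded in magnitude; relaxing it would replace $\mathcal{S}_{\mathrm{inter}}^t$ by a $P$-weighted analogue, which I would record as a remark rather than pursue in the proof.
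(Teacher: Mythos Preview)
Your core derivation matches the paper's exactly: apply the Descent Lemma to the composite displacement $\eta(p_{t-1}+p_t)$ anchored at $\theta_{t-1}$, substitute $g_{t-1}=-P^{-1}p_{t-1}$, use the isotropy approximation for the cross term, and expand $\|p_{t-1}+p_t\|^2$ to isolate the coefficient $C_2(\eta)=\tfrac{\eta}{\sigma}-L\eta^2$, which is positive precisely when $\eta<\tfrac{1}{L\sigma}$. The only substantive difference is that the paper uses the spectral bound $p_{t-1}^\top P^{-1}p_{t-1}\ge\tfrac1M\|p_{t-1}\|^2$ for the diagonal term and invokes isotropy only on the cross term, whereas you apply $\tfrac1\sigma$ to both; this is cosmetic.

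Where you depart from the paper is in the convex-combination step. You correctly observe that the composite bound alone leaves the $\|p_t\|^2$ coefficient at $-\tfrac{L\eta^2}{2}<0$, and you blend in the per-step bounds from Proposition~\ref{prop:cosine_convergence} to force $C_1(\eta)>0$. The paper simply \emph{does not do this}: it collects the magnitude terms into a placeholder $C_{\mathrm{mag}}$ and stops, because the proposition as stated only asserts $C_2(\eta)>0$ (read the statement again---no sign is claimed for $C_1$). So your extra step is unnecessary for the claim as written, but it produces a genuinely stronger and cleaner inequality with a common positive $C_1$ on both squared norms, and your bookkeeping for the admissible range of $\alpha$ is correct. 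If you want to match the paper's level of rigor, you can drop the convex combination entirely; if you want a bound that is actually useful as a descent guarantee, keep it.
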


\begin{proof}
{\textbf{Step 1: Two-step Descent Lemma.}}
{By $L$-smoothness, the loss reduction over the total displacement $\Delta \theta = \theta_{t+1} - \theta_{t-1} = \eta(p_{t-1} + p_t)$ is bounded by:}
\begin{equation} \small
    \mathcal{L}(\theta_{t+1}) \le \mathcal{L}(\theta_{t-1}) + \langle \nabla \mathcal{L}(\theta_{t-1}), \Delta \theta \rangle + \frac{L}{2} \|\Delta \theta\|^2.
\end{equation} 
{Let $g_{t-1} = \nabla \mathcal{L}(\theta_{t-1})$ and $\Delta_{2,t} = \mathcal{L}(\theta_{t-1}) - \mathcal{L}(\theta_{t+1})$. Rearranging yields:}
\begin{equation}\small
    \Delta_{2,t} \ge \underbrace{-\eta \langle g_{t-1}, p_{t-1} + p_t \rangle}_{\text{Linear Gain}} - \underbrace{\frac{L\eta^2}{2} \|p_{t-1} + p_t\|^2}_{\text{Quadratic Penalty}}.
\end{equation}

{\textbf{Step 2: Bounding the Linear Term.}}
{Recall $p_{t-1} = -P_{t-1} g_{t-1}$, so $g_{t-1} = -P_{t-1}^{-1} p_{t-1}$. The linear term splits into:}
\begin{equation} \small
    -\eta \langle g_{t-1}, p_{t-1} + p_t \rangle = \eta \left( p_{t-1}^\top P_{t-1}^{-1} p_{t-1} + p_{t-1}^\top P_{t-1}^{-1} p_t \right).
\end{equation}
{Using the spectral lower bound (consistent with Proposition~\ref{prop:cosine_convergence}), $p_{t-1}^\top P_{t-1}^{-1} p_{t-1} \ge \frac{1}{M} \|p_{t-1}\|^2$. For the cross-term, under the local isotropy assumption ($P_{t-1}^{-1} \approx \frac{1}{\sigma} I$), we approximate:}
\begin{equation} \small
    p_{t-1}^\top P_{t-1}^{-1} p_t \approx \frac{1}{\sigma} p_{t-1}^\top p_t = \frac{1}{\sigma} \|p_{t-1}\| \|p_t\| \mathcal{S}_{\mathrm{inter}}^t.
\end{equation}

{\textbf{Step 3: Combining with Quadratic Penalty.}}
{We expand the quadratic penalty norm $\|p_{t-1} + p_t\|^2$ using the cosine law. Substituting back:}
\begin{align} \small
    \Delta_{2,t} &\ge \left[ \frac{\eta}{M} \|p_{t-1}\|^2 + \frac{\eta}{\sigma} \|p_{t-1}\|\|p_t\|\mathcal{S}_{\mathrm{inter}}^t \right] \nonumber \\
    &\quad - \frac{L\eta^2}{2} \left[ \|p_{t-1}\|^2 + \|p_t\|^2 + 2\|p_{t-1}\|\|p_t\|\mathcal{S}_{\mathrm{inter}}^t \right].
\end{align}

{\textbf{Step 4: Grouping by Cosine.}}
{Collecting the terms multiplied by $\mathcal{S}_{\mathrm{inter}}^t$:}
\begin{equation} \small
    \Delta_{2,t} \ge C_{\text{mag}} + \underbrace{\left(\frac{\eta}{\sigma} - L\eta^2\right)}_{C_2(\eta)} \|p_{t-1}\|\|p_t\| \mathcal{S}_{\mathrm{inter}}^t.
\end{equation}
{For the alignment coefficient $C_2(\eta)$ to be positive, we require $\eta < \frac{1}{L \sigma}$. Under this condition, a higher inter-step cosine $\mathcal{S}_{\mathrm{inter}}^t$ strictly increases the lower bound of the cumulative loss reduction.}
\end{proof}

\subsection{Newton Preconditioning and Outlier Suppression}
\label{appendix:soap-outliers}

We detail the derivations supporting Sec.~\ref{sec:soap-outliers} and make explicit the assumptions under which SOAP (quasi-Newton) limits kurtosis growth relative to diagonal methods.

\textbf{Setup and identity.}
Let $\rmX\in\mathbb{R}^{n\times d}$ with $\mTwo{\rmX}=1$, and define $\sigmaf=\rmX^\top\rmX$, $\sigmai=\rmX\rmX^\top$. By trace cyclicity,
\(
\mathrm{Tr}(\sigmaf^2)=\mathrm{Tr}(\sigmai^2).
\)
Writing $\sigmaf$’s diagonal in terms of per-channel RMS $s_j^2 = \frac{1}{n}\sum_\alpha X_{\alpha j}^2$,
\begin{equation} \small
\sum_{j=1}^d (\sigmaf)_{jj}^2
= \sum_{j=1}^d \Big(\sum_{\alpha=1}^n X_{\alpha j}^2\Big)^2
= n^2 \sum_{j=1}^d s_j^4
= n^2 d \cdot \kurt{\rmX},
\end{equation}
since $\frac{1}{d}\sum_j s_j^2=\mTwo{\rmX}=1$. Hence ~\eqref{eq:kurt-decomp-main} follows:
\begin{equation} \small
n^2 d\cdot \kurt{\rmX}\;+\; \sum_{i\neq j} (\sigmaf)_{ij}^2
\;=\; \sum_{\alpha,\beta} (\sigmai)_{\alpha\beta}^2.
\end{equation}
This identity holds \emph{for any} $\rmX$ (not only at initialization), so it ties feature-wise kurtosis to input-wise correlation energy throughout training.

\textbf{Small-step bound.}
Consider a linearized local map $\rmX = \rmH W$ at a given layer (holding the upstream activation $\rmH$ fixed during the step). A SOAP update gives $\Delta W = -\eta\,H_W^{-1}G$; thus $\Delta\rmX = \rmH\,\Delta W$. Expanding $\|\rmX + \Delta\rmX\|_F^4$ to second order in $\eta$ and taking expectations over minibatches yields the $O(\eta^2)$ contribution
\begin{equation} \small
u_{4,2} \;\le\; C\, nd\, \eta^2 \,\|\rmH\|_2^2 \, \|H_W^{-1}\|_2^2 \, \|G\|_F^2,
\end{equation}
for a constant $C$ independent of $\eta$ (the precise value depends only on fourth-moment combinatorics). Replacing $H_W^{-1}$ by a diagonal preconditioner’s effective scaling $D^{-1/2}$ yields the analogous diagonal bound. Therefore, with identical $\eta$ and damping chosen so that $\|H_W^{-1}\|_2 \le \|D^{-1/2}\|_2$ and then,
\begin{equation} \small
\mathbb{E}\!\left[\Delta \kurt{\rmX}\right]_{\mathrm{SOAP}}
\;\le\;
\mathbb{E}\!\left[\Delta \kurt{\rmX}\right]_{\mathrm{Diag}}
\;
\end{equation}
holds up to negligible $O(\eta^3)$ terms. Intuitively, Newton preconditioning narrows the spread of per-direction step sizes by working in (and rotating back from) the curvature eigenbasis~\cite{martens2015optimizing,gupta2018shampoo,anil2020scalable,vyas2024soap}. This curbs single-direction amplification and suppresses outliers, in line with our empirical findings.


\end{document}